\documentclass[letterpaper]{article} 
\usepackage[preprint]{aaai2027}  

\usepackage[hyphens]{url}  
\usepackage{graphicx} 
\usepackage{natbib}  
\usepackage{caption} 
\usepackage{algorithm}
\usepackage{algorithmic}

\usepackage{newfloat}
\usepackage{listings}
\DeclareCaptionStyle{ruled}{labelfont=normalfont,labelsep=colon,strut=off} 
\floatstyle{ruled}
\newfloat{listing}{tb}{lst}{}
\floatname{listing}{Listing}

\usepackage{booktabs}

\usepackage{microtype}
\usepackage{hyperref}
\usepackage{subcaption}
\usepackage{amsmath}
\usepackage{mathtools}
\usepackage{amsthm}
\usepackage[capitalize,noabbrev]{cleveref}
\usepackage{comment}
\usepackage{booktabs}
\usepackage{amsfonts}
\usepackage{nicefrac}
\usepackage{xcolor}
\usepackage{dsfont}
\usepackage{tikz}
\usepackage{makecell}

\DeclareMathOperator*{\R}{\mathbb{R}}

\DeclareMathOperator*{\argmax}{argmax}
\DeclareMathOperator*{\argmin}{argmin}

\newcommand{\sumi}{\sum_{i=1}^{n}}

\newcommand{\sumj}{\sum_{j=1}^{n}}
\newcommand{\K}{$k$}

\newcommand{\PRP}{argmax}
\newcommand{\di}{$p$}
\newtheorem{theorem}{Theorem}
\newtheorem{definition}{Definition}

\newtheorem{observation}{Observation}
\newtheorem{corollary}{Corollary}

\title{Learning Dynamics of Strategic Publishers in Generative AI Ecosystems}
\author{
    Sagie Dekel\textsuperscript{\rm 1}\corresponding,
    Omer Madmon\textsuperscript{\rm 1},
    Moshe Tennenholtz\textsuperscript{\rm 1},
    Oren Kurland\textsuperscript{\rm 1}
}

\affiliations {
    \textsuperscript{\rm 1}Technion - Israel Institute of Technology\\
    \{sagie.dekel, omermadmon\}@campus.technion.ac.il, 
    \{moshet, kurland\}@technion.ac.il
}

\begin{document}

\maketitle

\begin{abstract}
Generative AI (GenAI) search systems are transforming how users access information.
Unlike ranking-based search systems, where users observe a ranked list of documents, GenAI search systems, given a user's question, generate an answer, often accompanied by external sources (e.g., in the form of citations).
Content creators (publishers) seeking to increase exposure might behave strategically and compete with other creators for users' attention.
While publishers in ranking-based systems might strategically modify their content to improve its ranking, the incentives in generative systems take on a new form. Publishers may now gain exposure through generated responses and attributions to those responses. 
We introduce a novel game-theoretic model of the emerging GenAI ecosystem in which publishers compete for attribution-based exposure. 
We study the learning dynamics of strategic content creators under better-response dynamics.
We associate the convergence of learning dynamics to equilibrium with ecosystem stability. Employing the notion of potential games, we study the stability of GenAI ecosystems under several known content selection mechanisms.
We demonstrate the instability of mechanisms representing real-world modern systems and characterize a mechanism that induces a stable ecosystem.
We conduct extensive simulations to analyze the stability and welfare of GenAI ecosystems under various mechanisms.
The simulations support our theoretical findings and reveal an interplay among stability, publisher welfare, and user welfare.
In particular, stable mechanisms do not necessarily maximize welfare, demonstrating an important trade-off for platform designers.
We then introduce a study illustrating that the proper selection of the GenAI mechanism enables the manifestation of desired trade-offs between publisher welfare and the different sources of user welfare.
\end{abstract}


\section{Introduction}

Generative AI (GenAI) is rapidly reshaping how users search for information \cite{NBERw34255}.
Traditional search systems --- e.g., Google search --- present a ranked list of documents (e.g., the standard “10-blue-links”). In contrast, GenAI search systems --- e.g., ChatGPT or Perplexity\footnote{perplexity.com} --- use documents to synthesize a response; often leveraging retrieval-augmented generation (RAG) frameworks \cite{3495724.3496517}, and accompany it with attributed sources (e.g., in the form of citations embedded in the text \cite{DBLP:journals/corr/abs-2510-11560}).
Consequently, in GenAI systems, content creators (publishers) may influence user experience through responses \cite{dekel2026addressingcorpusknowledgepoisoning} or through the attributions associated with those responses. Importantly, there is an interplay between the two.
Regarding the latter influence type, publishers may be incentivized to modify their content so that it is more likely to be cited or referenced in responses.

We propose a novel game-theoretic model for GenAI search --- specifically, for generative question-answering systems --- as an ecosystem in which publishers create documents and users submit questions. GenAI platforms generate responses to those questions using selected documents and generative models (e.g., LLMs), while assigning attribution to publishers.
Following the literature on traditional search and recommendation systems, we model documents and questions as vectors in an embedding space; e.g., \citet{pmlr-v202-yao23b, 10.5555/3666122.3669383}.
Since publishers are interested in maximizing exposure, they may be incentivized to behave strategically.
We adopt \emph{better-response dynamics} \cite{repec:eee:gamebe:v:13:y:1996:i:1:p:111-124} to describe their learning dynamics. That is, we only assume that if there is a situation where a publisher may gain by deviating to any utility-improving action, then the publisher will indeed deviate. This is a minimal assumption, as it imposes neither optimality of deviations nor any order on deviators.

Although content modification through better-response learning dynamics was advocated in prior work \cite{10.1609/aaai.v39i13.33534, Omer_Madmon_2025}, we are the first to model, using game theory, the GenAI search ecosystem and its effect on the learning dynamics of publishers.
Specifically, publisher utility depends on both being attributed and conforming with her original content, and user utility is determined by both the relevance of generated responses to the question and the relevance of attributions to the response. 

To establish tight theoretical results, we appeal to the \emph{Nash equilibrium}\footnote{We restrict to pure Nash equilibrium (PNE), as it corresponds to stationary content and therefore provides a natural notion of stability in a search ecosystem.} (NE, \citealp{nash1950equilibrium}) solution concept, and convergence to equilibrium; a state which implies stability of the ecosystem. 
As in prior studies of learning dynamics --- e.g., \citet{NEURIPS2018_a9a1d531} --- a key question arises: "Can convergence to equilibrium be guaranteed in GenAI ecosystems with strategic publishers?"
Using the concept of \emph{potential games} \cite{monderer1996potential}, we revisit content selection mechanisms offered in the literature; interestingly, we show that in a relatively narrow set of scenarios, such convergence is guaranteed, although for some other mechanisms, convergence is empirically obtained.
Importantly, we show that better-response dynamics may diverge in retrieve-then-generate systems (e.g., RAG \cite{3495724.3496517}) and systems that are based on the winner-takes-all concept \cite{https://doi.org/10.1111/j.1468-0343.1989.tb00003.x}.

While stability is interesting in its own right, we also empirically study the welfare properties of overall dynamics regardless of whether convergence occurs.
Through experimental analysis, we show several intriguing results. Specifically, we introduce a novel mechanism design setting in which we show different mechanisms to trade off user welfare and publisher welfare.
In addition, we reveal the unexpected merits of unstable mechanisms in terms of social welfare. Specifically, we show that stability does not entail higher social welfare and, interestingly, in some cases, significantly lower. 
Finally, we empirically identify and analyze the mechanisms that maximize social welfare for various objectives. Specifically, we show that under different objectives, these mechanisms differ both in their family and in the choice of hyperparameters.

Our contributions can be summarized as follows:

\begin{itemize}
\item We introduce a novel game-theoretic model of the emerging GenAI search ecosystem with attribution-incentivized content creators. 
Our model generalizes prior models of search and recommendation ecosystems with ranking-incentivized content creators.

\item We provide a theoretical analysis of learning dynamics in terms of stability under mechanisms incorporating both attribution allocation functions and novel factors of generative platforms.
We prove that, among these mechanisms, only a relatively narrow family of mechanisms employing linear functions guarantees convergence.

\item We present an empirical analysis of better-response dynamics in GenAI ecosystems under the mechanisms stated above.
We consider both stability and welfare in the ecosystem, identifying novel phenomena and trade-offs. Additionally, we define social welfare measures and demonstrate an optimal mechanism design for generative platforms.


\end{itemize}

\section{Related Work}

\paragraph{Game Theory in Search and Recommendation Ecosystems.}
Game theory has been extensively employed to model strategic content creation in 
search \cite{10.1145/3477495.3532771} and recommendation \cite{NEURIPS2018_a9a1d531, Boutilier_Mladenov_Tennenholtz_2024} ecosystems.
With the growing adoption of dense retrieval methods \cite{xu2026a}, recent work has increasingly focused on models with continuous strategy spaces \cite{pmlr-v202-yao23b, 10.5555/3666122.3669383, madmon2025on, Omer_Madmon_2025, yu2025beyond}.
These game-theoretic formulations are commonly used to analyze stability \cite{reinman2026stabilitycompetitivesearchresults} and welfare \cite{10.1145/3637528.3672021} properties.
At the same time, the increasing use of GenAI for information-seeking \cite{NBERw34255} gives rise to models that capture strategic behavior in this setting \cite{taitler2025datasharinggenerativeai, 10.1145/3774904.3792475, ohayon2026contestsspilloversincentivizingcontent}.
For instance, \citet{wu2026aioverviewsbenefitsearch} introduce a model for AI-overview systems, and \citet{10.1145/3774904.3792308} propose a model in which creators strategically decide both their content and whether to share it with GenAI platforms.
To the best of our knowledge, we are the first to formally model the learning dynamics of strategic publishers in GenAI systems, specifically question-answering systems.

\paragraph{Strategic Content Modification in Ad Hoc Search and Recommendation.}
Strategic behavior of content creators has been widely addressed in prior work \cite{10.5555/3207692.3207717, hron2023modeling, prasad2023contentpromptingmodelingcontent, Boutilier_Mladenov_Tennenholtz_2024, 10.1609/aaai.v39i13.33534}.
When users observe a ranked list of content, publishers are ranking-incentivized; that is, they might modify their content to increase their rankings \cite{10.1145/3477495.3532771}. A substantial body of work has studied content modification of such publishers in search and recommendation systems \cite{10.1145/3664190.3672516, 10.1145/3726302.3730168,mordo2025rlrfcompetitivesearchagent,10.1145/3726302.3730026}.

The emergence of GenAI systems --- which often accompany their responses with citations \cite{DBLP:journals/corr/abs-2510-11560} --- has drastically changed the way users search for information \cite{NBERw34255}. Publishers may be attribution-incentivized; they might modify their content to increase attribution-based exposure in GenAI responses \cite{unknown}.
For instance, \citet{10.1145/3637528.3671900} propose a metric that jointly considers citations' position and the cited text to measure content exposure.
It has been broadly adopted and has become a common practice for estimating exposure of publishers in GenAI systems \cite{chen2025generativeengineoptimizationdominate, chen2026roleaugmentedintentdrivengenerativesearch, wu2026what, zhou2026ifgeoconflictawareinstructionfusion}.
Our work introduces, to the best of our knowledge, the first analysis of ecosystems with attribution-incentivized content creators.

\section{Preliminaries}\label{sec: preliminaries}

We begin by introducing the game-theoretic definitions and solution concepts used throughout the paper.

\paragraph{Game Theory Notation.}
An $n$-player \textbf{game} is a tuple
$G = (N, (X_i)_{i \in N}, (u_i)_{i \in N})$, where
$N := \{1,\ldots,n\}$ is the set of players, $X_i$ is the action set of
player $i$, and $u_i : X \to \mathbb{R}$ is player $i$'s utility function.
Here, $X := \times_{j \in N} X_j$ denotes the set of pure strategy profiles. For a profile $x \in X$, we write $x_i$ for the action of player $i$ and $x_{-i} \in X_{-i} := \times_{j \in N \backslash \{i\}}X_j$ for the actions of all players except $i$. 

\paragraph{Better-response, Equilibria, and Dynamics.}
Given a profile of opponents' actions $x_{-i} \in X_{-i}$ and two actions
$x_i,x'_i \in X_i$ of player $i$, we say that $x'_i$ is an
\textbf{$\epsilon$-better response} than $x_i$ with respect to $x_{-i}$ if
  $  u_i(x'_i,x_{-i}) > u_i(x_i,x_{-i}) + \epsilon .$
This notion naturally induces the equilibrium concept used in our analysis.
A pure strategy profile $x \in X$ is an \textbf{$\epsilon$-pure Nash equilibrium}
($\epsilon$-PNE) if, for every player $i \in N$, there is no action $x'_i$ that is an $\varepsilon$-better response than $x_i$, with respect to $x_{-i}$. 
When $\epsilon = 0$, we refer to a $0$-better response simply as a better
response, and to a $0$-PNE as a Pure Nash Equilibrium (PNE). A PNE represents
a stable outcome of a game \cite{NEURIPS2018_a9a1d531}, in the sense that no publisher has an incentive
to unilaterally change her strategy.

The notion of $\epsilon$-better response induces
\textbf{$\epsilon$-better-response dynamics}, defined as a sequence in which, at every timestep $t \geq 1$, either $x^{(t-1)}$ is an $\epsilon$-PNE and $x^{(t)} = x^{(t-1)}$, or there exists a player $i \in N$ such that $x_i^{(t)}$ is an $\epsilon$-better response than $x_i^{(t-1)}$ with respect to $x_{-i}^{(t-1)}$, and $x_j^{(t)} = x_j^{(t-1)}$ for every $j \neq i$.
We say that the dynamics \textit{converges} if there exists a timestep $T$ such that
$x^{(T)}$ is an $\epsilon$-PNE.

\paragraph{Potential games.}
A central class of games used in this paper is the class of potential
games. A game $G$ is a \textbf{potential game} if there exists a potential function
$\phi : X \to \mathbb{R}$ such that
for every player $i \in N$, every profile of opponents’ actions $x_{-i} \in X_{-i}$,
and every two actions $x'_i,x''_i \in X_i$, it holds that:
\[
    \phi(x'_i,x_{-i}) - \phi(x''_i,x_{-i})
    =
    u_i(x'_i,x_{-i}) - u_i(x''_i,x_{-i}) .
\]
Thus, in a potential game, the incentives of all players are represented by a single global function. We note a celebrated result for potential games with better-response dynamics:

\begin{theorem}[\citet{monderer1996potential}]
Let $G$ be a potential game and let $\epsilon > 0$. Then $G$ admits at least one $\epsilon$-PNE, and any $\epsilon$-better-response dynamics in $G$ converges.
\end{theorem}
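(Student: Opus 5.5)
The plan is to use the potential $\phi$ as a strictly increasing progress measure along the dynamics, and to combine this with an upper bound on $\phi$. Before starting, we make explicit a hypothesis that is implicit in the statement: the utilities are bounded, i.e.\ $\sup_{i,x}|u_i(x)| =: B < \infty$. This holds in our setting, where utilities are built from similarities of bounded embeddings and attribution shares in $[0,1]$. Without some such bound the claim fails; for example, a single player with $X_1=\mathbb{R}$ and $u_1(x)=x$ has no $\epsilon$-PNE.

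\emph{Step 1: $\phi$ is bounded.} Fix a reference profile $x^0\in X$. Any profile $x$ is reached from $x^0$ by at most $n$ unilateral changes, replacing $x^0_1$ by $x_1$, then $x^0_2$ by $x_2$, and so on. For each such change, the potential property gives
\[
\phi(\text{after})-\phi(\text{before}) = u_i(\text{after})-u_i(\text{before}),
\]
whose absolute value is at most $2B$. Telescoping over the $n$ changes yields $|\phi(x)-\phi(x^0)|\le 2nB$, so $\sup_X\phi\le M:=\phi(x^0)+2nB<\infty$.

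\emph{Step 2: every non-stationary step raises $\phi$ by more than $\epsilon$.} Suppose at timestep $t$ player $i$ moves from $x_i^{(t-1)}$ to an $\epsilon$-better response $x_i^{(t)}$, while all other players keep their actions. The potential property then gives
\[
\phi(x^{(t)})-\phi(x^{(t-1)}) = u_i(x^{(t)}) - u_i(x^{(t-1)}) > \epsilon .
\]
By the definition of the dynamics, every other step is stationary and occurs only at an $\epsilon$-PNE. Consequently, if $k$ non-stationary steps occur, then $\phi(x^{(t)})>\phi(x^{(0)})+k\epsilon$ at the time $t$ of the $k$-th such step. Since $\phi\le M$, we get $k< (M-\phi(x^{(0)}))/\epsilon$, so only finitely many non-stationary steps can occur.

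\emph{Step 3: convergence.} Let $T$ be the timestep of the last non-stationary step, or $T=0$ if there is none. The step from $T$ to $T+1$ is then stationary. By the definition of the dynamics, a stationary step can only happen at an $\epsilon$-PNE, so $x^{(T)}$ is an $\epsilon$-PNE and the dynamics converges.

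\emph{Step 4: existence.} Start from an arbitrary profile and build a sequence as follows. Whenever the current profile is not an $\epsilon$-PNE, some player has an $\epsilon$-better response; pick one such player and response and move there. Once an $\epsilon$-PNE is reached, stay there. This sequence is a valid $\epsilon$-better-response dynamics, so by Step 3 it reaches an $\epsilon$-PNE, which proves existence.

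The only delicate point is Step 1: the theorem as stated needs boundedness of $\phi$. The telescoping argument shows that this follows from bounded utilities, which is the hypothesis we verify for each concrete mechanism. Strictness of $\epsilon>0$ is essential in Step 2: with $\epsilon=0$ the increments could shrink, and $\phi$ could increase forever while staying bounded.
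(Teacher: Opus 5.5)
Your proof is correct and follows the standard Monderer--Shapley argument, which the paper cites without reproducing: the $n$-step telescoping turns bounded payoffs into a bounded potential, and each $\epsilon$-improvement raises the potential by more than $\epsilon$, so only finitely many improvement steps can occur. You are also right that boundedness is a hypothesis missing from the statement as written, since the cited result is for bounded games, and it holds in the paper's setting because $u_i \in [-\lambda_i, 1]$ there.
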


\section{The Model}\label{sec: the model}
Our model is a formalism of a setting in which publishers compete for exposure by selecting documents to publish. In response to a user's question, a mediator chooses which content to present. We focus on a novel and relatively understudied type of mediator: GenAI platforms.

A \textbf{generative publishers' game} consists of a set of publishers $N := \{1,2,\ldots,n\}$, with $n \geq 2$, that strategically choose documents represented as \di-dimensional vectors in the bounded domain $\mathcal{V} := {[0,1]}^{\text{\di}}$.
Specifically, each publisher $i \in N$ selects an action $x_i \in X_i := \mathcal{V}$, interpreted as the embedding of the content provided by that publisher. Each publisher has an initial document $x^i_0 \in X_i$ and a cost factor $\lambda_i > 0$, which captures publisher $i$'s cost of providing content that differs from her initial document, i.e., content drift entails a cost in terms of conveying her original intent. We use $\lambda > 0$ to denote a uniform cost factor: $\lambda_i = \lambda$ for all $i \in N$.
Distances are measured by a continuous semi-metric $d : \mathcal{V} \times \mathcal{V} \to \mathbb{R}_+$\footnote{We use $\R_{+}$ to denote the set of non-negative real numbers and $\R_{++}$ to denote the set of strictly positive real numbers.}, such that $d(a,b) \leq 1$ for all $a,b \in \mathcal{V}$.
Given a strategy profile $x \in X$ and a user's question\footnote{The question serves as a proxy for the user’s information need.} (representation) $x^* \in \mathcal{V}$, the platform generates a response according to a generation function $y: X \times \mathcal{V} \to \mathcal{V}$.

In our model, aligned with modern real-world systems \cite{DBLP:journals/corr/abs-2510-11560}, users observe a generated response and attributed documents (e.g., in the form of citations). 
The platform employs an attribution mechanism that maps content and a response to (cited) content. Formally, an attribution allocation function maps a strategy profile and a response to a distribution over publishers $r: X \times \mathcal{V} \to \Delta_n$.
Let $r_i$, the i-th component of the distribution, be the attribution score of publisher $i$, which can be interpreted as her probability of being ranked highest in a citation list.

We abuse the notation and denote $y(x) \coloneqq y(x, x^*) $, $ r(x) \coloneqq r(x, y(x))$, $d^*(x_i) \coloneqq d(x_i, y(x))$ and $d^0_i(x_i) \coloneqq d(x_i, x_0^i)$.
The \textbf{utility }of publisher $i$ under the profile $x$ is:

\begin{equation}
    u_i(x) \coloneqq r_i(x) - \lambda_i \cdot d^0_i(x_i).
\end{equation}
That is, the utility of publisher $i$ is her attribution score minus the cost of providing content that differs from her initial document. 
We define the \textbf{publishers' welfare} $\mathcal{H}(x)$ and the \textbf{users' welfare} $\mathcal{U}(x)$.
The publishers' welfare is the sum of their utilities. The users' welfare is a linear interpolation of the relevance of the response to the question and the expected relevance of attributions to the response, where we estimate the relevance of $a \in \mathcal{V}$ to $b \in \mathcal{V}$ by $1-d(a, b)$\footnote{Below, we model the response as a centroid of estimated-relevant content. Prior literature on search has demonstrated that such centroids can serve as a proxy for relevance \cite{rocchio71relevance, lavrenko-2001-relevancebased, 10.1145/502585.502654}.}.  
\begin{equation}\label{eq:publishers' welfare}
    \mathcal{H}(x) \coloneqq \sumi u_i(x) = 1 -  \sumi \lambda_i  \cdot  d^0_i (x_i). 
    \end{equation}
    
    \begin{equation}\label{eq:users' welfare}
    \begin{aligned}
    \mathcal{U} (x) &\coloneqq \gamma \cdot (1 - d(x^*, y(x))) \;  \\ & + (1-\gamma) \cdot \sumi \bigl (1 - d^*(x_i) \bigr ) \cdot r_i(x) ,
    \end{aligned}
    \end{equation}
where $\gamma \in [0,1]$ is the interpolation parameter.
The users' welfare is a distinctive aspect of GenAI ecosystems, incorporating both \emph{responses' relevance} to the question and \emph{attributions' relevance}  to the response, which are the first and second terms in Equation~\ref{eq:users' welfare}, respectively.
Their weighting may differ between settings, domains, and users' underlying objectives. For example, in the medical domain, platforms may assign greater weight to attributions' relevance to ensure that responses are grounded with presumed-relevant external information.

We note that modern generation functions (e.g., LLMs) are complex, multi-layered, and generally lack a tractable analytical structure. To enable theoretical analysis, we therefore consider a family of centroid-based generation functions. Our approach is motivated by Rocchio's model \shortcite{rocchio71relevance}, a classical centroid-based method in information retrieval for relevance feedback in vector spaces\footnote{Centroids of document sets have also been shown to be effective for approximating desired text embeddings \cite{puduppully-etal-2023-multi}.}. Formally: 
\begin{equation}
y_{\alpha,z}(x)
:=
\alpha \cdot \frac{1}{\text{\K}}\sum_{j\in S_k} x_j
+
(1-\alpha)\cdot z ,
\label{eq:generation-function}
\end{equation}
where \(S_k\subseteq N\) denotes the set of \K{} publishers
whose documents are closest to \(x^*\), \(\alpha\in[0,1]\) is an interpolation parameter that controls the relative impact of documents on generation (henceforth \emph{retrieval weight}), \K{} ($\le n$) is a constant that accounts for the number of publishers' documents used to compute the centroid (henceforth \emph{context size}), and \(z\in\mathcal{V}\) represents the platform’s posterior belief, which, given a question, could be an LLM's parametric knowledge (representation) or a user's information need approximation; e.g., \cite{https://doi.org/10.1002/pra2.1245}. 
In addition, we use the normalized squared Euclidean distance
$
d(a,b) = \frac{1}{\text{\di}}\|a-b\|_2^2
$ as our distance semi-metric.

We discuss two fundamentally different settings: \K{} $< n$ and \K{} $= n$.
Although both settings can model GenAI systems, they represent different scenarios.
The former setting represents ecosystems with a large or diverse set of publishers (e.g., the Web), in which systems mainly employ RAG frameworks to retrieve top-\K{} documents presumed to be relevant. The latter setting represents niche domains in which a small number of publishers compete for user engagement, or systems that rely on a fixed set of relevant content creators. Each, as we demonstrate later, leads to substantially different outcomes.

Our model generalizes prior models of traditional systems \cite{10.5555/3666122.3669383, madmon2025on, Omer_Madmon_2025}, where publishers compete for ranking-based exposure.  
For instance, if we set $\alpha=0$ and $z=x^*$ in Equation~\ref{eq:generation-function}, then we arrive to Madmon et al.'s \shortcite{Omer_Madmon_2025} model.
We emphasize, however, a key distinction from these prior models. In ranking-based systems, publishers compete relative to an exogenous reference point (e.g., a query) and a ranking. In our model, publishers compete relative to a dynamic response (and a ranking), making the reference point endogenous (see Equation~\ref{eq:generation-function}). 

\section{Learning Dynamics of Strategic Publishers}
\label{sec: ranking functions}

In this section, we present a theoretical analysis of better-response dynamics under various content attribution functions using the generation function from Equation~\ref{eq:generation-function}.

\subsection{The Winner Takes All}\label{sec:PRP}

The first attribution function we consider is based on the widely used winner-takes-all concept\footnote{See, for example, the probability ranking principle (PRP) \cite{robertson1977probability}, all-pay auctions \cite{af85224f-dd40-3eea-93c3-38192284c231}, and TOP mechanisms \cite{NEURIPS2018_a9a1d531}.}.
Naturally, the \PRP{} function induces a deterministic attribution distribution that assigns all the mass to the presumably most relevant publisher. Recall, we estimate content's relevance using its distance from the response. Formally:

\begin{definition}
The \textbf{\PRP{} }attribution function $r^*$ of publisher $i$ is:
    \begin{equation}
        r_i^*(x) \coloneqq
        \frac{\mathds{I} \{ i\in\mu(x) \}}{|\mu(x)|} 
    \end{equation}
    where $\mu(x) \coloneqq \argmin_{j \in N} d^*(x_j)$ is the set of publishers providing the closest content (most relevant) to the response $y(x)$ among all publishers and $\mathds{I}$ denotes the indicator function.

\end{definition}

Informally, the \PRP{} attribution function assigns an attribution score of $1$ to the most relevant publisher (in the case of ties, the distribution is uniform over the relevant publishers), and a score of $0$ to others.
Note that the \PRP{} function yields optimal short-term attributions' relevance in the sense that it minimizes the attribution distance from the response for a fixed strategy profile $x$. However, it induces unstable ecosystems, in which dynamics may not converge, as we show in Appendix~\ref{apn:Additional_Learning_Dynamics_Results} (Observation~\ref{obs:2}).
In fact, as we show in the following observation (whose proof is given in Appendix~\ref{apn:omitted_proofs}), games may not even admit a PNE.

\begin{observation}\label{obs:1}
Consider generative publishers' games induced by the \PRP{} attribution function $r^*$. For both \K{} $< n$ and \K{} $=n$ cases, there exists a set of games in which any instance possesses no PNE.
\end{observation}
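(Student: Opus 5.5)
The plan is to give, for each of the two regimes, an explicit parametric family of instances. Take $n=2$, $p=1$ (so $d(a,b)=(a-b)^2$), $\alpha\in(0,1)$, and a question $x^*$ and belief $z$ placed so that the initial documents are not both near $z$. Then I rule out every candidate profile $x=(x_1,x_2)$ by exhibiting a strictly better response. The open conditions that make each case work ($\lambda$ in a suitable range, positions of $x_0^1,x_0^2$ relative to $z$) define the required set of games. For $p>1$ or $n>2$ one can embed this construction, e.g.\ by adding far-away publishers with huge $\lambda_i$ that never matter.

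\textbf{Case analysis under $r^*$ (illustrated for $k=n=2$, where $y=\alpha\frac{x_1+x_2}{2}+(1-\alpha)z$).} There are three kinds of profiles.

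\emph{(i) Unique winner $i$.} The loser $j$ has utility $-\lambda d^0_j(x_j)\le 0$. If $x_j\neq x_0^j$, moving to $x_0^j$ is already profitable, since it keeps attribution $0$ (or improves it) and strictly lowers the cost. Otherwise I show that $j$ can overtake by moving to a point that wins against the endogenous response, for instance next to $x_i$ on the side of $z$, at cost below $1$. This uses $\lambda$ small enough relative to the distances involved.

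\emph{(ii) Tie with $x_1=x_2=w\neq z$.} A computation shows that deviating to $w-\varepsilon\,\mathrm{sgn}(w-z)$ makes the deviator strictly closer to the new response. Indeed the deviator's distance changes by $-(1-\tfrac{\alpha}{2})\varepsilon$, while the other's changes by $+\tfrac{\alpha}{2}\varepsilon$, and $1-\tfrac{\alpha}{2}>\tfrac{\alpha}{2}$ since $\alpha<1$. So the deviator gains nearly $1/2$ at an arbitrarily small extra cost.

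\emph{(ii$'$) Tie with $x_1\neq x_2$.} Equidistance forces $y$ to be the midpoint, hence $\frac{x_1+x_2}{2}=z$. In this case a small move toward $y$ by either player gives a unique win, again gaining about $1/2$.

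\emph{(iii) The remaining profile $x_1=x_2=z$.} This is genuinely stable against local moves, because moving away from $z$ drags $y$ only by the factor $\alpha/2$. I kill it with the cost: choose $x_0^1$ so that $\lambda\, d(z,x_0^1)>1/2$. Then player $1$ prefers to retreat to $x_0^1$, obtaining utility $0$, which is larger than $1/2-\lambda d(z,x_0^1)<0$.

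\textbf{Main obstacle.} The main difficulty is making the two requirements on $\lambda$ compatible. Case (iii) needs $\lambda$ large relative to $d(z,x_0^1)$, whereas case (i) needs the loser to be able to win cheaply. I would first try an asymmetric configuration: $x_0^2=z$ (or very near $z$), $x_0^1$ far from $z$, and $\lambda$ just above $1/(2d(z,x_0^1))$. I then check the unique-winner profiles one by one. If player $1$ loses, he sits at $x_0^1$, and player $2$ wins by sitting at a cost-minimizing point near $z$; in this configuration I must verify that player $1$ can win profitably, or else find that this profile is a PNE and adjust. If instead player $2$ loses, she sits at $z$ and cheaply undercuts. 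If the first sub-case fails, I would swap roles, choosing $x_0^2$ slightly off $z$ so that player $2$'s cheapest winning point forces a response that player $1$ can beat from near $x_0^1$.

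\textbf{The case $k<n$.} I take $n=2$, $k=1$, so that $y=\alpha x_{s}+(1-\alpha)z$ with $s$ the document closer to $x^*$. The same three-case template applies. The only change is that the endogenous shift of $y$ now depends on which player is retrieved, which makes the undercutting computations piecewise but not harder.
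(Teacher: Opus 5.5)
Your $n=2$ route is different from the paper's. The paper keeps $\alpha=1$, so $y$ is a genuine centroid, and this forces $n\ge 5$ for $k=n$ and $n=3$, $k=2$ for $k<n$. Your route is more elementary, but it relies on $\alpha<1$: for $\alpha=1$ and $n=k=2$ every profile is a tie, and $(x_0^1,x_0^2)$ is a PNE. The route can be made to work, but as written it has genuine gaps.

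\textbf{The missing reduction and case (i).} The fact you never state is that for $k=n=2$
\[
d^*(x_1)-d^*(x_2)=(1-\alpha)\bigl[d(x_1,z)-d(x_2,z)\bigr].
\]
So the argmax winner is simply the publisher closer to $z$, and the game is a complete-information all-pay contest for proximity to $z$. Write $D_i:=|x_0^i-z|$. Your case (i) is incomplete because it uses only the loser's deviation. In the profile $(x_0^1,z)$, publisher $1$ cannot get strictly closer to $z$ than publisher $2$. Tying costs her $\lambda D_1^2>\tfrac12$, which is exactly the regime you impose for case (iii), so she has no profitable deviation.

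That profile is excluded only by the winner's own deviation: a unique winner away from $x_0^i$ can move slightly toward $x_0^i$, cutting her cost while staying strictly closer to $z$. Hence any unique-winner candidate has both publishers at their initial documents. The loser then overtakes profitably whenever $D_i>0$ and $\lambda(D_j-D_i)^2<1$.

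This also dissolves your ``main obstacle''. Every $p=1$ instance with $0<D_2\le D_1$, $\lambda D_1^2>\tfrac12$ and $\lambda(D_1-D_2)^2<1$ has no PNE, e.g.\ $z=0$, $x_0^1=1$, $x_0^2=0.3$, $\lambda=1$. By contrast, your first attempt $x_0^2=z$ really does have the PNE $(x_0^1,z)$, because $D_2=0$ removes the winner's retreat.

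\textbf{The $k<n$ case.} This case is only asserted, and ``the same template applies'' is false for a general $x^*$. Take $n=2$, $k=1$, $z=0$, $x^*=1$, $\alpha=\tfrac12$. Then $y=x_s/2$, and whenever $0<x_o<x_s$ the non-retrieved publisher is strictly closer to $y$. So being retrieved makes you lose, which is a different contest altogether.

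You must pin the instance down. One option is $x^*=z$ with $\alpha\in(0,1)$. With $a=x_s-z$ and $b=x_o-z$, where $|a|\le|b|$ and $b\ne a$, one has $|b-\alpha a|>(1-\alpha)|a|$. So the retrieved publisher, who is the one closer to $z$, wins strictly unless the two documents coincide. The other option is $\alpha=1$, so that $y=x_s$. In either case the same contest argument with the same numbers goes through, and retrieval ties are harmless because the loser can step strictly closer.

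Finally, drop the embedding remark for larger $n$ or $p$. It is unnecessary, since the model allows $n=2$. For $n>2$ with $k=n$ it is also not justified, because the added publishers enter the centroid and destroy the reduction to proximity to $z$.
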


The proof is essentially based on the discontinuity of the \PRP{} function. 
This observation serves as a case in point for the shortcomings of the \PRP{} attribution function in terms of the ecosystem's stability. 
Specifically, the \PRP{} induces a game in which $\varepsilon$-better-response dynamics might not converge and a PNE might not exist.
We next study the dynamics induced by a common continuous function.

\subsection{The Softmax Function}\label{sec:Softmax}

The softmax function is widely used in machine learning to form a distribution. It therefore provides a natural way to smooth the discrete choice performed above. 

\begin{definition}
The \textbf{softmax} attribution function\footnote{The generative publishers’ game induced by the softmax attribution function extends Hirshleifer’s well-known model of rent-seeking competition \cite{6ea1d0d3-2a3d-3097-baf8-831d79887b68}.} $\tilde{r}$ of publisher $i$ is:
    \begin{equation}
            \tilde{r}_i(x) \coloneqq \frac{e^{-\beta \cdot d^*(x_i)}}{\sumj e^{-\beta \cdot d^*(x_j)}},
    \end{equation}
where $\beta > 0$ is an inverse temperature constant.
\end{definition}

We start by showing in the following theorem, the proof of which can be found in Appendix~\ref{apn:omitted_proofs}, that the generative game under the softmax function is not a potential game for the \K{} $=n$ case.
In Appendix~\ref{apn:Additional_Learning_Dynamics_Results} (Theorem~\ref{theorem: prop not potential}) we prove a stronger result for the \K{} $<n$ case. In particular, under any proportional function (as defined in \citealp{madmon2025on}) --- of which softmax is a special case --- the game is not a potential game.

\begin{theorem}\label{the:softmax-no-potential}
Consider generative publishers' games with context size \K{} \(=n\).
For every $\beta>0$, the softmax attribution function does not induce a potential game. Consequently, if the softmax induces a potential game, then
$\beta=0$. Equivalently:
\[
\tilde r_i^{0}(x)=\frac1n
\quad \text{for all } i\in[n],\ x\in X.
\]
\end{theorem}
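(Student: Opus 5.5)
We show that for every $\beta>0$ the game fails the standard characterization of (exact) potential games via cycles. Since the cost term $\lambda_i d^0_i(x_i)$ depends only on player $i$'s own action, it is itself a potential component. So the game is a potential game if and only if the attribution game with utilities $\tilde r_i$ alone is one. It therefore suffices to exhibit a 4-cycle of unilateral deviations along which the sum of utility changes of the deviators is nonzero.

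Concretely, fix two players $i\neq j$ and actions $a_i,b_i$ for $i$ and $a_j,b_j$ for $j$, with the remaining players held fixed. The Monderer--Shapley condition requires
\[
\bigl[\tilde r_i(b_i,a_j)-\tilde r_i(a_i,a_j)\bigr]+\bigl[\tilde r_j(b_i,b_j)-\tilde r_j(b_i,a_j)\bigr]+\bigl[\tilde r_i(a_i,b_j)-\tilde r_i(b_i,b_j)\bigr]+\bigl[\tilde r_j(a_i,a_j)-\tilde r_j(a_i,b_j)\bigr]=0 .
\]
With $k=n$ we have $y(x)=\frac{\alpha}{n}\sum_l x_l+(1-\alpha)z$. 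To keep the computation tractable we choose a simple instance: $n=2$ (other players can be parked far away if needed), $p=1$, and $z=x^*$. For $\alpha=0$ the response is fixed at $z$, and the game reduces to the exogenous-reference model of Madmon et al. I expect non-potentiality for softmax with $n=2$ to be visible already in that case. I would check this directly with the profiles $x_i\in\{0,1\}$ and $x_j\in\{0,1\}$, $z=0$.

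In that instance $d^*(0)=0$ and $d^*(1)=1$, so $\tilde r_i=\sigma\bigl(\beta(d_j-d_i)\bigr)$, where $\sigma$ is the logistic function. The four terms of the cycle sum become
\[
[\sigma(0)-\sigma(\beta)]+[\sigma(0)-\sigma(-\beta)]+[\sigma(-\beta)-\sigma(0)]+[\sigma(\beta)-\sigma(0)]=0 .
\]
This is degenerate: with $n=2$ under a fixed reference, softmax is a constant-sum-like game and the cycle sum vanishes. I would therefore take $n=3$, with the third player fixed at distance $c$, so that $\tilde r_i=e^{-\beta d_i}/(e^{-\beta d_i}+e^{-\beta d_j}+e^{-\beta c})$. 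The cycle sum is then a function of $\beta$ that is generically nonzero. One shows it is nonzero for every $\beta>0$, for example via the sign of the mixed difference $\Delta_j\Delta_i(\tilde r_i-\tilde r_j)$, which has the sign of a strictly monotone expression in $e^{-\beta}$ that vanishes only at $\beta=0$.

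Alternatively, one can use the endogenous response with $\alpha>0$ and keep $n=2$, since moving $x_j$ now shifts $y$ and thereby changes $d^*(x_i)$. Either route yields a witness for each $\beta>0$. The "consequently" clause follows immediately: if the game is potential, $\beta>0$ is impossible, so $\beta=0$ and $\tilde r_i^0=1/n$ for every $i$ and $x$, which trivially gives a potential $\phi=-\sum_i\lambda_i d^0_i$.

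The main obstacle is proving that the explicit cycle sum is nonzero for all $\beta>0$ uniformly, rather than only for a single numerical $\beta$. I would try to factor the expression as $(1-e^{-\beta})^2$ times a positive term.
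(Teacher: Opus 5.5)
\paragraph{Verdict.} The reduction is sound: the own-action cost terms are a potential component, so only $\tilde r$ matters. The 4-cycle criterion and the ``consequently'' clause are also fine. The gap is the witness itself, which fails for a structural reason.

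\paragraph{Why your witness fails.} With $k=n$ the response $y$ is symmetric in all documents, and softmax is permutation-equivariant. Hence the restriction of the attribution game to players $i,j$ is symmetric: $\tilde r_j(u,v,x_{-ij})=\tilde r_i(v,u,x_{-ij})$. Let $\Delta_i$ denote the difference between $b_i$ and $a_i$ in player $i$'s coordinate. Your cycle sum equals $-\Delta_i\Delta_j(\tilde r_i-\tilde r_j)$, and $F=\tilde r_i-\tilde r_j$ is antisymmetric under swapping $x_i$ and $x_j$. So whenever both deviators toggle between the \emph{same} pair $\{a,b\}$ (your $\{0,1\}$), the sum is $F(b,b)-F(b,a)-F(a,b)+F(a,a)=0-F(b,a)+F(b,a)+0=0$. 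This holds identically in $\beta$, $n$, $\alpha$, $z$ and $c$.

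Your diagnosis of the $n=2$ computation is therefore wrong. The vanishing has nothing to do with constant-sum structure. A two-player game with $u_2=1-u_1$ is potential only if $u_1$ is additively separable, and $w_i/(w_i+w_j)$ is not. Parking a third player at distance $c$ changes nothing: the $n=3$ sum you describe is also identically zero. The ``strictly monotone expression in $e^{-\beta}$'' and the factorization $(1-e^{-\beta})^2\cdot(\text{positive})$ do not exist. The step you flag as the main obstacle is exactly where the argument breaks, and as written no witness is exhibited. Your alternative route with $n=2$ and $\alpha>0$ also degenerates at $\alpha=1$: there $d^*(x_1)=d^*(x_2)$, so $\tilde r\equiv\tfrac12$, which is trivially potential.

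\paragraph{How to fix it.} The missing idea is to break the symmetry between the two deviators. For example, keep $n=2$, $p=1$, $\alpha=0$, $z=0$. Let $i$ toggle between $0$ and $1$, and $j$ between $0$ and $\tfrac12$. With $q=e^{-\beta}$ and $w=e^{-\beta/4}$, your cycle sum becomes $\frac{(1-q)(1-w)(w-q)}{(1+q)(1+w)(q+w)}$. This is strictly positive for every $\beta>0$, because $0<q<w<1$. The factor $w-q$ is precisely what the symmetric choice kills.

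\paragraph{Comparison with the paper.} The paper uses the differential form of the criterion (equality of mixed partials, Monderer--Shapley Theorem 4.5) in the fully endogenous case $\alpha=1$, $n=3$. It evaluates at the profile $(s+t,s,s-t)$, where the two deviators sit at different distances from $y$ ($d_2^*=0$, $d_1^*=t^2$). This asymmetry plays the same role as above. The paper obtains the closed form $-\frac{2\beta g}{3(1+2g)^2}(1+2\beta t^2-g)$ with $g=e^{-\beta t^2}$, which is visibly nonzero for all $\beta>0$. An $\alpha=0$ witness is admissible under the model, but it only reproduces the exogenous-reference (ranking) setting. The paper's instance shows that non-potentiality survives when the reference point is the generated response itself.
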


Although Theorem~\ref{the:softmax-no-potential} and Theorem~\ref{theorem: prop not potential} show that the softmax attribution function does not induce a potential game, this does not imply the divergence of better-response dynamics.
In fact, empirical analysis in Section~\ref{sec:empirical_results} suggests that with high likelihood, better-response dynamics do converge under the softmax function.
However, the following observation, the proof of which is given in Appendix~\ref{apn:omitted_proofs}, demonstrates dynamics that do not converge. 

\begin{observation}\label{obs:3}
Consider generative publishers’ games induced by the softmax attribution function $\tilde r$. For both \K{} $< n$ and \K{} $=n$ cases, there exists a game set for which there exists $\beta_0>0$ such that for every $\beta\ge \beta_0$, there exists an infinite sequence of profitable deviations. In particular, for sufficiently small $\varepsilon$, $\varepsilon$-better-response dynamics may not converge.

\end{observation}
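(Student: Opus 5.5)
We will prove Observation~\ref{obs:3} by an explicit construction: a cycle of profitable unilateral deviations that returns exactly to its starting profile. Repeating it gives an infinite sequence of profitable deviations. The idea is that as $\beta\to\infty$ the softmax attribution $\tilde r$ converges pointwise to the \PRP{} attribution $r^*$ at every profile with a unique closest publisher. Any finite cycle of strict deviations that works for $r^*$, with strict inequalities and no ties among distances to the response along the cycle, therefore also works for $\tilde r$ once $\beta$ is large. We take $p=1$ and $n=2$ for $k=n$, and $n=3$, $k=2$ for $k<n$. We use a uniform $\lambda$ small enough that attribution gains dominate movement costs. We choose $x^1_0,x^2_0$ and, in the $k<n$ case, a third publisher who stays put far from $x^*$ as a dummy.

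\textbf{Step 1 (an \PRP{} cycle).} For $k=n=2$ with $p=1$, the response is $y=\alpha\frac{x_1+x_2}{2}+(1-\alpha)z$. Fix $\alpha<1$ and $z$ in the interior. Publisher $i$ is attributed iff $|x_i-y|<|x_j-y|$. Because $y$ moves only by $\alpha/2$ per unit move of the deviator, a publisher can overtake her rival by moving toward $z$. We will design four profiles $x^{(0)}\to x^{(1)}\to x^{(2)}\to x^{(3)}\to x^{(0)}$, alternating deviators, using the pattern ``leapfrog toward $z$, then return home.'' In the first two steps the loser jumps past the winner toward $z$ and gains attribution $0\to1$. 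This gain exceeds $\lambda$ times the added distance to her initial document $x^i_0$. In the last two steps, the publisher who has just lost attribution and cannot regain it cheaply moves back to her initial document. Her cost term strictly decreases while her attribution stays $0$.

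The cycle has to close. The trick is to have each ``return home'' step be made by a publisher who loses nothing, since she is already at attribution $0$ in both positions. We verify she stays non-attributed after moving home by choosing initial documents on the far side of $z$ relative to the leapfrog positions. For the $k<n$ case we place the dummy publisher so that she is never among the top-$k$ closest to $x^*$. We also ensure the two active publishers remain the top-$2$ throughout, so the generation formula coincides with the $k=n$ analysis up to constants. Alternatively, the dummy can lie outside the retrieved set, and $y$ then depends only on the two active documents.

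\textbf{Step 2 (transfer to softmax).} For each step $t$ of the cycle, let $\delta_t>0$ be the strict utility gain under $r^*$. Each profile in the cycle has a strict gap between the distances to the response, so $|\tilde r_i-r^*_i|\le e^{-\beta g}$, where $g$ is the minimal gap over the cycle. Choose $\beta_0$ with $2e^{-\beta_0 g}<\min_t\delta_t$. Then every step is a strict deviation for all $\beta\ge\beta_0$, with gain at least $\eta:=\min_t\delta_t-2e^{-\beta g}>0$. Hence for every $\varepsilon<\eta$ the cycle is an $\varepsilon$-better-response dynamics that never reaches an $\varepsilon$-PNE. The only profiles visited are the four in the cycle, none of which is a stationary state of the dynamics.

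\textbf{Main obstacle.} The hard part is Step 1's geometry. The reference point $y$ is endogenous and shifts with each deviation, so we must exhibit concrete numbers that give strict attribution changes, strictly positive net gains, and a closed cycle. We will first try $\alpha=1/2$, $z=1/2$, $x^1_0=0$, $x^2_0=1$, and $\lambda=0.1$. We will then search small rational positions, checking the four inequalities by hand. If leapfrogging in one dimension fails to close, we will move to $p=2$, where rotating positions around $z$ gives more room.
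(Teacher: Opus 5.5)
\textbf{Gap: Step~1 is never carried out.} Your transfer argument in Step~2 is sound up to a constant: a strict argmax cycle with a unique closest publisher at every profile survives for all large $\beta$. With $n=3$, though, a unique winner only satisfies $1-\tilde r_w\le (n-1)e^{-\beta g}$, so you need $2(n-1)e^{-\beta_0 g}<\min_t\delta_t$.

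The real problem is Step~1, which is the whole content of the observation. You describe a search, but you never exhibit a cycle. The pattern you describe also cannot be realized as stated. With two publishers alternating $1,2,1,2$, after both leapfrogs publisher~1 is the loser and publisher~2 the winner. After publisher~1 retreats home (still at attribution $0$), publisher~2 is still the winner. So the fourth step cannot be made by someone ``at attribution $0$ in both positions.'' It must be the current winner retreating toward her initial document while keeping her lead.

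Your first-try values break this last step. With $\alpha=\frac12$, $z=\frac12$, $x_0^1=0$, $x_0^2=1$, letting both publishers return home ends at $(0,1)$, where $y=\frac12$ and the two publishers tie. Your strict-gap bound does not apply there. Under argmax this step trades attribution $1$ for $\frac12$ against a cost saving of at most $\lambda=0.1$, so it is not even profitable.

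\textbf{How to fill it.} When the two generating documents are $x_1,x_2$, so that $y=\alpha\frac{x_1+x_2}{2}+(1-\alpha)z$, you have
\[
d^*(x_2)-d^*(x_1)=(1-\alpha)\bigl(d(x_2,z)-d(x_1,z)\bigr).
\]
Between these two publishers, argmax attribution is just ``closer to $z$ wins.'' For $\alpha=1$ they always tie, so $\alpha<1$ is essential. Keep your $\alpha,z,x_0^1,x_0^2,\lambda$ and $p=1$, and use the cycle
\[
(0,0.9)\xrightarrow{1}(0.2,0.9)\xrightarrow{2}(0.2,0.7)\xrightarrow{1}(0,0.7)\xrightarrow{2}(0,0.9).
\]
Here publisher~2 never goes fully home. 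The argmax gains are $0.996$, $0.992$, $0.004$, $0.008$, and the minimal gap in distances to $y$ is $g=0.025$. Your Step~2 then settles $k=n=2$.

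For $k=2<n=3$, put the dummy at $1$ and take $x^*=0.45$. Publishers $1,2$ are then strictly the two closest to $x^*$ at all four profiles. The dummy is never closest to $y$, which you must check against $y$ and not only $x^*$, since $y$ is pulled toward $z$. So the same cycle works with the same $g$.

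\textbf{Comparison with the paper.} Completed this way, your route differs from the paper's. The paper keeps $\alpha=1$ and uses three publishers with profiles that contain ties. It computes the $\beta\to\infty$ limits of the softmax scores directly (limits such as $\frac12$), exploiting the endogenous response. Your route gives a reusable transfer lemma and reduces the problem to an exogenous-reference contest. The cost is that it needs $\alpha<1$ and strict gaps at every profile.
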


The result just stated indicates that under the softmax attribution function, stability is not guaranteed. Next, we discuss the linear attribution function, which leads to both stable and unstable ecosystems.

\subsection{The Linear Relative Relevance Function}
\label{linear_theory}

We adapt the notion of linear relative relevance from \citet{Omer_Madmon_2025}, which is the difference between the average distance of all other publishers from the generated response $y(x)$ and the distance of publisher $i$ from $y(x)$.

\begin{definition}
    Let $x \in X$ be a strategy profile in a generative publishers' game. The \textbf{linear relative relevance} of publisher $i \in N$ with respect to $x$ is:
    \begin{equation}
            \nu_i(x) \coloneqq \frac{1}{n-1} \Bigl( \sum_{j \in N \setminus \{i\}} d^*(x_j) \Bigr) - d^*(x_i).
    \end{equation}
\end{definition}

Notably, the relative relevance of publisher $i$ increases as other publishers become less relevant to $y(x)$ or she becomes more relevant to $y(x)$.
To induce a valid probability distribution, a linear function of the form \({r}_i(x) \coloneqq m \cdot \nu_i(x) + b\) entails a slope \(m \in (0, \frac{1}{n}]\) and a bias $b=\frac{1}{n}$.
Consequently, we can naturally define the linear attribution function:

\begin{definition}
The \textbf{linear }relative relevance attribution function $\hat{r}$ of publisher $i$ with slope \(m \in (0, \frac{1}{n}]\) is:
    \begin{equation}
        \hat{r}_i (x) \coloneqq m \cdot \nu_i(x) + \frac{1}{n} .
    \end{equation}
\end{definition}
Note that the slope $m$ controls the extent to which publishers are motivated to modify their content by varying the extent to which the function is uniform over publishers. 

\citet{Omer_Madmon_2025} showed that linear ranking entails stability in ecosystems with ranking-based exposure.
As discussed, this result does not naturally transfer to GenAI ecosystems. In fact, as we later show, this stability result does not hold for mechanisms with partial selection criteria (i.e., a context size of \K{} $<n$). 
However, in the following theorem, the proof of which is given in Appendix~\ref{apn:omitted_proofs}, we show that for mechanisms with context size \K{} $=n$ and the linear attribution function, the induced game is a potential game.

\begin{theorem}
\label{thm:linear-kn-potential}
Consider a generative publishers' game with the linear attribution function $\hat r$ and a generation using context size of \K{} $=n$.
Then, the induced game is an exact potential game with the potential function
\begin{align*}
\Psi_{}(x)
&=
-(1-\alpha)m\sum_{i=1}^n d(x_i,z) \\
&-
\frac{\alpha m(n-2)}{n(n-1)}
\sum_{1\le p<q\le n} d(x_p,x_q)
-
\sum_{i=1}^n \lambda_i \cdot d_i^0(x_i).
\end{align*}
Consequently, for every $\varepsilon>0$, the game admits an $\varepsilon$-PNE,
and every $\varepsilon$-better-response dynamics converges.
\end{theorem}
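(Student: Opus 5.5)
The plan is to verify the exact-potential identity directly, one player at a time, and then invoke the theorem of \citet{monderer1996potential}. Since the cost term $-\lambda_i d^0_i(x_i)$ appears identically in $u_i$ and in $\Psi$, and the constant $\frac{1}{n}$ in $\hat r_i$ plays no role, it suffices to show the following: for every $i$ and every fixed $x_{-i}$, the function
\[
m\,\nu_i(x) - \Psi_A(x), \qquad \Psi_A(x):=-(1-\alpha)m\sum_{j} d(x_j,z)-\frac{\alpha m(n-2)}{n(n-1)}\sum_{p<q} d(x_p,x_q),
\]
does not depend on $x_i$. Then unilateral differences of $u_i$ and $\Psi$ coincide, which is exactly the potential property.

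First I rewrite $\nu_i$. Setting $D(x):=\sum_j d^*(x_j)$, we have
\[
\nu_i=\frac{D-d^*(x_i)}{n-1}-d^*(x_i)=\frac{D}{n-1}-\frac{n}{n-1}\,d^*(x_i).
\]
With $k=n$ we have $S_k=N$, so $y(x)=\alpha\bar x+(1-\alpha)z$ with $\bar x=\frac1n\sum_j x_j$, and $y$ depends on $x_i$ with weight $\alpha/n$.

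For $D$ I use the variance decomposition
\[
\sum_j\|x_j-y\|^2=\sum_j\|x_j-\bar x\|^2+n\|\bar x-y\|^2.
\]
Two further identities simplify this: $\sum_j\|x_j-\bar x\|^2=\frac1n\sum_{p<q}\|x_p-x_q\|^2$, and $\bar x-y=(1-\alpha)(\bar x-z)$. Together they express $D$ through pairwise distances and $\|\bar x-z\|^2$.

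The single term $d^*(x_i)=\frac1p\|x_i-\alpha\bar x-(1-\alpha)z\|^2$ I treat as a quadratic polynomial in $x_i$. Because every quantity involved is a quadratic polynomial in $x_i$ (coordinatewise, with $x_{-i}$ fixed), the independence claim reduces to matching coefficients. Concretely, I check that $m\nu_i$ and $\Psi_A$ have the same coefficient of $\|x_i\|^2$, and the same linear part in $x_i$, i.e.\ the same coefficients of $\langle x_i,x_j\rangle$ for $j\neq i$ and of $\langle x_i,z\rangle$.

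I expect this coefficient bookkeeping to be the main obstacle. Terms of order $\alpha^2$ coming from $\bar x$ inside both $D$ and $d^*(x_i)$ must cancel against each other. The $\langle x_i,x_j\rangle$ coefficients must collapse to $\frac{\alpha m(n-2)}{n(n-1)}\cdot\frac{2}{p}$, and the $z$-terms must collapse to $(1-\alpha)m\cdot\frac{2}{p}$. The plan is to first isolate the $\alpha$-dependence, then sanity-check the two boundary cases. At $\alpha=0$ the response $y=z$ is exogenous, and the identity reduces to the known linear-ranking potential of \citet{Omer_Madmon_2025}. At $\alpha=1$ only the pairwise term survives. Checking both cases guards against sign and factor errors.

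Finally, $\Psi$ is continuous on the compact set $X=\mathcal V^n$, hence bounded. Each $\varepsilon$-better response raises $u_i$, and therefore $\Psi$, by more than $\varepsilon$. So any $\varepsilon$-better-response dynamics can make only finitely many strict moves and must reach an $\varepsilon$-PNE. This is the conclusion of the theorem of \citet{monderer1996potential}, applied to the potential game just established, and it also yields existence of an $\varepsilon$-PNE.
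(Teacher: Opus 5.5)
Your proposal is correct and follows essentially the paper's route: fix $x_{-i}$, show that $m\nu_i$ and the non-cost part of $\Psi$ differ only by a term independent of $x_i$, then conclude via Monderer--Shapley. The independence comes from matching the $\|x_i\|^2$ coefficient $-\frac{m}{p}\bigl(1-\frac{2\alpha}{n}\bigr)$, the coefficient $\frac{2\alpha m(n-2)}{pn(n-1)}$ on $\langle x_i,x_j\rangle$, and the coefficient $\frac{2(1-\alpha)m}{p}$ on $\langle x_i,z\rangle$. The only differences are bookkeeping and a small added detail. The paper expands directly around $y=\frac{\alpha}{n}x_i+h_{-i}$ with $h_{-i}=\frac{\alpha}{n}\sum_{j\neq i}x_j+(1-\alpha)z$, whereas you use a variance decomposition of $\sum_j d^*(x_j)$; this also works, with the $\alpha^2$ terms cancelling as you anticipate. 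Your explicit use of the boundedness of $\Psi$ in the convergence step is a bit of extra care the paper leaves implicit.
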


Since $X_i$ is compact for every $i$, and since both $y(\cdot)$ and $d(\cdot,\cdot)$ are continuous, each utility function $u_i$ is continuous in the strategy profile $x$. By Theorem~\ref{thm:linear-kn-potential}, the game is an exact potential game, hence it has acyclic dynamics. Therefore, the conditions of Kukushkin's \shortcite{kukushkin_nash_2011} theorem hold, thus:

\begin{corollary}
Under the conditions of Theorem~\ref{thm:linear-kn-potential}, the induced publishers' game admits a pure Nash equilibrium.
\end{corollary}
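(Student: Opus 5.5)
The corollary asks for an exact PNE, whereas Theorem~\ref{thm:linear-kn-potential} only gives $\varepsilon$-PNE for every $\varepsilon>0$. There are two routes. The first follows the paper's pointer to Kukushkin. The second is more self-contained: maximize the potential $\Psi$ directly over the compact strategy space. I would present the direct argument as the main proof and mention the Kukushkin route as the alternative the text alludes to.

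\emph{Step 1: compactness and continuity.} Each $X_i=[0,1]^p$ is compact, so $X=\times_i X_i$ is compact. The distance $d(a,b)=\frac1p\|a-b\|_2^2$ is a polynomial, hence continuous. The potential $\Psi$ in Theorem~\ref{thm:linear-kn-potential} is a finite linear combination of terms $d(x_i,z)$, $d(x_p,x_q)$ and $d(x_i,x_0^i)$, so it is continuous on $X$. By Weierstrass, $\Psi$ attains a maximum at some $\bar x\in X$.

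\emph{Step 2: a maximizer of $\Psi$ is a PNE.} Suppose $\bar x$ were not a PNE. Then some $i$ and some $x_i'$ satisfy $u_i(x_i',\bar x_{-i})>u_i(\bar x)$. By the exact potential identity from Theorem~\ref{thm:linear-kn-potential},
\[
\Psi(x_i',\bar x_{-i})-\Psi(\bar x)=u_i(x_i',\bar x_{-i})-u_i(\bar x)>0,
\]
which contradicts the maximality of $\bar x$. Hence $\bar x$ is a PNE.

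\emph{Alternative route, as the text suggests.} Continuity of each $u_i$ holds because $y$ is affine in $x$ when $k=n$ (there is no top-$k$ selection), and $\hat r_i$ is affine in the distances. An exact potential game has no improvement cycles, since $\Psi$ strictly increases along any improvement path. With compact strategy sets, Kukushkin's theorem then yields a PNE.

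\emph{Main obstacle.} The only delicate point is confirming that $\Psi$, or equivalently each $u_i$, is genuinely continuous. For $k=n$ the index set $S_k=N$ is fixed, so no discontinuity from selection arises. This is exactly why the statement is restricted to $k=n$: for $k<n$ the set $S_k$ can jump as profiles change. Once continuity is checked, the argument is immediate.
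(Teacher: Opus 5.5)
Your argument is correct, but your main route differs from the paper's. The paper, in the paragraph preceding the corollary, notes three things: each $u_i$ is continuous on the compact set $X$, and an exact potential game has acyclic improvement dynamics. It then invokes Kukushkin's theorem, which is essentially your ``alternative route.''

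Your main proof instead uses the explicit potential of Theorem~\ref{thm:linear-kn-potential}. $\Psi$ is a polynomial on the compact $X$, so Weierstrass gives a maximizer, and the exact-potential identity rules out profitable deviations there. This is more elementary and self-contained, and it yields more: it locates an equilibrium at $\arg\max_X \Psi$.

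With one more observation it also gives uniqueness. All coefficients multiplying distances in $\Psi$ are nonpositive ($m>0$, $1-\alpha\ge 0$, $\alpha(n-2)\ge 0$), and $\lambda_i>0$, so $\Psi$ is a strictly concave quadratic. At any PNE, each $x_i$ maximizes $\Psi(\cdot,x_{-i})$ over $[0,1]^p$. Because $\Psi$ is differentiable and $X$ is a product of convex sets, the blockwise first-order conditions add up to the global first-order condition. Hence every PNE coincides with the unique maximizer of $\Psi$.

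Kukushkin's theorem, in turn, buys generality. It needs only acyclicity plus continuity of utilities, so it would still apply with an ordinal or non-explicit potential. That generality is not needed here.

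One minor correction to your closing remark. Continuity of $\Psi$ is immediate from its formula, so it is not a delicate point. Also, the restriction to $k=n$ is not mainly a continuity issue. For $k<n$ the potential fails to exist even on open sets where $S_k$ is constant, as the cross-partial computation in the proof of Theorem~\ref{teo:linear_not_converge_for_K-smaller_than_n} shows.
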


In contrast, as we show in the following theorem (whose proof is presented in Appendix~\ref{apn:omitted_proofs}), mechanisms with context size \K{} $<n$ do not induce potential games. 
Notably, this case represents, for example, retrieval-based systems in which only a subset of content creators' documents are used in generation \cite{3495724.3496517}.
Additionally, in Appendix~\ref{apn:Additional_Learning_Dynamics_Results} (Observation~\ref{obs:example_for_non_conv_with_linear_k<n}) we provide an example of a game in which better-response dynamics diverge.

\begin{theorem}\label{teo:linear_not_converge_for_K-smaller_than_n}
Consider the class of generative publishers’ games induced by the linear attribution function \(\hat r\).
For every \(\text{\K{}} < n\), this class contains games that are not potential games.
\end{theorem}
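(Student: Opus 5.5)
The plan is to use the standard cycle characterization of exact potential games due to \citet{monderer1996potential}. A game is an exact potential game if and only if, for every closed 4-cycle of unilateral deviations by two players, the sum of utility changes around the cycle is zero. Concretely, fix players $i\neq j$, actions $x_i,x_i'$ and $x_j,x_j'$, and the remaining profile $x_{-ij}$. Then
\[
\bigl[u_i(x_i',x_j)-u_i(x_i,x_j)\bigr]+\bigl[u_j(x_i',x_j')-u_j(x_i',x_j)\bigr]+\bigl[u_i(x_i,x_j')-u_i(x_i',x_j')\bigr]+\bigl[u_j(x_i,x_j)-u_j(x_i,x_j')\bigr]
\]
must vanish. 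For every $k<n$ I will exhibit a concrete game and such a cycle whose sum is nonzero. The cost terms $\lambda\, d^0$ telescope around the cycle, so only the attribution terms $\hat r$ matter. This lets me choose $\lambda$, $x^i_0$ and $m$ freely, say $m=1/n$ and uniform $\lambda$.

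The key mechanism is the discontinuity of $S_k$, which has no analogue in the $k=n$ case of Theorem~\ref{thm:linear-kn-potential}. I will work in dimension $p=1$ with $\alpha=1$, which makes the response $y$ simply the centroid of the $k$ retained documents. Fix $x^*=0$ and a generic $z$. Place the $n-2$ other players as follows: $k-1$ of them at $0$, so they are always retrieved, and $n-k-1$ of them at $1$, so they are never retrieved when $k\le n-1$. Player $i$ moves between $a$ and $a'$, and player $j$ moves between $b$ and $b'$. I will choose the values so that which of $i,j$ occupies the last retrieval slot flips along the cycle: at $(a,b)$ player $i$ is retrieved, at $(a',b)$ player $j$ is retrieved, and so on. Tie-breaking must be fixed deterministically, but I will avoid ties altogether by using distinct distances to $x^*$. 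I then compute $y$ at the four profiles, which is an explicit average, and from it each $\nu_i$ and $\nu_j$. With $p=1$ and squared distance, every term is a polynomial in $a,a',b,b'$.

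It is cleaner to write $\hat r_i=\frac1n+m\bigl(\frac{1}{n-1}D-\frac{n}{n-1}d^*(x_i)\bigr)$, where $D=\sum_l d^*(x_l)$. The cycle sum for $i$ and $j$ then reduces to the identity that $\sum_l d(x_l,y)$ with $y$ a centroid of a subset does not decompose additively once the subset changes. I would first try the smallest case, $n=2$ and $k=1$, where $y$ is the nearer document. Taking $x^*=0$, $a=0.1$, $a'=0.4$, $b=0.2$, $b'=0.3$, the cycle $(a,b)\to(a',b)\to(a',b')\to(a,b')\to(a,b)$ has the retained document equal to $a$, $b$, $b'$, $a$ in turn. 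Here $\hat r_i = \frac12 + m\bigl(d(x_j,y)-d(x_i,y)\bigr)$, so the $i$-terms are $\pm(x_i-x_j)^2$ and the $j$-terms are $\mp(x_i-x_j)^2$. I will check numerically that the alternating sum is nonzero. For general $k<n$, the padding players described above keep the same switching structure, and the extra terms that do not cancel will be computed explicitly.

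I expect the main obstacle to be the general-$k$ bookkeeping. The padding players' distances to $y$ change when $y$ changes, which adds terms to $D$, so I must make sure the total does not accidentally cancel. To handle this I will keep $a,a',b,b'$ as parameters, write the cycle sum as a nonzero polynomial in them, and argue that it is not identically zero by evaluating it at one point. Separately, I must ensure the retrieval order is strict at all four profiles, which only requires generic choices in a small interval near $0$.
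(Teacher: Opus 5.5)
\textbf{There is a genuine gap.} Testing for a closed 4-cycle with nonzero utility sum is a valid route; it is the finite-difference counterpart of the paper's mixed-partial test. But the witness you propose fails, and the reasoning behind it points the wrong way.

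\textbf{The reduction is backwards.} In $\hat r_i=\frac1n+\frac{m}{n-1}D-\frac{mn}{n-1}d^*(x_i)$, the term $\frac{m}{n-1}D$ is the same for every player. It therefore telescopes to zero around every closed path, exactly like a potential. So non-additivity of $D=\sum_l d(x_l,y)$ plays no role, the padding players' contributions to $D$ never matter, and your ``main obstacle'' disappears. The cycle sum is $\frac{mn}{n-1}$ times the cycle sum of $-d^*(x_i)$ and $-d^*(x_j)$ alone.

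\textbf{Your concrete check returns zero.} For $n=2$, $k=1$, $(a,a',b,b')=(0.1,0.4,0.2,0.3)$, the four increments are $-0.05m$, $-0.03m$, $+0.05m$, $+0.03m$. This is not special to $n=2$. In your padded construction with $a<b<b'<a'$ you have $y=\min(x_i,x_j)/k$, and a direct computation gives the cycle sum
\[
\frac{2mn}{k(n-1)}\,(b'-b)\,(b+b'-a-a').
\]
At your values $a+a'=b+b'$, so this vanishes for every $k$.

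\textbf{Why it cancels.} The obstruction is not the discontinuity of $S_k$ but the one-sided influence of retrieved documents on $y$. Where $i$ is retrieved and $j$ is not, $y$ depends on $x_i$ but not on $x_j$. Hence $\partial^2 d^*(x_j)/\partial x_i\partial x_j=-2/k$ while $\partial^2 d^*(x_i)/\partial x_i\partial x_j=0$. Across the diagonal $x_i=x_j$ the roles swap and this discrepancy changes sign. Your rectangle is cut by the diagonal into two pieces of equal area, whose contributions cancel exactly. Crossing the switching boundary works against you, not for you.

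\textbf{The fix.} Stay inside a region where $S_k$ is constant, e.g.\ $a<a'<b<b'$ so that $i$ is always retrieved. This gives the cycle sum $\frac{2mn}{k(n-1)}(a'-a)(b'-b)\neq 0$ for every $k<n$. That is the discrete version of the paper's proof. The paper fixes an open set on which the retrieval set is $\{1,\dots,k\}$ and shows
\[
\frac{\partial^2u_1}{\partial x_{k+1}\partial x_1}=-\frac{2m}{k(n-1)}\neq\frac{2m}{k}=\frac{\partial^2u_{k+1}}{\partial x_1\partial x_{k+1}},
\]
contradicting the Monderer--Shapley mixed-partial condition. The gap between these two values is exactly the constant $\frac{2mn}{k(n-1)}$ above. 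Your fallback of keeping $a,a',b,b'$ symbolic would also work, but only if you evaluate off the hyperplane $a+a'=b+b'$.
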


All in all, whether the linear attribution function induces a stable ecosystem is a question of mechanism design. Specifically, it depends on the context size \K{}.
In other words, the ability of platforms to ensure stability under better-response dynamics is subject to their ability to use the set of all publishers for generation and attribution.

\section{Empirical Results}\label{sec:empirical_results}
In this section, we present an empirical analysis of learning dynamics using simulated environments. The goal is to model realistic better-response dynamics.
We adopt a mechanism design perspective \cite{10.1093/acprof:oso/9780199734023.001.0001} and study the impact of our attribution mechanisms (\PRP{}, softmax, and linear) and novel design choices that emerged in the GenAI era (retrieval weight $\alpha$ and context size \K{}) on both the stability and welfare in the ecosystem. 

As noted by \citet{10.1145/3477495.3532771}, empirically evaluating competitive search settings poses substantial challenges.
For instance, the impact of textual modifications on a document’s (dense) representation is an active field of research \cite{tennenholtz2024embeddingaligned}. 
Therefore, following prior work on learning dynamics \cite{madmon2025on, Omer_Madmon_2025}, we use code-based simulations in the embedding space.
Specifically, we employ the \textit{discrete better-response dynamics} algorithm, intended to model realistic scenarios in which publishers iteratively modify their content to improve utility.

\subsection{Simulation Details}
In discrete better-response dynamics, publishers start with their initial documents, and at each timestep, one publisher modifies her document to improve her utility; a detailed algorithm for a single simulation can be found in Appendix~\ref{apn:sum_algorithm}. 
A simulation is said to \textit{converge} if it has reached an $\varepsilon$-PNE in no more than $T$ iterations.
In accordance with prior work \cite{Omer_Madmon_2025}, we set $T=1000$, $\varepsilon=10^{-6}$, and perform 500 simulations to construct a bootstrap confidence interval with a confidence level of 95\%.

We consider simulations with $n=5$ players (which is the common practice for the number of retrieved documents in RAG \cite{xu2024retrieval, yu2024rankrag}), a cost factor $\lambda =  0.5$ (to balance the trade offs induced by it, as we show in Appendix~\ref{apn:faith_effect}, where we study the impact of $\lambda$), a dimension $\text{\di}=2$, the generation function $y(x) = y_{\alpha,z}(x)$, and draw the question $x^*$ and initial documents $x^0_i$ uniformly from $\mathcal{V}$\footnote{We set a fixed random seed of $36$ to ensure reproducibility.}.
As in prior work \cite{Omer_Madmon_2025}, we set $\beta = 1$ for the softmax function and $m = \frac{1}{n}$ for the linear function. In Appendix~\ref{apn:Hyperparameters_Effect} we study the effect of $n$, $\text{\di}$, $\beta$, $m$, and the distribution of the question and the initial documents. The results demonstrate the generality of the findings reported below.

For the analysis of the retrieval weight $\alpha$, we draw $z$ --- the posterior on the user's information need --- from a normal distribution with expectation $\mu = x^*$\footnote{Such sampling might also capture the stochastic sampling from LLMs as part of the generation process \cite{Holtzman2020The}.} and set \K{} $=n$.
For the analysis of the context size \K{}, we set $\alpha=1$ to eliminate the effect of the platform's knowledge $z$ and focus on the dynamics among publishers.

\begin{figure}[t]
    \centering

    \begin{subfigure}[t]{\columnwidth}
    \centering

    \begin{subfigure}[t]
    {0.325\columnwidth}
        \centering
        \includegraphics[width=\linewidth]{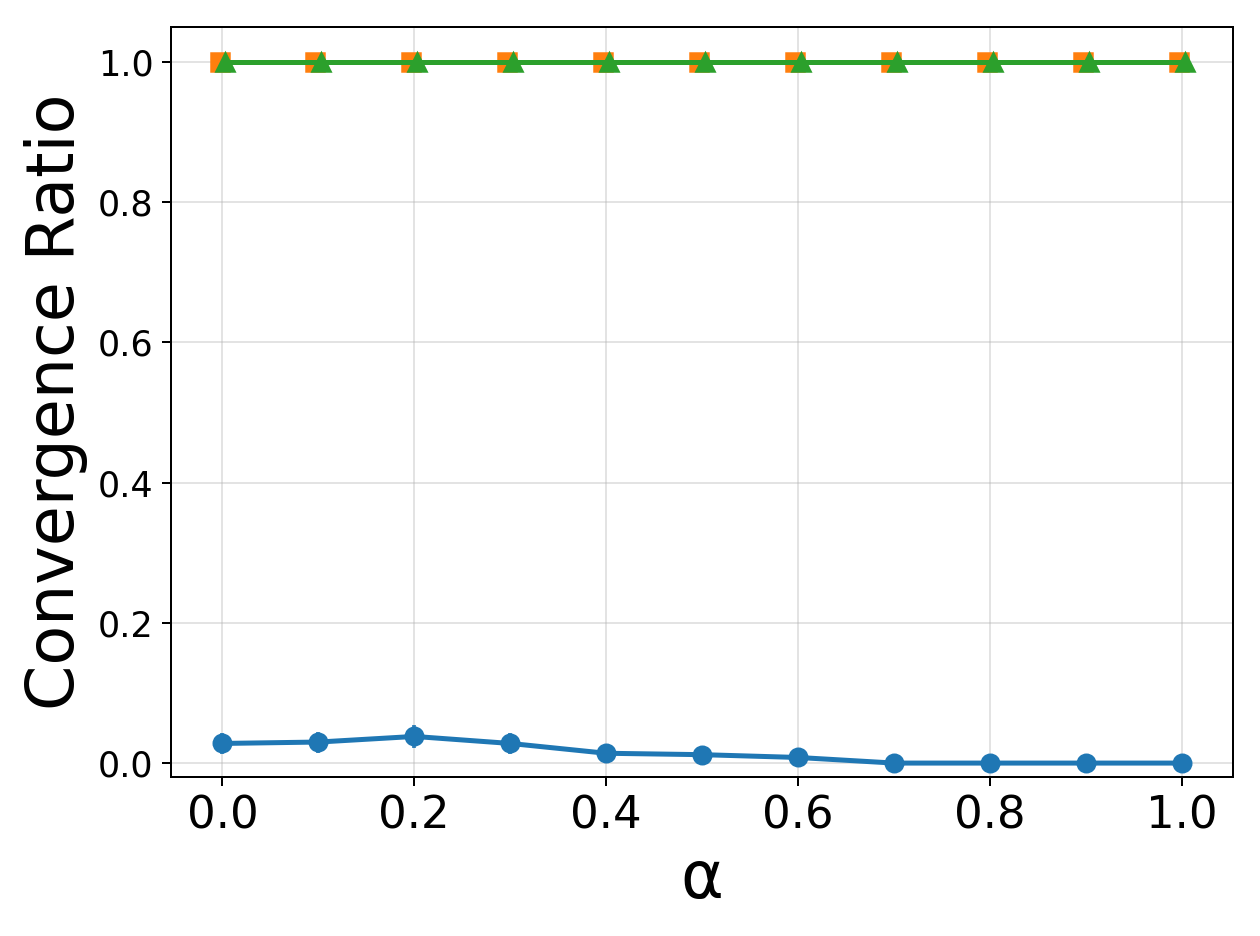}
        \caption{Convergence ratio}
        \label{fig:alpha_convergence_ratio}
    \end{subfigure}
    \hspace{0.06\columnwidth}
    \begin{subfigure}[t]{0.325\columnwidth}
        \centering
        \includegraphics[width=\linewidth]{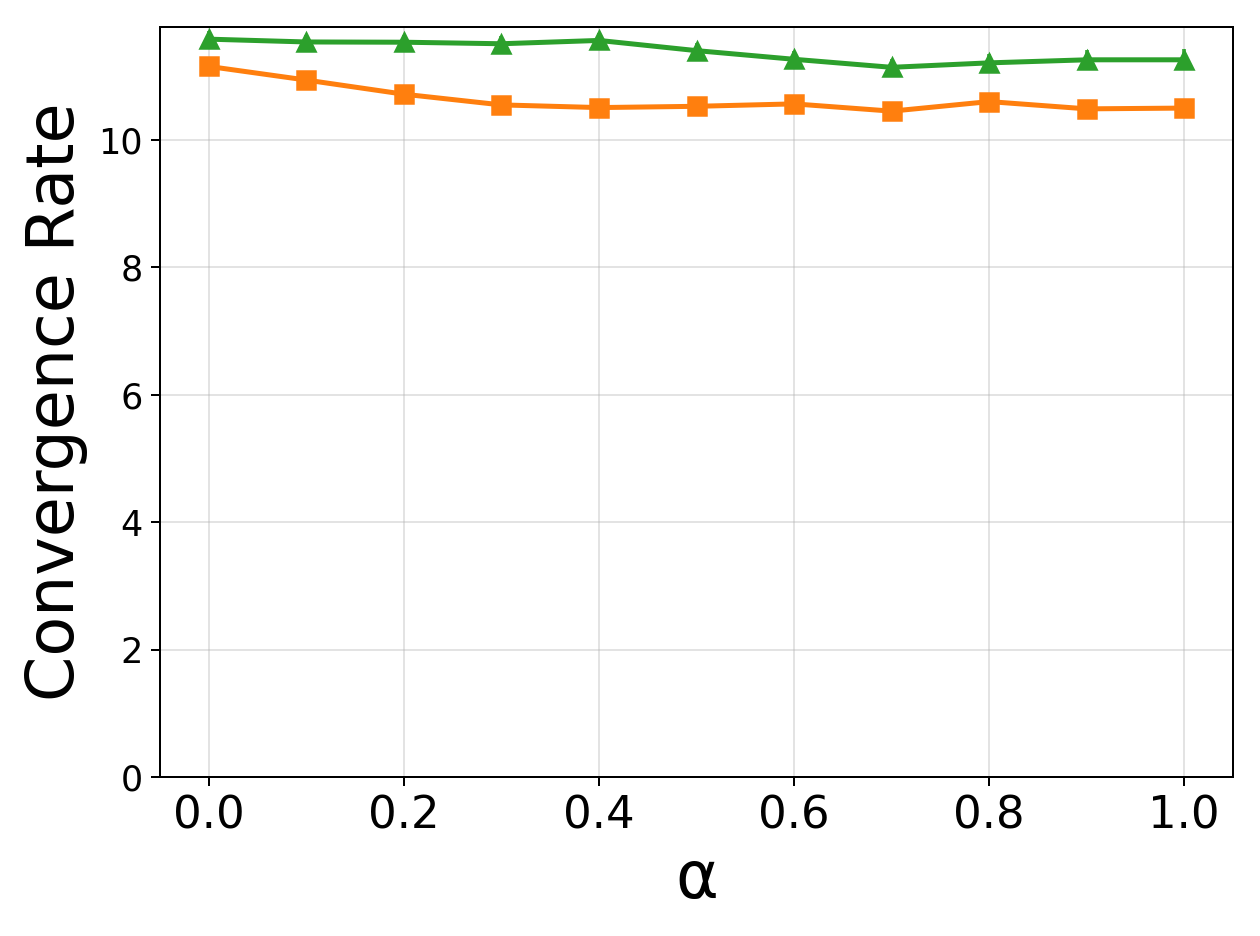}
        
        \caption{Convergence rate}\label{fig:alpha_convergence_rate}
    \end{subfigure}

    \caption*{\textbf{Effect of the value of} $\boldsymbol{\alpha}$}
    \label{fig:alpha_convergence}
\end{subfigure}

    \begin{subfigure}[t]{\columnwidth}
    \centering

    \begin{subfigure}[t]{0.325\columnwidth}
        \centering
        \includegraphics[width=\linewidth]{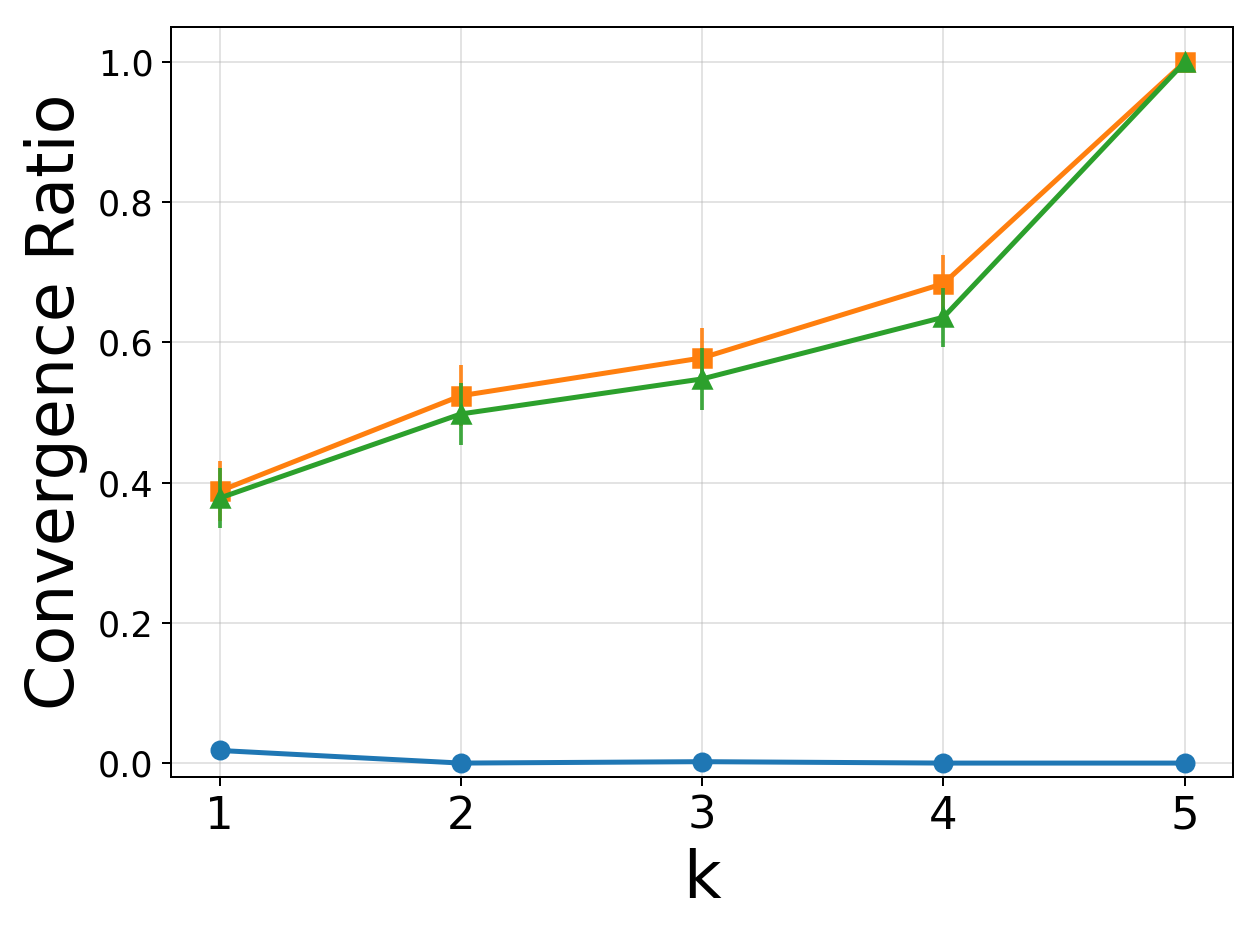}
        
        \caption{Convergence ratio}\label{fig:k_convergence_ratio}
    \end{subfigure}
    \hspace{0.06\columnwidth}
    \begin{subfigure}[t]{0.325\columnwidth}
        \centering
        \includegraphics[width=\linewidth]{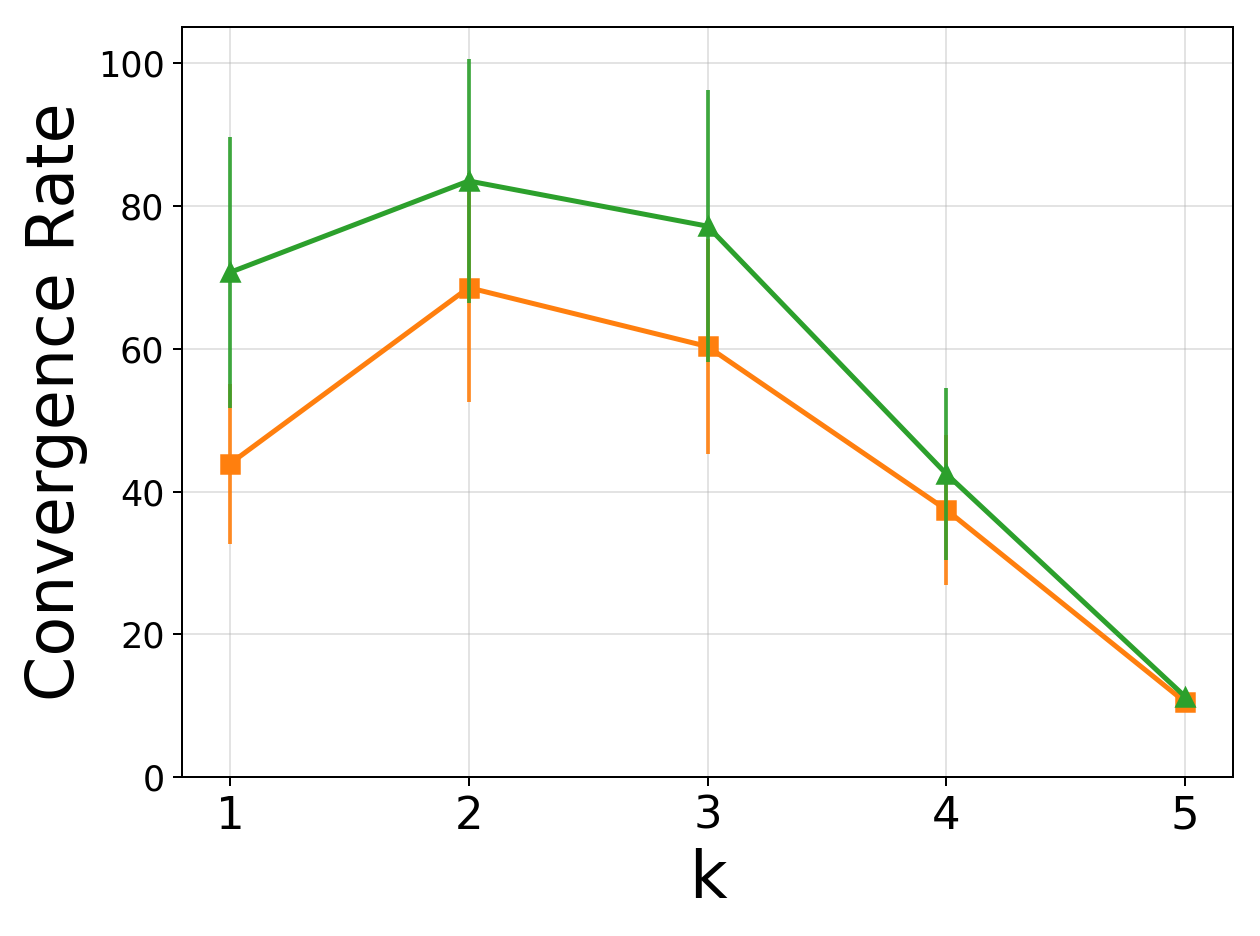}
        
        \caption{Convergence rate}\label{fig:k_convergence_rate}
    \end{subfigure}

    \caption*{\textbf{Effect of the value of} $\boldsymbol{k}$}
    \label{fig:k_convergence}
\end{subfigure}

    \caption{Stability analysis of the \PRP{}, softmax, and linear attribution functions for various values of $\alpha$ and \K{}, shown in blue circles, orange squares, and green triangles, respectively.
    }
    \label{fig:stability_analysis}
\end{figure}

\subsection{Stability Analysis}
We begin with the stability analysis of the GenAI ecosystem.
We report two stability measures: the \emph{convergence ratio}, which is the proportion of simulations that converged, and the \emph{convergence rate}, which is the average number of rounds until the simulation converged, among the converged instances. As we show below, since the dynamics under the \PRP{} function rarely converged, we report the latter only for the softmax and linear functions.
Figure~\ref{fig:stability_analysis} shows the stability results of our mechanisms.

Figure~\ref{fig:alpha_convergence_ratio} confirms the theoretical stability result for the linear function and highlights a substantial limitation of the \PRP{} function: the vast majority of the dynamics it induces fail to converge. In contrast, the dynamics induced by the softmax function converges in all cases. Informally, both the softmax and linear functions appear to induce a stable ecosystem.
We highlight a significant observation in Figure~\ref{fig:alpha_convergence_rate}; the convergence rate values remain relatively stable as the retrieval weight $\alpha$ varies. In other words, increasing the impact of publishers on the downstream response does not destabilize the ecosystem, neither in terms of the convergence ratio nor the convergence rate.
Notably, the softmax function induces faster convergence than the linear function.

We can see in Figure~\ref{fig:k_convergence_ratio} that the convergence ratio under the softmax and linear attribution functions increases with the value of the context size \K, whereas under the \PRP{} function it remains approximately $0$.
This suggests that utilizing more documents from the corpus --- specifically, all documents --- for response generation induces a more stable ecosystem.
This observation can likely be explained by the fact that as the value of \K{} increases, the number of discontinuity points in the utility function decreases.
Additionally, Figure~\ref{fig:k_convergence_rate} shows a parabolic trend for the convergence rate values. We point out that most of the mechanisms are statistically significantly indistinguishable, except for mechanisms with context size \K{} $= n$ (recall that it is the only stable mechanism), for which both the softmax and linear attribution functions exhibit statistically significantly faster convergence than that of mechanisms with context size \K{} $<n$.
This serves as another case in point for the merits of generation with \K{} $=n$.

\captionsetup[subfigure]{
    font=small,
    justification=centering,
    singlelinecheck=true,
    skip=3pt
}

\begin{figure}[t]
    \centering

    \begin{subfigure}[t]{\columnwidth}
    \centering

    \begin{subfigure}[t]{0.325\columnwidth}
        \centering
        \includegraphics[width=\linewidth]{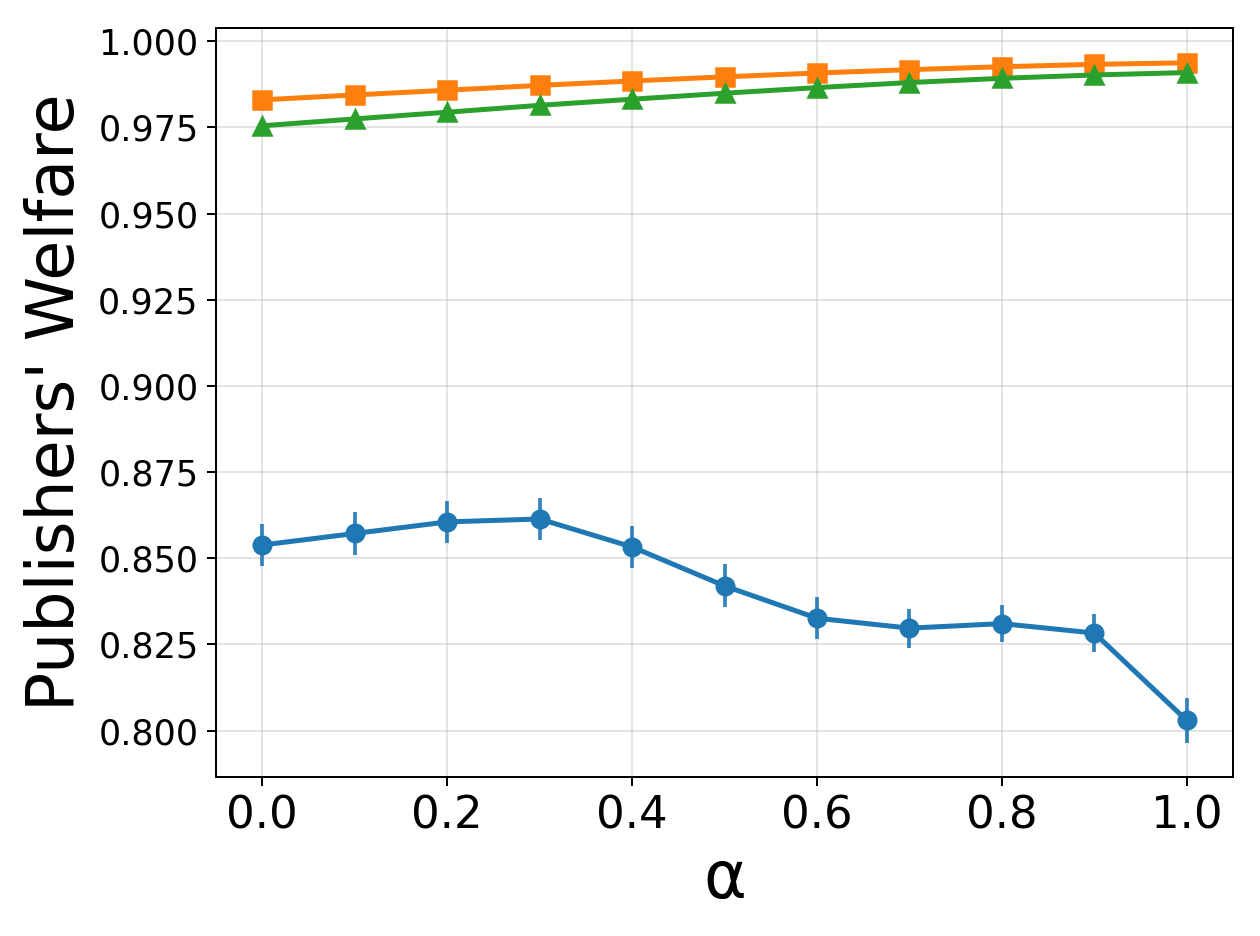}
        
        \caption{PW}\label{fig:welfare_analysis_alpha_PW}
    \end{subfigure}
    \hfill
    \begin{subfigure}[t]{0.325\columnwidth}
        \centering
        \includegraphics[width=\linewidth]{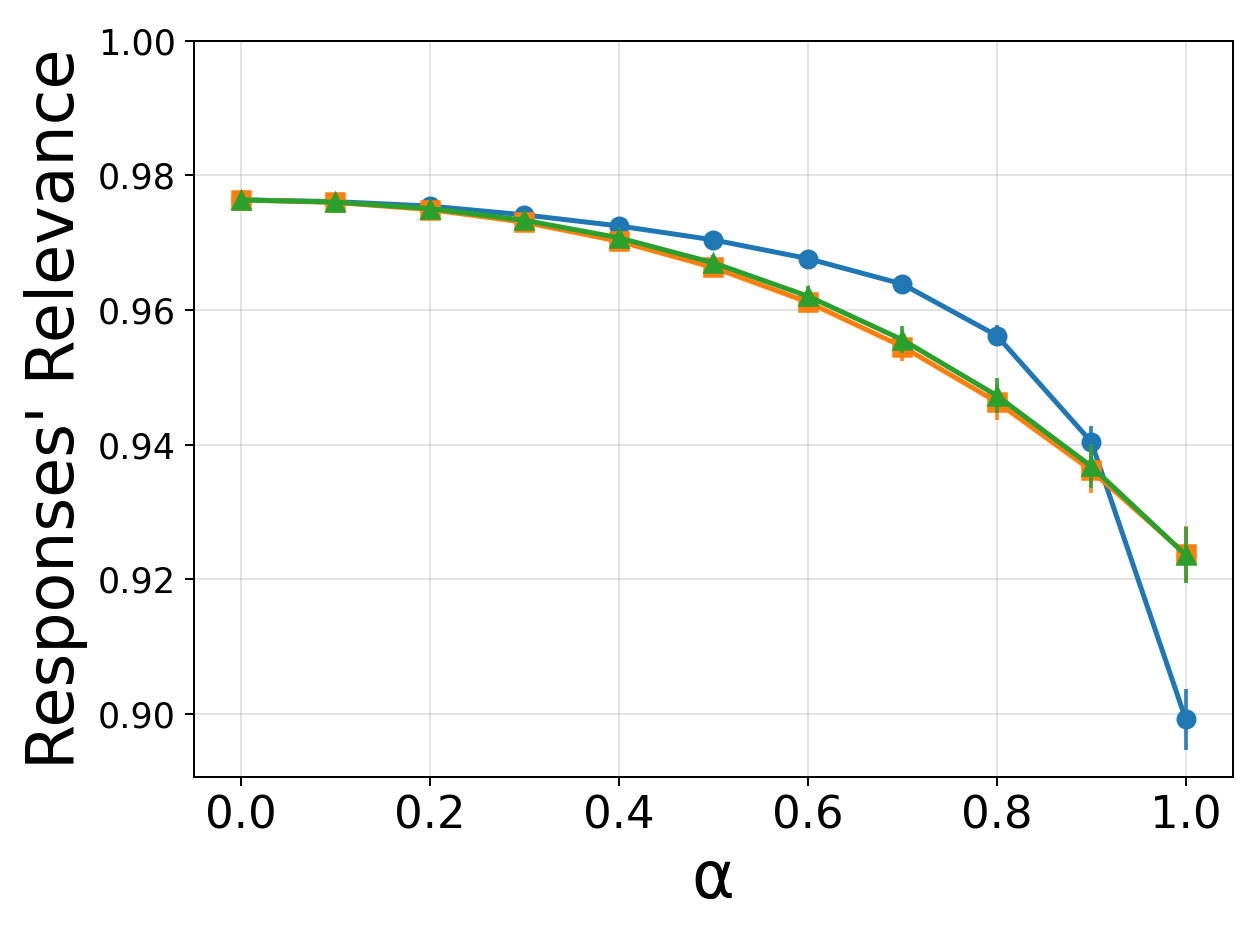}
        
        \caption{RR}\label{fig:welfare_analysis_alpha_RR}
    \end{subfigure}
    \hfill
    \begin{subfigure}[t]{0.325\columnwidth}
        \centering
        \includegraphics[width=\linewidth]{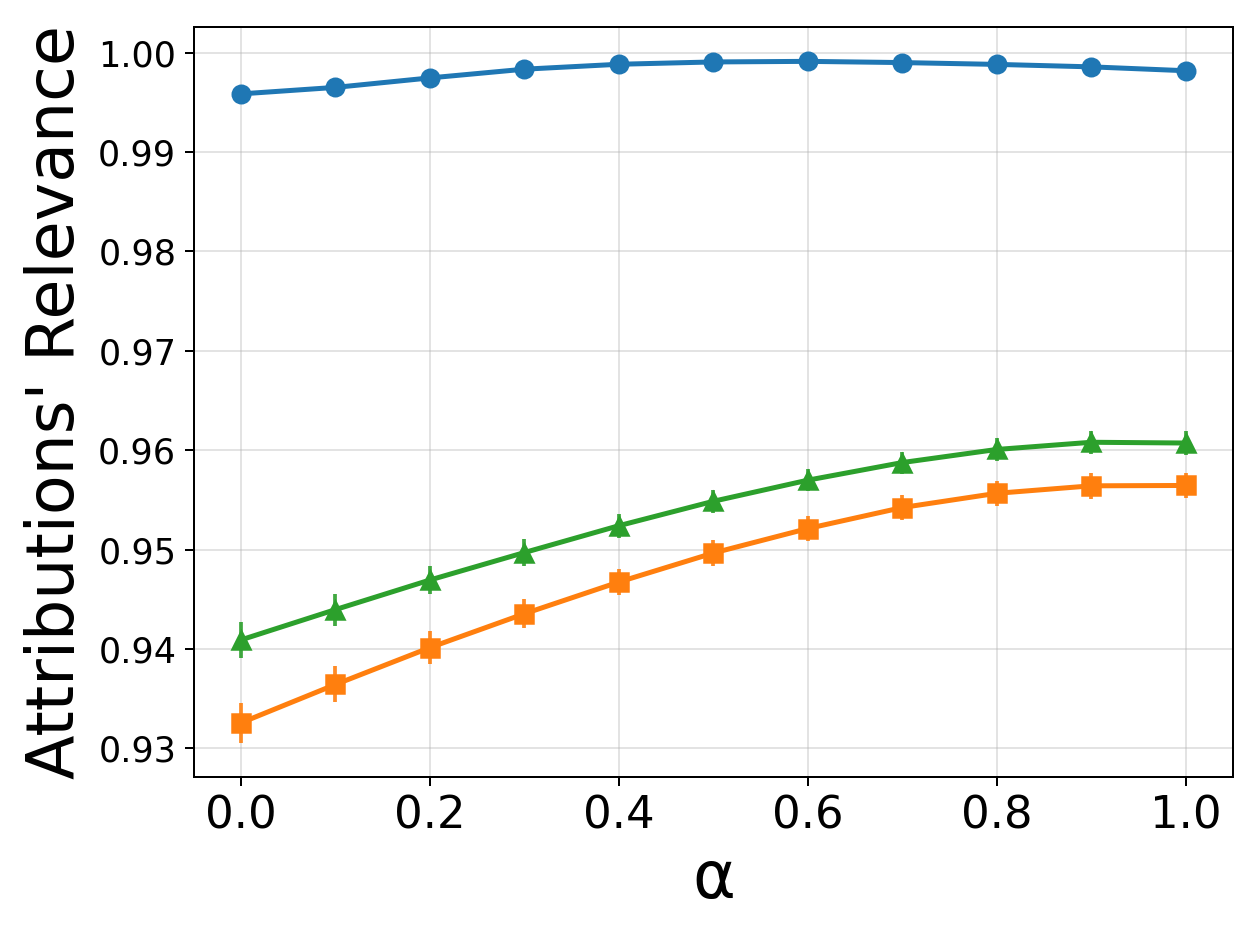}
        
        \caption{AR}\label{fig:welfare_analysis_alpha_AR}
    \end{subfigure}

    \caption*{\textbf{Effect of the value of} $\boldsymbol{\alpha}$}\label{fig:welfare_analysis_alpha}
\end{subfigure}

    \begin{subfigure}[t]{\columnwidth}
    \centering

    \begin{subfigure}[t]{0.325\columnwidth}
        \centering
        \includegraphics[width=\linewidth]{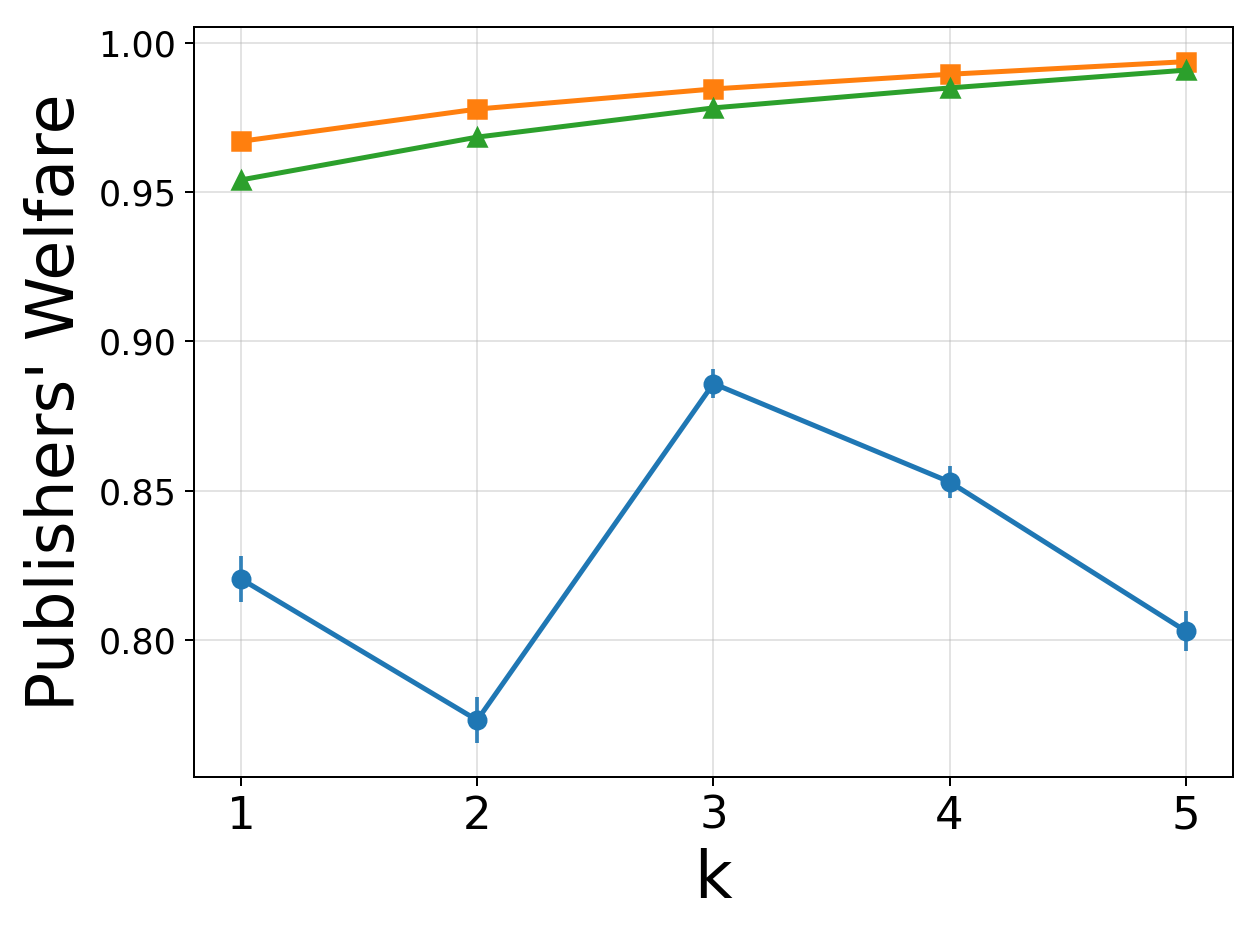}
        
        \caption{PW}\label{fig:welfare_analysis_k_PW}
    \end{subfigure}
    \hfill
    \begin{subfigure}[t]{0.325\columnwidth}
        \centering
        \includegraphics[width=\linewidth]{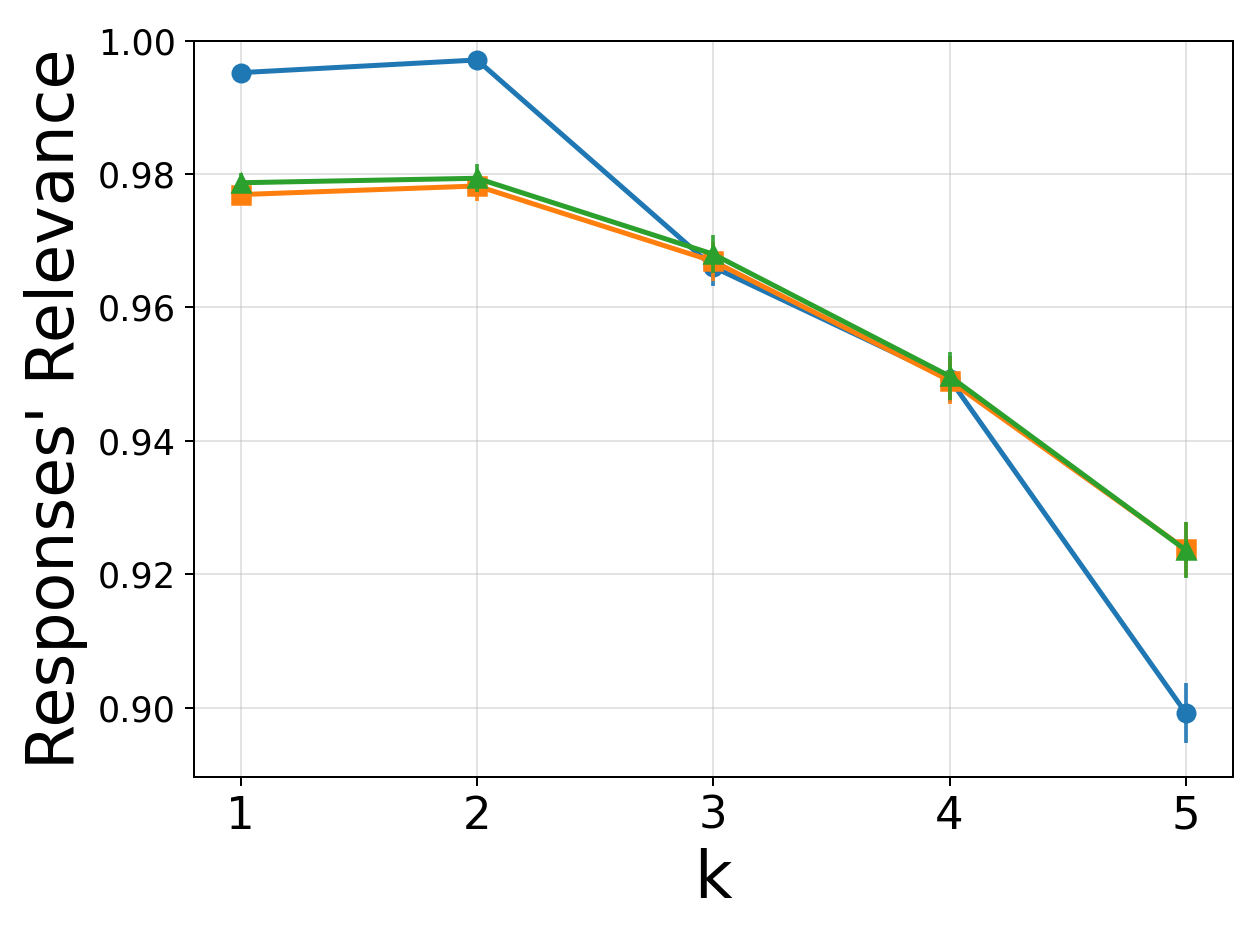}
        
        \caption{RR}\label{fig:welfare_analysis_k_RR}
    \end{subfigure}
    \hfill
    \begin{subfigure}[t]{0.325\columnwidth} 
        \centering
        \includegraphics[width=\linewidth]{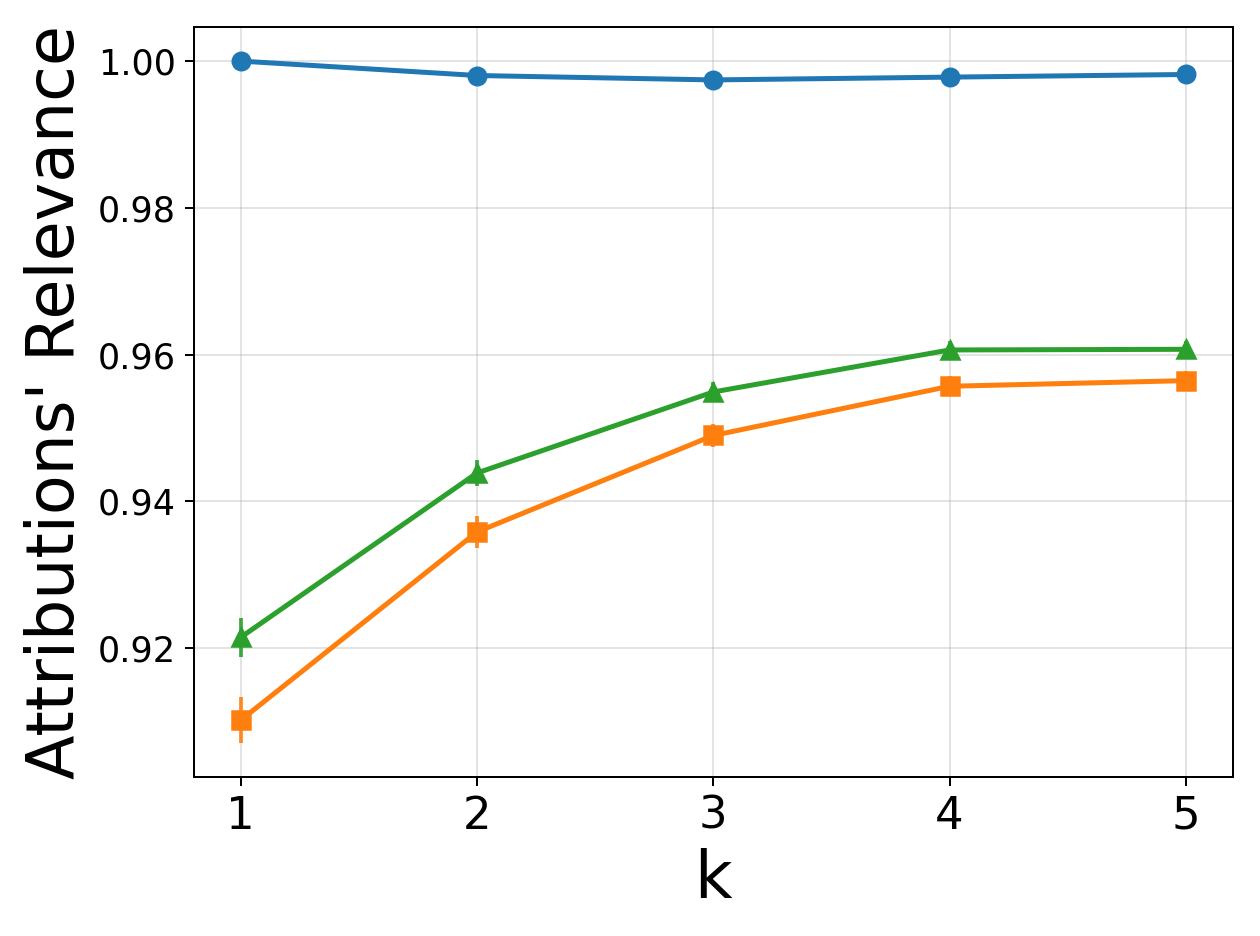}
        
        \caption{AR}\label{fig:welfare_analysis_k_AR}
    \end{subfigure}

    \caption*{\textbf{Effect of the value of} $\boldsymbol{k}$}
    \label{fig:welfare_analysis_k}
\end{subfigure}

    \caption{Welfare analysis of the \PRP{}, softmax, and linear attribution functions for various values of $\alpha$ and \K{}, shown in blue circles, orange squares, and green triangles, respectively.
    PW denotes the publishers' welfare, RR denotes the responses' relevance, and AR denotes the attributions' relevance.}
    \label{fig:welfare_analysis}
\end{figure}

\subsection{Welfare Analysis}
Since, as shown, many significant mechanisms --- e.g., the \PRP{} function and retrieval-based systems (\K{} $<n$) --- induce an unstable ecosystem, we turn to analyzing the ecosystem from the perspective of welfare. 
We focus on the measures defined in Section~\ref{sec: the model}: \emph{publishers’ welfare}, \emph{responses' relevance}, and \emph{attributions' relevance}.
If a simulation converged, we estimate the measures using the strategy profile to which the simulation converged. Otherwise, we average the measures over the last $M=900$ rounds \cite{Omer_Madmon_2025}.

We can see in Figure~\ref{fig:welfare_analysis_alpha_PW} that as $\alpha$ increases, the publishers' welfare under the softmax and linear functions increases. Informally, as publishers gain greater influence over the system's output, they modify their content less.
A likely explanation is that, for larger values of $\alpha$, changes to a document have a stronger effect on the generated response. Thus, in a sense, $\alpha$ may be tuned to motivate publishers to explore new content.
Moreover, the softmax function consistently yields higher publishers' welfare than the linear function, while both induce substantially higher values than those of the \PRP{} function. This observation is due to the discontinuity of the \PRP{} function, forcing publishers to apply extreme modifications to gain exposure. 

We observe in Figures~\ref{fig:welfare_analysis_alpha_RR} and \ref{fig:welfare_analysis_alpha_AR} that as the value of $\alpha$ increases, the relevance of the responses decreases, and the relevance of the attributions increases. That is, as publishers gain influence over the platform's outputs, the responses are less relevant, and, as expected, the attributions are more relevant.
We note that the decrease in responses' relevance is an unwarranted effect of the competition; thus, in a way, $\alpha$ can be used to trade off between the negative and the positive consequences of strategic dynamics.
Additionally, a key observation from Figures~\ref{fig:welfare_analysis_alpha_RR} and \ref{fig:welfare_analysis_alpha_AR} is that the \PRP{} function outperforms the softmax and linear functions in terms of both users' welfare measures. However, we point out that for $\alpha=1$ (i.e., generation is conducted only with publishers' documents), the \PRP{} induces lower responses' relevance than those of the softmax and linear functions.

Figure~\ref{fig:welfare_analysis_k_PW} shows that, under the softmax and linear functions, the publishers' welfare increases with \K{}, while under \PRP{} it exhibits non-monotonic behavior. 
Furthermore, we can see in Figure~\ref{fig:welfare_analysis_k_RR} that increasing \K{} decreases the responses' relevance under all three attribution functions, with the effect being especially pronounced under \PRP{}.
This trend can be attributed to the fact that, as the value of \K{} increases, less relevant documents (with respect to the question) affect generation.
Yet, we point out that increasing the value of \K{} from 1 to 2 increases the responses' relevance.
In contrast, Figure~\ref{fig:welfare_analysis_k_AR} shows that increasing \K{} increases the attributions' relevance under the softmax and linear functions, whereas under \PRP{} it is nearly constant and nearly maximal. 
These observations serve as evidence that unstable mechanisms can yield improved welfare, as we further demonstrate below.

\begin{figure}[t]
    \centering
    \includegraphics[width=0.7\linewidth]{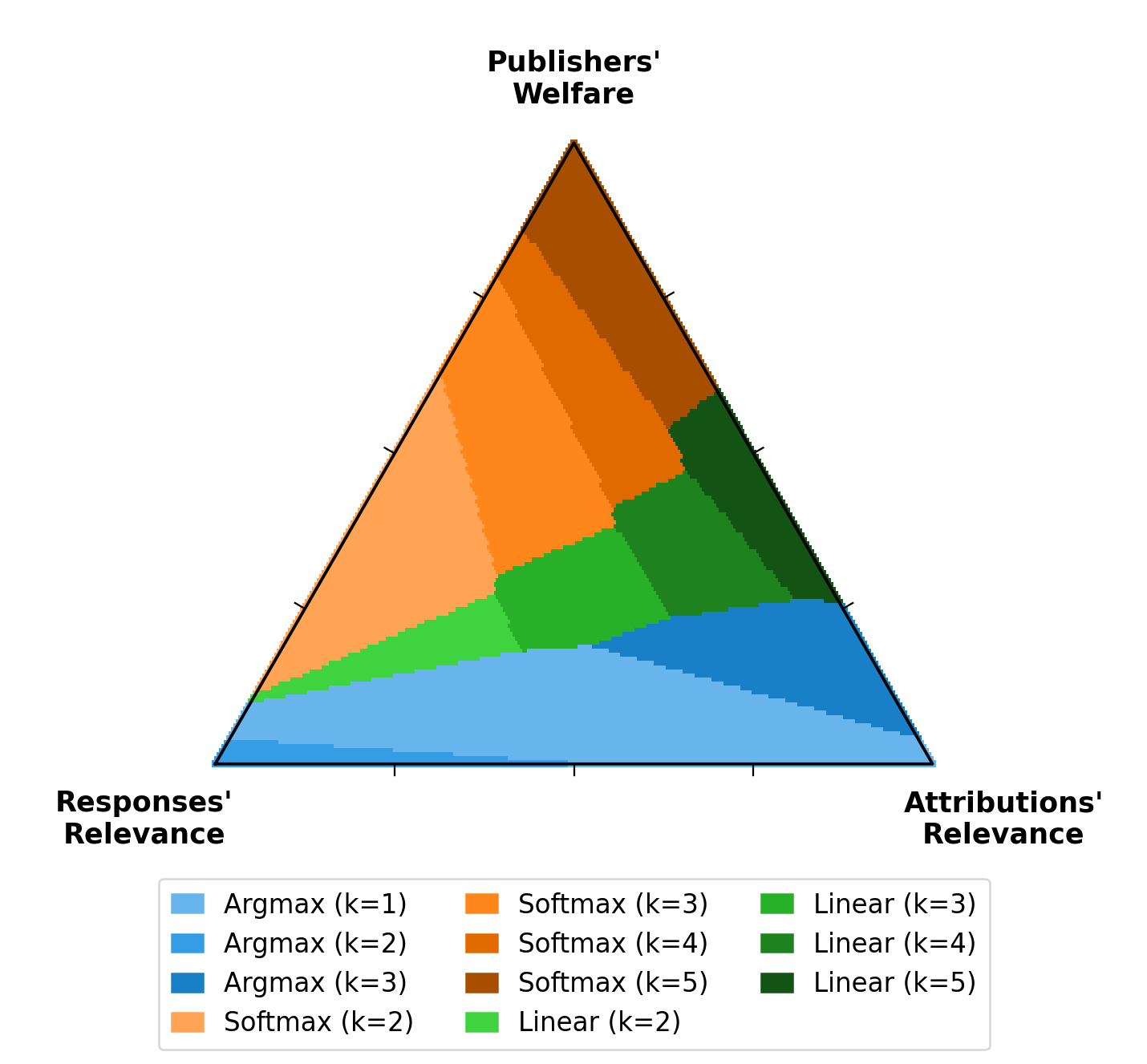}
    \caption{Optimal mechanism as a function of welfare measures distribution (weights). Each point in the ternary diagram represents a distribution over our welfare components: publishers' welfare, responses' relevance, and attributions' relevance. At each point, the color indicates which mechanism maximizes the social welfare. Different colors correspond to different attribution functions, and shades within the same color correspond to different values of \K{}.}
    \label{fig:social_welfare}
\end{figure}

\subsection{Mechanism Design}
Having analyzed welfare under various mechanisms, we now address the fundamental question of \emph{optimal mechanism design} \cite{10.1093/acprof:oso/9780199734023.001.0001}.
Let $f: X \times \Delta^3 \to \mathbb{R}$ be a \textbf{social welfare} function that, given a strategy profile and a distribution (weighting) over the welfare measures used above, assigns a welfare score; specifically, a linear combination of the measures.
We seek to characterize the mechanism that maximizes social welfare for every weighting. Notably, such weighting may vary with the setting, domain, and designer objectives.
As the mechanism space, we consider our three attribution functions and the context size \K{}\footnote{For brevity, we omit $\alpha$ from the mechanism design space.}.

Figure~\ref{fig:social_welfare} presents the optimal mechanism for every distribution over our welfare measures.
The first key observation is that the space is separated into clear regions of distinct optimal mechanism. Each attribution function has a substantial continuous region for which it is part of an optimal mechanism.
In general, mechanisms with the \PRP{} function are optimal when the publishers' welfare is of minor importance, with the softmax function when publishers' welfare is of greater importance, and with the linear function when the weights are more balanced between the welfare measures. 

Additionally, we can see in Figure~\ref{fig:social_welfare} that, in each attribution function region, the optimal value of \K{} varies. For example, as the importance of publishers' welfare decreases and the importance of responses' relevance increases, the \K{} of the optimal mechanism decreases.
We note that the optimal mechanism consists of higher values of \K{} when publishers' welfare is of greater importance and of smaller values of \K{} when responses' and attributions' relevance are of greater importance.
Such an analysis can help a designer to derive guidelines for optimality. For instance, in cases where publishers' welfare is of greater importance, one should include a large number of publishers in generation. 

\section{Concluding Remarks}

This work provides a first step toward understanding the learning dynamics of strategic, attribution-incentivized content creators in GenAI ecosystems. We acknowledge that these ecosystems are still evolving, and that the incentives of their stakeholders remain uncertain. Nevertheless, since GenAI systems increasingly include citations in their responses, a new exposure channel has emerged for content creators.

Our results show that even in complete-information games, convergence to a stable state is a nontrivial (yet achievable) property. This lays the groundwork for future research on learning dynamics in GenAI ecosystems, and in particular on whether less-informed publishers may lead to more stable ecosystems.
In addition, our welfare analysis shows that unstable mechanisms may still have advantages in terms of social welfare. Indeed, the optimal mechanism depends on how social welfare is defined and weighted in the ecosystem. This enables future mechanism-design research aimed at jointly optimizing stability and social welfare.

\bibliography{references}

\clearpage

\appendix

\section{Additional Learning
Dynamics Results} \label{apn:Additional_Learning_Dynamics_Results}
In this section, we provide additional theoretical results regarding stability and learning dynamics.
We consider the model defined in Section~\ref{sec: the model} and the attribution functions described in Section~\ref{sec: ranking functions} (\PRP{}, softmax, and linear). 

\subsection{The Winner Takes All}
As discussed in Section~\ref{sec:PRP}, we first show the shortcomings of the \PRP{} attribution function in terms of ecosystem stability. 
In the following observation, we show that better-response dynamics may not converge in some games under the \PRP{} function.

\begin{observation}\label{obs:2}
For every $0 < \epsilon < \frac{3}{4}$, there exists a generative publishers' game for which there exists an infinite sequence of profitable deviations. That is,
$\varepsilon$-better-response dynamics do not converge.
Specifically, consider the class of games with:
\[
n=3, \quad \text{\di}=1, \quad \lambda=1, \quad x_1^0=x_2^0=x_3^0=\frac12, \quad r=r*, \]
and the generation function $y(x) = y_{1,z}(x)= \frac{1}{n}\sum_{j=1}^n x_j
$ (\K{} $=n$).
Fix any $\rho\in\left(0,\frac14\right)$ and let
$
x^{(0)}=\left(\frac12,\frac12+\rho,\frac12+2\rho\right).
$
Define a sequence $(x^{(t)})_{t\ge 0}$ recursively as follows: at step $t$, the deviating publisher is
$i_t\in\{1,2,3\}$ with cyclic order
$
i_t = 1,2,3,1,2,3,\ldots,
$
and publisher $i_t$ deviates to the midpoint of the other two publishers, i.e.,
\[
x_{i_t}^{(t+1)}=\frac{x_j^{(t)}+x_\ell^{(t)}}{2},
\qquad \{i_t,j,\ell\}=\{1,2,3\}.
\]

Then every deviation in this sequence is profitable. Hence $(x^{(t)})_{t\ge 0}$ is an infinite
better-response path, and in particular the dynamics do not converge.
\end{observation}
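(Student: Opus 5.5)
The proof is by direct verification: I will show by induction that the cyclic midpoint sequence preserves a simple geometric configuration. In that configuration the deviating publisher always receives attribution $0$ before her move and $1$ after it, while her cost stays uniformly small. In this instance $\text{\di}=1$ and $d(a,b)=(a-b)^2$. The response is $y(x)=\frac13(x_1+x_2+x_3)$, and $u_i(x)=r_i^*(x)-(x_i-\tfrac12)^2$.

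\textbf{Key observation.} Suppose one publisher sits exactly at the midpoint of the other two, which are distinct. Then the mean of the three points equals that midpoint, so $y(x)$ coincides with her position. She is the unique closest publisher (distance $0$, against strictly positive distances for the other two), so $\mu(x)$ is her singleton. She gets $r^*=1$ and both extremes get $0$. This applies both to $x^{(0)}$, where publisher $2$ at $\tfrac12+\rho$ is the midpoint, and to every profile produced by a deviation, since the deviator moves precisely to the midpoint of the other two.

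\textbf{Invariant.} I will prove by induction on $t\ge 0$ the following statement. The three coordinates of $x^{(t)}$ are pairwise distinct and lie in $[\tfrac12,\tfrac12+2\rho]$. The publisher strictly between the other two is exactly at their midpoint. The scheduled deviator $i_t$ is one of the two extremes. The base case holds by the choice of $x^{(0)}$: publisher $1$ is the minimum and publisher $2$ is the midpoint.

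For the inductive step, suppose $i_t$ is an extreme and moves to the midpoint of the other two. Those two are distinct, so after the move $i_t$ is strictly between them, sits at their midpoint, and the new gap between adjacent points is half the old outer gap, hence still positive. All points stay in the convex hull of the previous points, so they remain in $[\tfrac12,\tfrac12+2\rho]$. The next deviator $i_{t+1}$ is cyclically different from $i_t$. Since $i_t$ is now the middle point, $i_{t+1}$ must be one of the extremes, which closes the induction.

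\textbf{Profitability.} By the key observation, before her move publisher $i_t$ is an extreme and so has $r^*=0$. After the move she is the midpoint and has $r^*=1$. Her cost $(x_{i_t}-\tfrac12)^2$ lies in $[0,4\rho^2]$ both before and after, so the change in cost is at most $4\rho^2<\tfrac14$ in absolute value. Hence
\[
u_{i_t}(x^{(t+1)})-u_{i_t}(x^{(t)})\ \ge\ 1-4\rho^2\ >\ \tfrac34\ >\ \epsilon .
\]
Every step is therefore an $\epsilon$-better response, and none of the profiles is an $\epsilon$-PNE, since each admits such a deviation. The sequence is thus an infinite $\epsilon$-better-response path, and the dynamics do not converge.

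The only delicate step is the bookkeeping showing that the cyclic order $1,2,3,\dots$ always selects an extreme publisher rather than the current winner. This follows because the last deviator is always the middle one and the cyclic schedule never picks the same publisher twice in a row. One should also check that no ties ever arise. This holds because the gap halves at each step but never reaches zero, so $\mu(x^{(t)})$ is always a singleton.
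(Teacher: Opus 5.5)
Your proposal is correct and follows the paper's argument: the deviator lands exactly on the response and becomes the unique winner, she had attribution $0$ before moving, and her cost stays at most $4\rho^2<\frac14$, so the gain exceeds $\frac34$. Your induction proves the pairwise-distinctness and non-winner facts that the paper only asserts. One harmless slip: the new adjacent gap is half the old adjacent gap (a quarter of the old outer gap), not half the old outer gap, but your argument only uses that it is positive.
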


\begin{proof}
Let $t\ge 0$, and suppose publisher $i=i_t$ deviates while the other two publishers are $j,\ell$.
By construction we have
$
x_i^{(t+1)}=\frac{x_j^{(t)}+x_\ell^{(t)}}{2}.
$
Therefore,
\begin{align*}
y(x^{(t+1)})
=
\frac{x_i^{(t+1)}+x_j^{(t)}+x_\ell^{(t)}}{3}
=
\frac{\frac{x_j^{(t)}+x_\ell^{(t)}}{2}+x_j^{(t)}+x_\ell^{(t)}}{3}
\\
=
\frac{x_j^{(t)}+x_\ell^{(t)}}{2}
=
x_i^{(t+1)}.
\end{align*}
Hence, after the deviation, publisher $i$ is located exactly at the generated response:
\[
d_i^*(x^{(t+1)}) = d\!\left(x_i^{(t+1)},y(x^{(t+1)})\right)=0.
\]
Since the other two publishers are distinct, publisher $i$ is the unique closest publisher to
$y(x^{(t+1)})$, and thus under \PRP{} receives full attribution:
\[
r_i(x^{(t+1)})=1.
\]

Next, note that all coordinates throughout the process remain in the interval
$
\left[\frac12,\frac12+2\rho\right],
$
because each deviation replaces one coordinate by the midpoint of the other two. Since
$\rho<\frac14$, it follows that for every $t$ and every publisher $m$,
$
\left|x_m^{(t)}-\frac12\right|\le 2\rho < \frac12.
$
Therefore, after publisher $i$ deviates, her movement cost is strictly smaller than $1$:
\[
d_i^0(x_i^{(t+1)}) = \left(x_i^{(t+1)}-\frac12\right)^2 \le (2\rho)^2 < 0.25.
\]
Since $\lambda=1$, her post-deviation utility is:
\begin{align*}
u_i(x^{(t+1)})
&=
r_i(x^{(t+1)})-\lambda d_i^0(x_i^{(t+1)}) \\
&=
1-\left(x_i^{(t+1)}-\frac12\right)^2
>0.75.
\end{align*}

On the other hand, before deviating, publisher $i$ is not the winner in the cyclic construction, so
$r_i(x^{(t)})=0$, and therefore
$
u_i(x^{(t)}) = -\left(x_i^{(t)}-\frac12\right)^2 \le 0.
$
Thus:
\[
u_i(x^{(t+1)}) > 0.75 \ge u_i(x^{(t)}),
\]
so the deviation is profitable.

Since this holds at every step, we obtain an infinite sequence of profitable deviations with utility gain of over $0.75$. In particular, for any $0.75>\epsilon>0$, the above is an example of an $\varepsilon$-better-response dynamics that do not converge.
\end{proof}

Observation~\ref{obs:2} shows that, under the \PRP{} function, better-response dynamics may not converge even in a relatively simple publishers' game in the \K{} $=n$ case. In the empirical results in Section~\ref{sec:empirical_results}, we observed the instability of the ecosystem with better-response dynamics for the \K{} $<n$ case.

\subsection{The Softmax Function}
We now show that in the case where \K{} $<n$, the softmax attribution function described in Section~\ref{sec:Softmax} does not induce a potential game. In fact, we show this for a broader family of attribution functions, the class of proportional attribution functions \cite{madmon2025on}, defined as:

\begin{definition}
    We say an attribution function $r$ is \textbf{proportional} if there exists a twice continuously differentiable function $g:[0,1] \to \R_{++}$ such that:
    \begin{equation}
        r_i(x) = \frac{g \bigl (d^*(x_i) \bigr )}{
        \sumj g \bigl (d^*(x_j) \bigr )}
    \end{equation}
\end{definition}

Notably, the softmax function is proportional.

\begin{theorem}\label{theorem: prop not potential}
Consider the class of generative publishers' games with a proportional attribution function $r_i(x)$ and $n> \text{\K} \ge 1$. 

If this attribution function induces a potential game,
then $g$ is constant on $[0,1]$. Hence the function is uniform:
\[
r_i(x)\equiv \frac{1}{n}.
\]
\end{theorem}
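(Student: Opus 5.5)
We plan to use the standard characterization of exact potential games through cross-derivatives. If the game admits an exact potential $\phi$, then every four-cycle of unilateral deviations has zero total utility change. Concretely, for any two publishers $i\neq j$, any actions $x_i,x_i'$ and $x_j,x_j'$, and any fixed $x_{-ij}$,
\[
\bigl[u_i(x_i',x_j)-u_i(x_i,x_j)\bigr]-\bigl[u_i(x_i',x_j')-u_i(x_i,x_j')\bigr]
=
\bigl[u_j(x_i,x_j')-u_j(x_i,x_j)\bigr]-\bigl[u_j(x_i',x_j')-u_j(x_i',x_j)\bigr].
\]
The cost terms $\lambda_i d_i^0$ depend only on the player's own action, so they cancel. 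The condition therefore involves only $r_i$ and $r_j$. In a region where the retrieved set $S_k$ is locally constant and all maps are smooth (recall $g\in C^2$), this becomes $\partial_{x_j}\partial_{x_i} r_i=\partial_{x_i}\partial_{x_j} r_j$. Our aim is a configuration where this identity forces $g'\equiv 0$.

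The key idea is to exploit $k<n$. We choose a profile in which publisher $j$ lies outside $S_k$, strictly farther from $x^*$ than the $k$-th closest, and publisher $i$ lies inside $S_k$, strictly closer. Small perturbations then preserve $S_k$, so $y$ does not depend on $x_j$ locally. Moving $j$ then affects only $d^*(x_j)$, while moving $i$ shifts both $d^*(x_i)$ and $y$, and through $y$ every $d^*(x_\ell)$.

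Write $g_\ell=g(d^*(x_\ell))$ and $G=\sum_\ell g_\ell$. We first show that $g'$ vanishes at a single point. Under $k<n$ we also have $\alpha\in[0,1]$ free, and we take $\alpha=0$. Then $y=z$ is constant, publisher $i$'s deviation affects only $d^*(x_i)$, and the game reduces to a static proportional contest. There $r_i=g_i/G$, and the mixed partials give
\[
\partial_j\partial_i r_i=-\frac{g_i'g_j'}{G^2}\cdot(\text{sign structure}),\qquad \partial_i\partial_j r_j=\text{the symmetric expression}.
\]
These agree, so $\alpha=0$ cannot force anything. The obstruction must come from $y$, so we need $\alpha>0$.

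We then compute with $\alpha>0$. Here $\partial_{x_i} y=\frac{\alpha}{k}I$ and $\partial_{x_j}y=0$, so
\[
\partial_{x_i} d^*(x_\ell)=-\tfrac{2}{p}\cdot\tfrac{\alpha}{k}(x_\ell-y)\quad(\ell\neq i),\qquad \partial_{x_j} d^*(x_\ell)=\tfrac{2}{p}(x_j-y)\,\mathds{I}\{\ell=j\}.
\]
Next we compute $\partial_{x_j}\partial_{x_i}(g_i/G)$ and $\partial_{x_i}\partial_{x_j}(g_j/G)$ and subtract. The terms symmetric in $i,j$ cancel, as in the contest case. What survives are terms in which the $y$-shift from $x_i$ acts on $g_j$ through the factor $(x_j-y)$ and on the other players' $g_\ell$.

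We then pick a geometry that simplifies these survivors. Take dimension $p=1$, place all publishers other than $j$ exactly at $y$ so that their derivative factors vanish, and let $x_j$ sit at distance $\delta$ with $d^*(x_j)=t$ arbitrary in an interval. We must check feasibility: all non-$j$ publishers coincide and lie inside $S_k$ (ties are broken generically, or $x^*$ is chosen near them), and $j$ is outside. The difference of mixed partials should then reduce to a nonzero multiple of $\frac{\alpha}{k}\,g'(t)\,G'$-type terms. A cleaner outcome would be $\frac{\alpha}{k}\cdot\frac{(n-1)g(0)}{G^2}\bigl(g'(t)+2t g''(t)\bigr)$ plus a term proportional to $g'(t)^2$, which must vanish for all $t$ in an interval.

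Finally we solve the resulting ODE. The intended conclusion is $g'\equiv 0$ on an interval, extended to all of $[0,1]$ by varying $t$ (using $x^*,z$ freely, since the distance bound is 1) and by continuity of $g'$. Hence $g$ is constant and $r_i\equiv\frac1n$.

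The main obstacle is this last algebraic step: arranging the configuration so that the surviving asymmetric term is genuinely a nonvanishing factor times $g'(t)$, rather than an ODE such as $g'+2tg''=0$ that admits nonconstant solutions like $\sqrt{t}$. If that happens, we will use a second configuration, for example with the other publishers not at $y$ but at distance $s$, to obtain an independent relation in $(s,t)$ that rules out every nonconstant solution.
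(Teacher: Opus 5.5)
\textbf{Verdict: the proposal has a genuine gap.} It rests on a false symmetry claim, uses a configuration that is infeasible for $1\le k\le n-2$, and leaves the $k=n-1$ case unfinished.

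\textbf{The false symmetry claim.} You assert that in a static proportional contest the cross-partials coincide. They do not. Suppose $d^*(x_i)$ and $d^*(x_j)$ each depend only on that player's own action. This holds in your $\alpha=0$ instance, and also for two publishers outside $S_k$ when $\alpha=1$. Set $f_\ell=g(d^*(x_\ell))$ and $G=\sum_\ell f_\ell$. Then $\partial_{x_j}\partial_{x_i} r_i=f_i'f_j'(2f_i-G)/G^3$ and $\partial_{x_i}\partial_{x_j} r_j=f_i'f_j'(2f_j-G)/G^3$. Their difference, $2f_i'f_j'(f_i-f_j)/G^3$, does not vanish in general.

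This asymmetry is exactly how the paper handles $1\le k\le n-2$. It takes $p=1$, $x^*=0$, $\alpha=1$, and puts publishers $1,\dots,k$ at $\varepsilon$, so $y=\varepsilon$. Publishers $k+1,k+2$ sit outside $S_k$ at squared distances $s,t$ from $y$. Their moves do not shift $y$, so the mixed-partial condition gives $g'(s)g'(t)(g(s)-g(t))=0$ for all $s,t\in(0,1)$. If $g'(s_0)\neq 0$, then $g$ would be constant near $s_0$, a contradiction. If $\alpha=0$ is admitted in the class, your own $\alpha=0$ instance with $z=0$ already yields the same identity.

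\textbf{The configuration for $k\le n-2$.} By concluding that the obstruction must come from $y$, you discard this easy route. The geometry you substitute does not work in this range. Placing all $n-1\ge k+1$ publishers other than $j$ at one point ties the $k$-th and $(k+1)$-th closest documents. Then $S_k$ is not constant on any neighborhood, and $y$ has a kink in $x_i$ there. No tie-breaking rule supplies the open set on which the mixed partials are required.

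\textbf{The unfinished $k=n-1$ case.} Here your geometry is right; it is the paper's second case. However, you stop before the decisive computation and only guess its outcome. The paper makes the computation tractable by moving along $x_1=\varepsilon+kq$, $x_n=\varepsilon+q+b$. Then $y=\varepsilon+q$ and all attribution scores are even in $q$. At $q=0$ the mixed-partial identity collapses to $-\frac{1}{k}H''(b)=H''(b)$, with $H(b)=g(0)/(kg(0)+g(b^2))$, so $H$ is affine on $(0,1)$.

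The non-constant solutions you fear, such as $\sqrt{t}$, are excluded by the hypothesis $g\in C^2[0,1]$, not by a second configuration. Since $g'$ is bounded near $0$, $H'(b)=-2b\,g(0)g'(b^2)/(kg(0)+g(b^2))^2\to 0$ as $b\to 0^+$. So the affine $H$ has zero slope and is constant. Hence $g$ is constant on $(0,1)$, and by continuity on $[0,1]$. The proposal still needs both this boundary argument and a valid construction for $k\le n-2$.
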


\begin{proof}
Assume by contradiction that there exists a non-constant twice continuously
differentiable function $g:[0,1]\to\mathbb{R}_{++}$ such that the
proportional attribution rule
\[
r_i(x)
=
\frac{g(d_i^\ast(x))}
{\sum_{j=1}^n g(d_j^\ast(x))}
\]
induces an exact potential game for every instance in the class.

We derive a contradiction by considering specific instances. Fix
$\text{\di}=1$, $x^\ast=0$, and
$y(x)
=
y_{1,z}(x)$. 
Fix also an arbitrary cost factor $\lambda>0$ and arbitrary initial documents.

On every open set in which the set of publishers' documents used in generation is fixed --- i.e., no equal distances from $x^*$ for the \K{}-th and \K{}$+1$-th distant documents --- the utility functions are twice continuously differentiable.
Therefore, by Theorem 4.5 of \citet{monderer1996potential}, a necessary condition for the induced game to be an exact potential game is
\[
\frac{\partial^2 u_i(x)}
{\partial x_j\partial x_i}
=
\frac{\partial^2 u_j(x)}
{\partial x_i\partial x_j}.
\]
for every pair of distinct publishers $i,j\in N$.

The content-modification cost of publisher $i$ depends only on $x_i$.
Therefore, it does not contribute to mixed partial derivatives with
respect to the actions of two distinct publishers.

We distinguish between two cases.

\paragraph{Case 1: $1\leq \text{\K{}}\leq n-2$.}

Fix $\varepsilon,\rho>0$ such that
\[
0<\varepsilon<1
\qquad\text{and}\qquad
0<\rho<1-\varepsilon.
\]
Fix arbitrary constants
$\bar{x}_{\text{\K{}}+3},\ldots,\bar{x}_n
\in(\varepsilon+\rho,1),$ 
and consider profiles of the form:
\[
x_1=\cdots=x_{\text{\text{\K{}}}}=\varepsilon,
\qquad
x_{\text{\text{\K{}}}+1},x_{\text{\text{\K{}}}+2}
\in(\varepsilon+\rho,1),
\]
\[
x_\ell=\bar{x}_\ell
\qquad
\text{for every }
\ell\in\{\text{\K{}}+3,\ldots,n\}.
\]
When $n=\text{\K{}}+2$, the last condition is vacuous.

Since $x^\ast=0$, publishers $1,\ldots,\text{\K{}}$ are exactly the
$\text{\K{}}$ publishers closest to the user's question. Therefore,
\[
y(x)
=
\frac{1}{\text{\K{}}}
\sum_{j=1}^{\text{\K{}}}x_j
=
\varepsilon.
\]
Consequently,
$d_j^\ast(x)=0 
\text{ for every }j\in\{1,\ldots,\text{\K{}}\}.$
Define
\[
s
:=
(x_{\text{\text{\K{}}}+1}-\varepsilon)^2
\qquad\text{and}\qquad
t
:=
(x_{\text{\text{\K{}}}+2}-\varepsilon)^2.
\]
For every $\ell\in\{\text{\K{}}+3,\ldots,n\}$, we have
\[
d_\ell^\ast(x)
=
(\bar{x}_\ell-\varepsilon)^2,
\]
which is constant with respect to $x_{\text{\K{}}+1}$ and
$x_{\text{\K{}}+2}$.

Define
\[
A
:=
\text{\K{}}g(0)
+
\sum_{\ell=\text{\K{}}+3}^n
g\left((\bar{x}_\ell-\varepsilon)^2\right),
\]
where the sum is interpreted as zero when $n=\text{\K{}}+2$.
Since $g$ is strictly positive, $A>0$.

The denominator of the proportional attribution rule is
\[
D(s,t)
:=
A+g(s)+g(t).
\]
Hence,
\[
r_{\text{\K{}}+1}(x)
=
\frac{g(s)}{D(s,t)}
\qquad\text{and}\qquad
r_{\text{\K{}}+2}(x)
=
\frac{g(t)}{D(s,t)}.
\]

Direct differentiation yields
\begin{align*}
\frac{\partial^2 u_{\text{\K{}}+1}(x)}
{\partial x_{\text{\K{}}+2}\partial x_{\text{\K{}}+1}}
&=
\frac{
4(x_{\text{\K{}}+1}-\varepsilon)
(x_{\text{\K{}}+2}-\varepsilon)
g'(s)g'(t)
}{
D(s,t)^3
} \\
&\cdot
\left(
g(s)-A-g(t)
\right),
\end{align*}
and
\begin{align*}
\frac{\partial^2 u_{\text{\K{}}+2}(x)}
{\partial x_{\text{\K{}}+1}\partial x_{\text{\K{}}+2}}
&=
\frac{
4(x_{\text{\K{}}+1}-\varepsilon)
(x_{\text{\K{}}+2}-\varepsilon)
g'(s)g'(t)
}{
D(s,t)^3
} \\
&\cdot
\left(
g(t)-A-g(s)
\right).
\end{align*}

By the mixed-partial condition, these expressions must be equal. Subtracting the second expression from the first gives
\[
\frac{
8(x_{\text{\K{}}+1}-\varepsilon)
(x_{\text{\K{}}+2}-\varepsilon)
g'(s)g'(t)
\left(g(s)-g(t)\right)
}{
D(s,t)^3
}
=
0.
\]
Since
\[
x_{\text{\K{}}+1}>\varepsilon,
\qquad
x_{\text{\K{}}+2}>\varepsilon,
\qquad
D(s,t)>0,
\]
we obtain
\[
g'(s)g'(t)\left(g(s)-g(t)\right)=0.
\]

We next show that this identity holds
for every $s,t\in(0,1)$. Fix arbitrary $s,t\in(0,1)$. Choose
$\varepsilon>0$ sufficiently small such that
\[
\varepsilon+\max\{\sqrt{s},\sqrt{t}\}<1,
\]
and choose
\[
0<\rho<\min\{\sqrt{s},\sqrt{t}\}.
\]
Setting
\[
x_{\text{\K{}}+1}=\varepsilon+\sqrt{s}
\qquad\text{and}\qquad
x_{\text{\K{}}+2}=\varepsilon+\sqrt{t}
\]
produces the desired values of $s$ and $t$. Thus,
the identity holds for every
$s,t\in(0,1)$.

Suppose that there exists $s_0\in(0,1)$ such that $g'(s_0)\neq0$.
Since $g'$ is continuous, there exists an open interval
$I\subseteq(0,1)$ containing $s_0$ such that
\[
g'(s)\neq0
\qquad
\text{for every }s\in I.
\]
For every $s,t\in I$, the identity implies
\[
g(s)=g(t).
\]
Therefore, $g$ is constant on $I$, contradicting $g'(s_0)\neq0$.
It follows that
\[
g'(s)=0
\qquad
\text{for every }s\in(0,1).
\]
Hence, $g$ is constant on $[0,1]$, contradicting the assumption that
$g$ is non-constant.

\paragraph{Case 2: $\text{\K{}}=n-1$.}

In this case, there is only one publisher's document that is not used in generation, so the
construction used in the previous case cannot be applied. 

Let publisher $n$ be the most distant publisher from the question. Fix an
arbitrary $b_0\in(0,1)$ and choose
$\varepsilon\in(0,1-b_0)$. There exist an open interval
$B\subseteq(0,1-\varepsilon)$ containing $b_0$ and a constant
$\delta>0$ such that, for every $b\in B$ and
$q\in(-\delta,\delta)$, the profile defined below belongs to
$(0,1)^n$ and publishers $1,\ldots,\text{\K{}}$ are exactly the
$\text{\K{}}$ publishers closest to $x^\ast=0$.

For every $(q,b)\in(-\delta,\delta)\times B$, define the strategy
profile $x(q,b)\in X$ by
\[
x_1(q,b)
=
\varepsilon+\text{\K{}}q, \qquad
x_n(q,b)
=
\varepsilon+q+b,
\]
\[
x_j(q,b)
=
\varepsilon
\qquad
\text{for every }
j\in\{2,\ldots,\text{\K{}}\}.
\]
When $\text{\K{}}=1$, the condition concerning publishers
$2,\ldots,\text{\K{}}$ is vacuous.

Indeed, for sufficiently small $\delta$, we have
\[
x_n(q,b)>x_j(q,b)
\qquad
\text{for every }
j\in\{1,\ldots,\text{\K{}}\}.
\]
Since all coordinates are positive and $x^\ast=0$, publisher $n$ is
farther from the question than publishers $1,\ldots,\text{\K{}}$.
Consequently, the set of publishers used in generation is fixed
throughout this neighborhood.

For every such profile, the generated response is
\[
y(x(q,b))
=
\frac{1}{\text{\K{}}}
\sum_{j=1}^{\text{\K{}}}x_j(q,b).
\]
Substituting the definition of $x(q,b)$ gives
\begin{align*}
y(x(q,b))
&=
\frac{1}{\text{\K{}}}
\left(
\varepsilon+\text{\K{}}q
+
(\text{\K{}}-1)\varepsilon
\right)
=
\varepsilon+q.
\end{align*}

Therefore, the distances from the generated response are
\[
d_1^\ast(x(q,b))
=
\left(
x_1(q,b)-y(x(q,b))
\right)^2
=
(\text{\K{}}-1)^2q^2,
\]
\[
d_j^\ast(x(q,b))
=
q^2
\qquad
\text{for every }
j\in\{2,\ldots,\text{\K{}}\},
\]
\[
d_n^\ast(x(q,b))
=
\left(
x_n(q,b)-y(x(q,b))
\right)^2
=
b^2.
\]

To simplify notation, define the restrictions of the attribution
functions to this family of profiles by
\[
R_i(q,b)
:=
r_i(x(q,b))
\qquad
\text{for every }i\in N.
\]
Notice that $r_i$ remains a function of strategy profiles. The function
$R_i$ is merely the composition of $r_i$ with the profile-valued map
$(q,b)\mapsto x(q,b)$.

Define
\[
D(q,b)
:=
g\left((\text{\K{}}-1)^2q^2\right)
+
(\text{\K{}}-1)g(q^2)
+
g(b^2).
\]
It follows that
\[
R_1(q,b)
=
\frac{
g\left((\text{\K{}}-1)^2q^2\right)
}{
D(q,b)
}
\qquad
R_n(q,b)
=
\frac{
g(b^2)
}{
D(q,b)
}.
\]

At $q=0$, define
\[
H(b)
:=
R_1(0,b)
=
\frac{
g(0)
}{
\text{\K{}}g(0)+g(b^2)
}.
\]
At the profile $x(0,b)$, every publisher used in the generation process has a distance of zero from the generated response. Hence, all
$\text{\K{}}$ publishers included in the context receive the same
attribution score $H(b)$. Since the attribution scores sum to one,
\[
R_n(0,b)
=
1-\text{\K{}}H(b).
\]

Both $R_1(q,b)$ and $R_n(q,b)$ are even functions of $q$, because
$q$ appears only through $q^2$. Therefore,
\[
\frac{\partial R_1}{\partial q}(0,b)
=
0
\qquad\text{and}\qquad
\frac{\partial R_n}{\partial q}(0,b)
=
0.
\]
Differentiating these identities with respect to $b$ yields
\[
\frac{\partial^2 R_1}
{\partial b\partial q}(0,b)
=
0
\qquad\text{and}\qquad
\frac{\partial^2 R_n}
{\partial b\partial q}(0,b)
=
0.
\]

We now relate derivatives with respect to $q$ and $b$ to derivatives
with respect to the original actions $x_1$ and $x_n$. On the
two-dimensional family of profiles defined above we get
\[
q
=
\frac{x_1-\varepsilon}{\text{\K{}}},
\qquad
b
=
x_n-\varepsilon
-
\frac{x_1-\varepsilon}{\text{\K{}}}.
\]
Consequently, the chain rule gives
\[
\frac{\partial}{\partial x_1}
=
\frac{1}{\text{\K{}}}
\left(
\frac{\partial}{\partial q}
-
\frac{\partial}{\partial b}
\right),
\qquad
\frac{\partial}{\partial x_n}
=
\frac{\partial}{\partial b}.
\]

The content-modification cost of each publisher depends only on that
publisher's own action. Therefore, it does not contribute to mixed
partial derivatives with respect to $x_1$ and $x_n$.

Evaluating the mixed partial derivative of publisher $1$'s utility at
the profile $x(0,b)$ gives
\begin{align*}
\frac{\partial^2 u_1}
{\partial x_n\partial x_1}(x(0,b))
&=
\frac{1}{\text{\K{}}}
\left(
\frac{\partial^2 R_1}
{\partial b\partial q}(0,b)
-
\frac{\partial^2 R_1}
{\partial b^2}(0,b)
\right)
\\
&=
-\frac{1}{\text{\K{}}}H''(b).
\end{align*}

Similarly, the mixed partial derivative of publisher $n$'s utility is
\begin{align*}
\frac{\partial^2 u_n}
{\partial x_1\partial x_n}(x(0,b))
&=
\frac{1}{\text{\K{}}}
\left(
\frac{\partial^2 R_n}
{\partial q\partial b}(0,b)
-
\frac{\partial^2 R_n}
{\partial b^2}(0,b)
\right)
\\
&=
-\frac{1}{\text{\K{}}}
\frac{\mathrm{d}^2}{\mathrm{d}b^2}
\left(
1-\text{\K{}}H(b)
\right)
\\
&=
H''(b).
\end{align*}

As stated above, since the induced game is assumed to be an exact potential game, the
mixed-partial condition implies
\[
\frac{\partial^2 u_1}
{\partial x_n\partial x_1}(x(0,b))
=
\frac{\partial^2 u_n}
{\partial x_1\partial x_n}(x(0,b)).
\]
Therefore,
\[
-\frac{1}{\text{\K{}}}H''(b)
=
H''(b).
\]
Since $\K{}\geq1$, we obtain
$H''(b)=0.$
The point $b_0\in(0,1)$ was arbitrary. Thus,
\[
H''(b)=0
\qquad
\text{for every }b\in(0,1),
\]
and $H$ is affine on $(0,1)$.

Moreover,
\[
H'(b)
=
-
\frac{
2b\,g(0)g'(b^2)
}{
\left(
\text{\K{}}g(0)+g(b^2)
\right)^2
}.
\]
Since $g'$ is continuous on $[0,1]$,
$\lim_{b\to0^+}H'(b)=0.$
Because $H$ is affine on $(0,1)$, its derivative is constant.
Consequently,
\[
H'(b)=0
\qquad
\text{for every }b\in(0,1).
\]
Hence, $H$ is constant on $(0,1)$.

Finally,
\[
H(b)
=
\frac{
g(0)
}{
\text{\K{}}g(0)+g(b^2)
}.
\]
Since $g(0)>0$, the constancy of $H$ implies that $g(b^2)$ is
constant for every $b\in(0,1)$. Since $b^2$ ranges over $(0,1)$,
$g$ is constant on $(0,1)$. By continuity, $g$ is constant on
$[0,1]$, contradicting the assumption that $g$ is non-constant.
\end{proof}

Theorem \ref{theorem: prop not potential} implies, in particular, that any proportional attribution function --- e.g., the softmax function --- does not induce a potential game.
One might naturally question whether the convergence of better-response dynamics under the softmax attribution function is still guaranteed, despite the game not being a potential game. However, as we showed in the main paper (Observation~\ref{obs:3}), this is not the case.

\subsection{The Linear Relative Relevance Function}
Here, we show an example of the non-convergence of better-response dynamics under the linear function in the \K{} $<n$ case. Recall that in Section~\ref{linear_theory} we proved that, in this setting, the game is potentially not a potential game.
The example below establishes the instability of mechanisms with the linear attribution function and generation based on \K{} $<n$ publishers.

\begin{observation}\label{obs:example_for_non_conv_with_linear_k<n}
There exists a generative publishers' game with \K{} $<n$ and the linear
attribution function for which better-response dynamics need not converge.
In particular, there exists an infinite $\varepsilon$-better-response path
for some $\varepsilon>0$.
\end{observation}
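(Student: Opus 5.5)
The plan is to give an explicit finite cycle of strictly profitable unilateral deviations in a small one-dimensional instance. Since every gain in the cycle is bounded below by some $\varepsilon>0$, repeating the cycle forever yields an infinite $\varepsilon$-better-response path, as in Observation~\ref{obs:2}.

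\textbf{Why the cycle must use set switches.} Theorem~\ref{thm:linear-kn-potential} shows that with \K{} $=n$ the linear rule induces an exact potential game. Moreover, on every open region where the retrieved set $S_k$ is fixed, the utilities coincide with those of a smooth game of the same algebraic form. So any cycle must cross boundaries where the retrieval set changes. There, $y(x)$ jumps discontinuously, while the proof of Theorem~\ref{teo:linear_not_converge_for_K-smaller_than_n} indicates there is no global potential.

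\textbf{The instance.} I would take $\text{\di}=1$, $\alpha=1$ (so $z$ is irrelevant), $m=\frac1n$, and $x^*=\frac12$, and work in the offsets $o_i:=x_i-\frac12$. I would make the movement cost negligible: either all $x_0^i=\frac12$ with a tiny $\lambda$, or $\lambda$ small enough that $\lambda\cdot d^0_i\le\lambda$ is dominated by the attribution gains, which are of fixed order. The game is then essentially $u_i\approx\hat r_i$.

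\textbf{Why not two players.} With $n=2$ and \K{} $=1$, the retrieved publisher sits exactly at $y$, and her utility is $\frac12+m(o_i-o_j)^2$. The other's utility is $\frac12-m(o_i-o_j)^2$, so the non-retrieved player only ever wants to move toward the response. A quick check suggests this forces offsets to shrink, with no cycle. I would therefore go to $n=3$.

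\textbf{The three-player mechanism.} Take $n=3$ and \K{} $=1$ (or \K{} $=2$ as a fallback). Now a non-retrieved publisher $j$ can profit in two ways. First, she can move relative to $y$ while the other non-retrieved publisher stays farther away, since $\nu_j$ also rewards being closer than the average. Second, she can enter the retrieval set by becoming closest to $x^*$. Entering moves $y$ to her own document, which gives her $d^*=0$ and instantly pushes both others far from $y$. This jump is exactly where the linear structure breaks down.

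\textbf{The rotation.} I would build a symmetric three-phase rotation in the spirit of Observation~\ref{obs:2}. In a profile like $(o_1,o_2,o_3)=(\delta,a,-a)$ with $0<\delta<a$, publisher $1$ is retrieved. The deviating publisher (say $2$) jumps to offset $-\delta'$ with $\delta'<\delta$, chosen on the far side from the others, so that she becomes retrieved and maximizes the distances of the other two from her. Next, the publisher who was just displaced (publisher $1$) is now far from $y$ with small $\nu_1$. Her best profitable move is to relocate to a large offset such as $a$ on the side opposite the current response. This restores a relabeled copy of the starting profile $(\ast,a,-a)$ pattern, after which the roles rotate $1\to2\to3$.

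\textbf{The verification step.} For each of the (at most six) deviations in one period, I would write $d^*$ for all three players before and after and compute $\Delta\hat r_i=m\,\Delta\nu_i$. I would then check that $\Delta\hat r_i-\lambda\,\Delta d_i^0>\varepsilon$ for explicit numbers, for instance $\delta=0.05$, $\delta'=0.04$, $a=0.4$, or whatever a short search returns.

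\textbf{Main obstacle.} The hard step is the ``return'' moves. Moving away from the current response is not automatically profitable under $\nu_i$, because it raises $d^*(x_i)$. It is profitable only if it is coupled with a change in who is retrieved, which moves $y$ itself, or if it reduces her relative disadvantage against the third publisher. If the \K{} $=1$ rotation fails to close, I would first try \K{} $=2$ with $n=3$. In that setting, the deviating publisher can swap into the top-2 set and drag the centroid $y$ toward herself and away from a rival. There, the midpoint trick of Observation~\ref{obs:2} (deviating so that $y$ lands on one's own document) should supply the needed discontinuous gains. I would then tune the offsets numerically and certify each inequality by exact arithmetic.
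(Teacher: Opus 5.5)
Your proposal does not yet contain a proof. The only concrete construction fails, and the fallback ($k=2$, ``tune numerically'') is never instantiated.

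\textbf{Where the $k=1$ rotation breaks.} After publisher $2$ moves to offset $-\delta'$, she is the retrieved publisher, so $y=x_2$. Publisher $1$ then moves from offset $\delta$ to $a$ and stays non-retrieved. Hence $y$, $d^*(x_2)$ and $d^*(x_3)$ are unaffected, and $\nu_1$ changes only through $-d^*(x_1)$, which goes from $-(\delta+\delta')^2$ to $-(a+\delta')^2$. With your numbers, $\nu_1$ falls from $0.0567$ to $-0.1288$, so the ``return'' move is a strict loss. You flag this as the main obstacle, but it is structural, not a technicality. With $k=1$, a publisher who stays outside the retrieval set has attribution depending on her own document only through $-d^*(x_i)$. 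Moving away from $y$ can therefore only lower her attribution, and your rotation cannot close as designed.

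\textbf{The premise that misdirects the search.} You argue that cycles must switch retrieval sets, but this is false. On an open region where $S_k$ is fixed, the game does \emph{not} have the structure of Theorem~\ref{thm:linear-kn-potential}: there $y=\frac{1}{k}\sum_{j\in S_k}x_j$ does not depend on the non-retrieved publishers. The proof of Theorem~\ref{teo:linear_not_converge_for_K-smaller_than_n} computes, on exactly such a region, $\partial^2u_1/\partial x_{k+1}\partial x_1=-\frac{2m}{k(n-1)}\neq\frac{2m}{k}=\partial^2u_{k+1}/\partial x_1\partial x_{k+1}$. So there is no local potential, and no reason a cycle must cross a boundary.

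\textbf{The missing idea.} Exploit this asymmetric externality. A retrieved publisher gains by dragging $y$ away from a non-retrieved rival, whose $d^*$ enters her $\nu$ with weight $\frac{1}{n-1}$. The rival cannot move $y$, so she chases it.

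The paper's eight-profile cycle uses $n=3$, $p=1$, $k=2$, $x^*=x^0_i=\frac{1}{2}$, $\lambda=\frac{1}{1000}$, $m=\frac{1}{3}$, and never leaves $S_2=\{2,3\}$. Publisher $1$ alternates between $0$ and $1$ and is always farthest from $x^*$. Publishers $2$ and $3$ step the centroid away from her, and she then jumps to the other endpoint.

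The same chase also works with your $k=1$. Take $x^*=x^0_i=\frac{1}{2}$ and the path $(0.1,0,0)\to(0.9,0,0)\to(0.9,1,0)\to(0.9,1,1)\to(0.1,1,1)\to(0.1,0,1)\to(0.1,0,0)$ with deviators $1,2,3,1,2,3$. Along it:
\begin{itemize}
\item publisher $1$ stays uniquely retrieved;
\item every movement cost is unchanged;
\item each step raises the deviator's $\nu$ by $0.8$.
\end{itemize}
This gives an infinite $\varepsilon$-better-response path for any $\varepsilon<0.8m$ and any $\lambda$.
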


\begin{proof}
Consider a generative publishers' game instance with:
\[
    n=3,\quad \text{\di{}}=1,\quad x^*=\frac12, \quad x_i^0=\frac12,\quad
    \lambda=\frac{1}{1000},
\]
the generation function $y(x) = y_{1,z}(x)= \frac{1}{2}\sum_{j=1}^2 x_j$ (\K{} $=2$), and the linear attribution function $r=\hat{r}$ with slope $m=\frac13$.

Consider the following eight profiles:
\[
\begin{aligned}
A&=\left(0,\frac15,\frac15\right),&
B&=\left(0,\frac25,\frac15\right),&
C&=\left(0,\frac35,\frac15\right),\\
D&=\left(0,\frac45,\frac15\right), &
E&=\left(0,\frac45,\frac25\right),&
F&=\left(1,\frac45,\frac25\right),\\
G&=\left(1,\frac15,\frac25\right),&
H&=\left(1,\frac15,\frac15\right).
\end{aligned}
\]
We claim that the following cycle is a better-response path:
\[
    A \xrightarrow{2} B
    \xrightarrow{2} C
    \xrightarrow{2} D
    \xrightarrow{3} E
    \xrightarrow{1} F
    \xrightarrow{2} G
    \xrightarrow{3} H
    \xrightarrow{1} A .
\]

For each step, the deviating publisher's increase in relative relevance
and utility is as follows:
\[
\begin{array}{c|c|c|c}
\text{Step} & \text{Deviator} &
\Delta \nu_i &
\Delta u_i \\ \hline
A\to B & 2 & \frac{1}{50} & \frac{253}{37500} \\
B\to C & 2 & \frac{1}{50} & \frac{1}{150} \\
C\to D & 2 & \frac{1}{50} & \frac{247}{37500} \\
D\to E & 3 & \frac{2}{25} & \frac{1003}{37500} \\
E\to F & 1 & \frac{1}{5} & \frac{1}{15} \\
F\to G & 2 & \frac{9}{50} & \frac{3}{50} \\
G\to H & 3 & \frac{2}{25} & \frac{997}{37500} \\
H\to A & 1 & \frac{3}{5} & \frac{1}{5}
\end{array}
\]
All utility gains are strictly positive. Therefore each step is a better
response. Moreover,
\[
    \min \Delta u_i=\frac{247}{37500}>0.
\]
Hence, for every
\[
    0<\varepsilon<\frac{247}{37500},
\]
each step in the cycle is an $\varepsilon$-better response. Repeating the
cycle indefinitely yields an infinite $\varepsilon$-better-response path,
so the dynamics do not converge.
\end{proof}

\section{Omitted Proofs}\label{apn:omitted_proofs}

\begin{proof}[Proof of Observation \ref{obs:1}]
We divide the proof into two parts: the \K{} $<n$ case and the \K{} $=n$ case. We start with the \K{} $=n$ case.

Let $n \ge 5$, let $\lambda \in \left(\frac{4}{n},1\right]$, and consider a generative publishers' game $G$ in which:
\[
\text{\di}=1, \quad
x_1^0=1,\quad x_2^0=\cdots=x_n^0=0, 
\]
the generation function $y(x) = y_{1,z}(x)= \frac{1}{n}\sum_{j=1}^n x_j$ (\K{} $=n$), an arbitrary $x^*\in[0,1]$, and the \PRP{} attribution function $r=r^*$.

Assume by contradiction that $x^{eq}$ is a PNE, and write
$
y:=y(x^{eq})=\frac1n\sum_{j=1}^n x_j^{eq}.
$

We first rule out the case where all strategies are equal. If
$
x_1^{eq}=\cdots=x_n^{eq}=a,
$
then under the \PRP{}, each publisher gets attribution probability of $\frac1n$. Hence:
\[
u_1(x^{eq})=\frac1n-\lambda(1-a)^2,
\qquad
u_i(x^{eq})=\frac1n-\lambda a^2 \quad (i\ge 2).
\]
For this profile to be a PNE, all these utilities must be nonnegative (otherwise, at least one publisher has a profitable deviation to her initial document), so we would need
\[
a \le \frac1{\sqrt{n\lambda}}
\qquad\text{and}\qquad
1-a \le \frac1{\sqrt{n\lambda}}.
\]
Thus,
\[
1 \le \frac{2}{\sqrt{n\lambda}},
\]
i.e.,
\[
\lambda \le \frac{4}{n},
\]
contrary to $\lambda>\frac4n$. Therefore $x^{eq}$ is not an equal strategies profile.

Next, in PNE under the \PRP{}, every non-winning publisher\footnote{Here, a winning publisher is any publisher receiving a positive attribution score.} must play her initial document. For example, if publisher $1$ is not a winner
and $x_1^{eq}\neq 1$, then deviating to $1$ yields a utility of at least $0$, which is strictly better than her current
utility $-\lambda(1-x_1^{eq})^2<0$. 

We now claim that publisher $1$ cannot be a winner. Suppose otherwise. Since the profile is not all equal,
there is at least one non-winner, and every non-winner is at her initial document. In particular, every losing
publisher is at $0$.

Also, a winning publisher cannot lie above $y$: if some winner $i\ge 2$ satisfies
$x_i^{eq}>y$, then by moving slightly toward $y$, publisher $i$ becomes strictly closer to the new average
than every other publisher, while strictly decreasing her movement cost. Hence this would be a profitable
deviation. By the same argument, if publisher $1$ is a winner then necessarily $x_1^{eq}\ge y$.

Therefore, if publisher $1$ is a winner, then all winning publishers $i \ge 2$ lie at some point $y-\rho$
and publisher $1$ lies at $y+\rho$, for some $\rho\ge 0$. Let $t\ge 1$ be the number of winning publishers. Since all losers are at $0$, the generation function gives:
\[
ny = (y+\rho) + (t-1)(y-\rho),
\]
hence
\[
(n-t)y = (2-t)\rho.
\]
Notably, if $t\ge3$ the generation response would be closer to $y-\rho$ (i.e., publisher 1 is not the winner). Therefore, it holds that $t\le2$.

Suppose first that \(t=1\). Let publisher \(1\) be the unique winning
publisher. Since all other publishers are located at \(0\), the generated
response satisfies
$y=\frac{x_1}{n}.$
Hence, the distance of publisher \(1\) from the generated response is
\[
\rho = |x_1-y| = (n-1)y,
\]
whereas the distance of every other publisher from the generated response is
\(y\). Because publisher \(1\) is the unique winner, it must be at least as
close to the generated response as every other publisher. Therefore,
\[
\rho \leq y.
\]
Combining the two relations gives
\[
(n-1)y \leq y.
\]
Since \(n\geq 5\), this implies \(y=0\), and consequently \(x_1=0\).
Thus, all publishers choose the same strategy, contradicting the case under
consideration. Therefore, \(t\neq 1\).

If $t=2$, the left-hand side is nonnegative ($n-t>0$) and the right-hand side is $0$, thus $y=0$. It holds that $y-\rho \in [0,1]$ so $\rho=0$, contrary to the non-equal strategy profile.
So publisher $1$ is not a winner.

Hence, publisher $1$ is a non-winner (i.e., all winners are high-index publishers), therefore,
\[
x_1^{eq}=1.
\]

We next claim that every winner must in fact be located exactly at $y$. Suppose some winning high-index publisher is located at $a<y$.

If there is at least one non-winner high-index publisher, then that player is located at $0$. Let $i$ be such a player,
and let
\[
m:=\frac{1+\sum_{j\neq 1,i}^n x_j^{eq}}{n-1},
\]
be the average of the other $n-1$ publishers. Since the current winners are all at the same point $a<y$, we have
\[
m=\frac{ny}{n-1}>y>a.
\]
If publisher $i$ deviates to $m$, then the new generated answer is exactly $m$, so publisher $i$ becomes the
unique \PRP{} winner. Her new utility is:
\[
u_i(m,x_{-i}^{eq})=1-\lambda m^2>0,
\]
because $m<1$ and $\lambda\le 1$. But currently $u_i(x^{eq})=0$, contradiction.

If, on the other hand, all high-index publishers are winners, then there are at least two winners at the same
point $a<y$. Any one of them can move slightly toward $y$, become the unique winner, and improve her
ranking payoff from $\frac{1}{n-1}$ to $1$, while changing her cost by an arbitrarily small amount.
This is again a profitable deviation, contradiction.

Therefore every winner must be exactly at $y$.

So $x^{eq}$ must be of the form:
\[
(1,\underbrace{y,\ldots,y}_{t\text{ times}},\underbrace{0,\ldots,0}_{n-t-1\text{ times}})
\]
for some $t\in\{1,\ldots,n-2\}$. The case $t=n-1$ would imply
\[
y=\frac{1+(n-1)y}{n}=1,
\]
which would make the equal strategies profile, already ruled out.

Now pick a losing high-index publisher $i$, so $x_i^{eq}=0$. Let
$
m:=\frac{1+t y}{n-1}
$ the average of the other $n-1$ publishers.
We get
\[
m=\frac{ny}{n-1}.
\]
In particular,
$
m>y
$
and
$
0<m<1.
$
If publisher $i$ deviates to $m$, then the generated response becomes exactly $m$, so publisher $i$ becomes
the unique \PRP{} winner. Hence,
\[
u_i(m,x_{-i}^{eq})=1-\lambda m^2>0=u_i(x^{eq}),
\]
a profitable deviation.

This contradiction shows that no PNE exists and completes the proof for the case of \K{} $=n$. We now turn to the \K{} $<n$ case. 

Let $\lambda > \frac{1}{3}$ and consider a generative publishers' game $G$ in which:
\[
n=3, \quad \text{\di}=1, \quad
x_1^0=1,\quad x_2^0=x_3^0=0,\quad x^*=1, 
\]
the generation function $y(x) = y_{1,z}(x)= \frac{1}{2}\sum_{j=1}^2 x_j$ (\K{} $=2<n$), and the \PRP{} attribution function $r=r^*$.

Assume by contradiction that $x^{eq}=(x_1^{eq},x_2^{eq},x_3^{eq})$ is a PNE of $G$.

We first claim that $x_1^{eq}=1$. Indeed, if publisher $1$ is not among the two publishers closest to $x^*=1$,
then $r_1(x^{eq})=0$, whereas by deviating to $\hat x_1=1$, publisher $1$ becomes one of the (at least) two closest
publishers to $x^*=1$, obtains an attribution probability of at least $\frac13$, and incurs no cost. If publisher $1$ is already
among the two publishers closest to $x^*=1$, then deviating to $\hat x_1=1$ does not decrease her attribution probability.
In all cases, the deviation strictly decreases publisher $1$'s cost unless already $x_1^{eq}=1$. Hence
$x_1^{eq}=1$.

We now consider the remaining two coordinates.

\begin{itemize}
    \item $x_2^{eq}=x_3^{eq}=a$. If $a<1$, we obtain
    \[
    y(x^{eq})=\frac{1+a}{2}.
    \]
    All three publishers are at the same distance from $y(x^{eq})$, so each receives an attribution probability of $\frac13$.
    Thus
    \[
    u_2(x^{eq})=\frac13-\lambda a^2.
    \]
    By deviating to $\hat x_2=a+\varepsilon$ for sufficiently small $\varepsilon>0$, publisher $2$ becomes the
    unique non-player-1 member of the generating set, and the attributed publishers are exactly publishers $1$
    and $2$. Hence
    \[
    u_2(\hat x_2,x^{eq}_{-2})=\frac12-\lambda(a+\varepsilon)^2>\frac13-\lambda a^2=u_2(x^{eq}),
    \]
    a contradiction. If $a=1$, then
    \[
    u_2(x^{eq})=\frac13-\lambda<0,
    \]
    whereas by deviating to $\hat x_2=0$, publisher $2$ gets utility $0$. Hence, this case is not a PNE either.

    \item $x_2^{eq}\neq x_3^{eq}$. Without loss of generality, assume
    \[
    x_2^{eq}>x_3^{eq}.
    \]
    Since $x_1^{eq}=1$, publisher $2$ is the unique second-closest publisher to $x^*=1$, and therefore
    \[
    y(x^{eq})=\frac{1+x_2^{eq}}{2}.
    \]
    We now compare distances to the generated response:
    \[
    \left|1-y(x^{eq})\right|=\left|x_2^{eq}-y(x^{eq})\right|=\frac{1-x_2^{eq}}{2},
    \]
    while
    \[
    \left|x_3^{eq}-y(x^{eq})\right|
    =\frac{1+x_2^{eq}-2x_3^{eq}}{2}
    >\frac{1-x_2^{eq}}{2},
    \]
    because $x_2^{eq}>x_3^{eq}$. Hence, publishers $1$ and $2$ are the unique closest publishers to the generated response, and
    \[
    u_2(x^{eq})=\frac12-\lambda(x_2^{eq})^2,\qquad
    u_3(x^{eq})=-\lambda(x_3^{eq})^2.
    \]

    If $x_3^{eq}>0$, then publisher $3$ can deviate to $\hat x_3=0$, therefore,
    \[
    u_3(\hat x_3,x_{-3}^{eq})=0>-\lambda(x_3^{eq})^2=u_3(x^{eq}),
    \]
    contradiction. Hence necessarily $x_3^{eq}=0$.

    We are therefore left with a candidate equilibrium of the form $(1,x_2^{eq},0)$ with $x_2^{eq}>0$. At this profile, publisher $2$ gets
    \[
    u_2(x^{eq})=\frac12-\lambda(x_2^{eq})^2.
    \]
    If publisher $2$ deviates to $\hat x_2=0$, the new profile is $(1,0,0)$. Regardless of how ties are broken, the generated response is
    \[
    y(\hat x_2,x_{-2}^{eq})=\frac12,
    \]
    because the second selected generator is located at $0$. All three publishers are then at a distance of $\frac12$ from the generated response, so each receives an attribution probability of $\frac13$. Hence, the deviation yields utility
    \[
    u_2(\hat x_2,x_{-2}^{eq})=\frac13.
    \]
    Therefore, if $x^{eq}$ were a PNE, we would have
    \[
    \frac12-\lambda(x_2^{eq})^2\ge \frac13,
    \]
    i.e.,
    \[
    \lambda(x_2^{eq})^2\le \frac16.
    \]
    In particular,
    \[
    (x_2^{eq})^2\le \frac{1}{6\lambda}<\frac{1}{2\lambda},
    \]
    so
    \[
    x_2^{eq}<\frac{1}{\sqrt{2\lambda}}.
    \]
    Choose any
    \[
    b\in\Bigl(x_2^{eq},\,\min\Bigl\{1,\frac{1}{\sqrt{2\lambda}}\Bigr\}\Bigr).
    \]
    Such a $b$ exists by the strict inequality above. If publisher $3$ deviates to $\hat x_3=b$, then publisher $3$ becomes the unique second-closest publisher to $x^*=1$, so the generating set becomes $\{1,3\}$ and
    \[
    y(x_1^{eq},x_2^{eq},\hat x_3)=\frac{1+b}{2}.
    \]
    Now publishers $1$ and $3$ are the unique closest publishers to this generated response, because $b>x_2^{eq}$. Hence
    \[
    u_3(x_1^{eq},x_2^{eq},\hat x_3)=\frac12-\lambda b^2>0.
    \]
    Since currently $u_3(x^{eq})=0$, this is a profitable deviation, contradiction.
\end{itemize}

All possible cases lead to a contradiction. Therefore, $G$ possesses no PNE.
\end{proof}

\begin{proof}[Proof of Theorem \ref{the:softmax-no-potential}]
Assume that there exists $\beta>0$ such that the
softmax attribution rule $\tilde r$ induces an exact potential game.
In particular, consider a generative publishers' game with the following
parameters:
\[
n=3,\quad \text{\di}=1, \quad y(x) = y_{1,z}(x):= \frac{1}{n}\sum_{j=1}^n x_j,
\]
an arbitrary $\lambda>0$, and arbitrary initial documents $x_1^0,x_2^0,x_3^0\in[0,1]$.
By assumption, this is a potential game and, thus, admits an exact potential function, which we denote by
$\Phi^\beta$.

Since the utilities are twice continuously differentiable, by Theorem 4.5 of \citet{monderer1996potential}, the existence of an
exact potential implies that
\[
\frac{\partial^2 u_i}{\partial x_j \partial x_i}
=
\frac{\partial^2 u_j}{\partial x_i \partial x_j}
\qquad \text{for every } i, j \in N.
\]

Now fix $s\in(0,1)$ and choose $t>0$ such that:
\[
0<t<\min\{s,1-s\}.
\]
For the profile $x=(s+t,s,s-t)$, denote
\[
g:=e^{-\beta t^2},\qquad D:=1+2g.
\]
Note that $y(x)=s$, so we get
\[
d_1^*=t^2,\qquad d_2^*=0,\qquad d_3^*=t^2,
\]
and hence
\[
\tilde r_1(x)=\tilde r_3(x)=\frac{g}{D},\qquad \tilde r_2(x)=\frac{1}{D}.
\]
Since the cost term $d_i^0(x_i)$ depends only on $x_i$, its mixed cross-partials vanish, and therefore:
\[
\frac{\partial^2 u_1}{\partial x_2\partial x_1}
-
\frac{\partial^2 u_2}{\partial x_1\partial x_2}
=
\frac{\partial^2 \tilde r_1}{\partial x_2\partial x_1}
-
\frac{\partial^2 \tilde r_2}{\partial x_1\partial x_2}.
\]

Now
\[
d_1^*=\left(\frac{2x_1-x_2-x_3}{3}\right)^2,\quad
d_2^*=\left(\frac{2x_2-x_1-x_3}{3}\right)^2,
\]
\[
d_3^*=\left(\frac{2x_3-x_1-x_2}{3}\right)^2.
\]
Evaluating at $x=(s+t,s,s-t)$ gives
\[
\partial_{x_1} d_1^*=\frac{4t}{3},\qquad
\partial_{x_2} d_1^*=-\frac{2t}{3},\qquad
\partial_{x_1x_2}^2 d_1^*=-\frac49,
\]
\[
\partial_{x_1} d_2^*=0,\qquad
\partial_{x_2} d_2^*=0,\qquad
\partial_{x_1x_2}^2 d_2^*=-\frac49,
\]
\[
\partial_{x_1} d_3^*=\frac{2t}{3},\qquad
\partial_{x_2} d_3^*=\frac{2t}{3},\qquad
\partial_{x_1x_2}^2 d_3^*=\frac29.
\]

For the softmax rule,
\[
\partial_{x_j}\tilde r_i
=
-\beta \tilde r_i
\left(
\partial_{x_j}d_i^*
-
\sum_{m=1}^3 \tilde r_m\,\partial_{x_j}d_m^*
\right).
\]
Thus, if we set
\[
B_1:=\sum_{m=1}^3 \tilde r_m\,\partial_{x_1}d_m^*,
\qquad
B_2:=\sum_{m=1}^3 \tilde r_m\,\partial_{x_2}d_m^*,
\]
then at $x=(s+t,s,s-t)$,
\[
B_1=\frac{g}{D}\cdot \frac{4t}{3}+\frac{1}{D}\cdot 0+\frac{g}{D}\cdot \frac{2t}{3}
=\frac{2gt}{D},
\]
\[
B_2=\frac{g}{D}\cdot\left(-\frac{2t}{3}\right)+\frac{1}{D}\cdot 0+\frac{g}{D}\cdot \frac{2t}{3}
=0.
\]
Hence,
\[
\partial_{x_2}\tilde r_1
=
-\beta\frac{g}{D}\left(-\frac{2t}{3}-0\right)
=
\frac{2\beta gt}{3D},
\]
\[
\partial_{x_1}\tilde r_2
=
-\beta\frac{1}{D}\left(0-\frac{2gt}{D}\right)
=
\frac{2\beta gt}{D^2}.
\]

Differentiating once more,
\[
\partial_{x_2x_1}^2 \tilde r_1
=
-\beta\left[
\partial_{x_2}\tilde r_1\bigl(\partial_{x_1}d_1^*-B_1\bigr)
+
\tilde r_1\bigl(\partial_{x_2x_1}^2 d_1^*-\partial_{x_2}B_1\bigr)
\right],
\]
\[
\partial_{x_1x_2}^2 \tilde r_2
=
-\beta\left[
\partial_{x_1}\tilde r_2\bigl(\partial_{x_2}d_2^*-B_2\bigr)
+
\tilde r_2\bigl(\partial_{x_1x_2}^2 d_2^*-\partial_{x_1}B_2\bigr)
\right].
\]
Using
\[
\partial_{x_2}B_1=\partial_{x_1}B_2
=
\frac{4\beta g t^2}{9D}-\frac{2(2+g)}{9D},
\]
one obtains:
\[
\partial_{x_2x_1}^2 \tilde r_1
=
\frac{2\beta g(3g-4\beta t^2)}{9D^2},
\qquad
\partial_{x_1x_2}^2 \tilde r_2
=
\frac{2\beta g(3+2\beta t^2)}{9D^2}.
\]
Therefore,
\begin{align*}
\left(
\frac{\partial^2 u_1}{\partial x_2\partial x_1}
-
\frac{\partial^2 u_2}{\partial x_1\partial x_2}
\right)\Bigg|_{x=(s+t,s,s-t)}
\\
=
-\frac{2\beta g}{3(1+2g)^2}\bigl(1+2\beta t^2-g\bigr).
\end{align*}

Since $\beta>0$ and $t>0$, we have
$
0<g(\beta)=e^{-\beta t^2}<1,
$
and therefore
$
1+2\beta t^2-g(\beta)>0.
$
It follows that
\[
-\frac{2\beta\, g(\beta)}{3(1+2g(\beta))^2}
\Bigl(1+2\beta t^2-g(\beta)\Bigr)\neq 0.
\]
Hence,
\[
\frac{\partial^2 u_1}{\partial x_2\partial x_1}
\neq
\frac{\partial^2 u_2}{\partial x_1\partial x_2}
\]
at the profile $x=(s+t,s,s-t)$, contradicting a necessary condition for the
existence of an exact potential.
We conclude that no exact potential can exist when $\beta>0$. Therefore, if the softmax rule induces an exact potential game, it must be that $\beta=0$.

Finally, when $\beta=0$, the attribution function becomes uniform across publishers:
\[
\tilde r_i^{0}(x)
=
\frac{1}{\sum_{j=1}^n 1}
=
\frac1n
\qquad \text{for all } i\in N.
\]
\end{proof}

\begin{proof}[Proof of Observation \ref{obs:3}]
We divide the proof into two parts: the \K{} $<n$ case and the \K{} $=n$ case. We start with the \K{} $=n$ case.

Fix any
$
t\in\left(0,\frac12\right)
$
and 
$
\lambda>0
$
such that
$
\lambda t^2<\frac12
$
and 
$
\lambda(1-t)^2<\frac12.
$
Consider a generative publishers' game in which:
\[
n=3, \quad \text{\di}=1, \quad x_1^0=1,\quad x_2^0=x_3^0=0, \quad r=\tilde r,
\]
and the generation function $y(x) = y_{1,z}(x)= \frac{1}{n}\sum_{j=1}^n x_j$ (\K{} $=n$).

Consider the following strategy profiles:
\[
A=(t,0,0),\;
B=(1,0,0),\; 
C=(1,0,t),\;
D=(t,0,t).
\]
These profiles form a
cyclic better-response dynamics, with deviators
$
A \xrightarrow{1} B \xrightarrow{3} C \xrightarrow{1} D \xrightarrow{3} A.
$

Let
\[
\Delta_{A\to B}(\beta):=u_1(B)-u_1(A),\quad
\Delta_{B\to C}(\beta):=u_3(C)-u_3(B),
\]
\[
\Delta_{C\to D}(\beta):=u_1(D)-u_1(C),\quad
\]
\[
\Delta_{D\to A}(\beta):=u_3(A)-u_3(D).
\]

For \(A=(t,0,0)\), we have \(y(A)=t/3\), so
\[
d_1^*(A)=\left(t-\frac t3\right)^2=\frac{4t^2}{9},
\]
\[
d_2^*(A)=d_3^*(A)=\left(0-\frac t3\right)^2=\frac{t^2}{9}.
\]
Hence,
\[
\tilde r_1(A)
=
\frac{e^{-4\beta t^2/9}}{e^{-4\beta t^2/9}+2e^{-\beta t^2/9}}
=
\frac{1}{1+2e^{\beta t^2/3}}.
\]

For \(B=(1,0,0)\), we have \(y(B)=1/3\), so:
\[
d_1^*(B)=\left(1-\frac13\right)^2=\frac49,
\]
\[
d_2^*(B)=d_3^*(B)=\left(0-\frac13\right)^2=\frac19.
\]
Hence,
\[
\tilde r_1(B)
=
\frac{e^{-4\beta/9}}{e^{-4\beta/9}+2e^{-\beta/9}}
=
\frac{1}{1+2e^{\beta/3}}.
\]
Using the above we get
\begin{align*}
\Delta_{A\to B}(\beta)
=
\frac{1}{1+2e^{\beta/3}}
-
\frac{1}{1+2e^{\beta t^2/3}}
+
\lambda(1-t)^2
\\
\xrightarrow[\beta\to\infty]{}\;
\lambda(1-t)^2>0.
\end{align*}

At $B$, publisher $3$ has softmax weight
\[
\tilde r_3(B)=\frac{1}{2+e^{-\beta/3}}
\;\xrightarrow[\beta\to\infty]{}\;
\frac12.
\]
At $C=(1,0,t)$, since
\[
d_3^*(C)<d_2^*(C)<d_1^*(C)
\qquad\text{for every }t\in\left(0,\frac12\right),
\]
publisher $3$ is the unique closest publisher to the generated response, and therefore
\[
\tilde r_3(C)\xrightarrow[\beta\to\infty]{}1.
\]
Thus,
\[
\Delta_{B\to C}(\beta)
\;\xrightarrow[\beta\to\infty]{}\;
1-\lambda t^2-\frac12
=
\frac12-\lambda t^2
>0.
\]

At $C$, publisher $1$ is not a closest publisher, so
\[
\tilde r_1(C)\xrightarrow[\beta\to\infty]{}0.
\]
At $D=(t,0,t)$, publishers $1$ and $3$ are tied for being closest to the generated response, while publisher
$2$ is farther, hence
\[
\tilde r_1(D)=\frac{1}{2+e^{-\beta t^2/3}}
\;\xrightarrow[\beta\to\infty]{}\;
\frac12.
\]
Therefore,
\[
\Delta_{C\to D}(\beta)
\;\xrightarrow[\beta\to\infty]{}\;
\frac12-\lambda(1-t)^2
>0.
\]

Finally, at both $A$ and $D$, publisher $3$ has the same attribution probability
\[
\tilde r_3(A)=\tilde r_3(D)=\frac{1}{2+e^{-\beta t^2/3}},
\]
while her movement cost equals $0$ at $A$ and $\lambda t^2$ at $D$. Hence,
\[
\Delta_{D\to A}(\beta)=\lambda t^2>0
\qquad\text{for every }\beta>0.
\]

Since the first three utility gaps converge to strictly positive limits, there exists \(\beta_0>0\) such that
all four deviations are profitable for every \(\beta\ge\beta_0\). Fix any such \(\beta\), and define:
{\small
\[
\epsilon
:=
\frac12
\min\Bigl\{
\Delta_{A\to B}(\beta),
\Delta_{B\to C}(\beta),
\Delta_{C\to D}(\beta),
\Delta_{D\to A}(\beta)
\Bigr\}
>0.
\]
}
Then every step in the cycle is an \(\epsilon\)-better response. Repeating the cycle yields an infinite
\(\epsilon\)-better-response dynamic.
This completes the proof for the case of \K{} $=n$. We now turn to the \K{} $<n$ case.

Fix $\lambda > \frac{1}{6}$, and choose $b,c \in (0,1)$ such that:
$$
0 < b < c < 1, \qquad \lambda b^2 < \frac{1}{6}, \qquad \frac{1}{6} < \lambda c^2 < \frac{1}{2}.
$$
Consider a generative publishers' game with:
$$
n=3, \quad \text{\di}=1, \quad x^*=1, \quad x_1^0=1, \quad x_2^0=x_3^0=0,
$$
the generation function $y(x) = y_{1,z}(x):= \frac{1}{2}\sum_{j=1}^2 x_j$ ($\text{\K}=2$), and the softmax attribution function $r=\tilde r_\beta$.
Then there exists $\beta_0>0$ such that for every $\beta \geq \beta_0$, the following four profiles form a cyclic better-response path:
$$
A=(1,0,0), \; B=(1,b,0), \; C=(1,b,c), \; D=(1,0,c),
$$
with deviators:
$$
A \xrightarrow{\,2\,} B \xrightarrow{\,3\,} C \xrightarrow{\,2\,} D \xrightarrow{\,3\,} A.
$$
Hence, there exists an infinite sequence of profitable deviations, and in particular, for sufficiently small $\varepsilon>0$, $\varepsilon$-better-response dynamics may not converge.

We analyze the four profiles one by one.
At the profile
$
A=(1,0,0),
$
publisher $1$ is certainly among the two publishers closest to $x^*=1$, and the second selected publisher is one of $2,3$. In either case, the generated response is:
$$
y(A)=\frac{1+0}{2}=\frac{1}{2}.
$$
Therefore, all three publishers are at the same distance from the generated response,
$$
d_1^*(A)=d_2^*(A)=d_3^*(A)=\left(\frac{1}{2}\right)^2=\frac{1}{4}.
$$
Hence, under the softmax attribution rule, we have
$$
\tilde r_1(A)=\tilde r_2(A)=\tilde r_3(A)=\frac{1}{3},
$$
for every $\beta>0$. In particular,
$
u_2(A)=u_3(A)=\frac{1}{3}.
$

Next, consider
$
B=(1,b,0).
$
Since $0<b<1$, the two publishers closest to $x^*=1$ are $1$ and $2$; therefore
$$
y(B)=\frac{1+b}{2}.
$$
The distances to the generated response are:
$$
d_1^*(B)=d_2^*(B)=\left(\frac{1-b}{2}\right)^2,
\qquad
d_3^*(B)=\left(\frac{1+b}{2}\right)^2.
$$
Since $d_3^*(B)>d_1^*(B)=d_2^*(B)$, it follows that
$$
\lim_{\beta\to\infty} \tilde r_2(B)=\frac{1}{2},
\qquad
\lim_{\beta\to\infty} \tilde r_3(B)=0.
$$
Hence,
$$
\lim_{\beta\to\infty} \bigl(u_2(B)-u_2(A)\bigr)
=
\left(\frac{1}{2}-\lambda b^2\right)-\frac{1}{3}
=
\frac{1}{6}-\lambda b^2
>0,
$$
that is, for sufficiently large $\beta$ publisher $2$ has a profitable deviation from profile $A$ to profile $B$.

Now consider
$
C=(1,b,c).
$
Since $0<b<c<1$, the two publishers closest to $x^*=1$ are $1$ and $3$, and thus:
$$
y(C)=\frac{1+c}{2}.
$$
The distances to the generated response are
$$
d_1^*(C)=d_3^*(C)=\left(\frac{1-c}{2}\right)^2, \quad
d_2^*(C)=\left(\frac{1+c}{2}-b\right)^2.
$$
Because $c>b$, we have
$$
\frac{1+c}{2}-b > \frac{1-c}{2},
$$
and therefore
$
d_2^*(C) > d_1^*(C)=d_3^*(C).
$
Thus, it follows that
$$
\lim_{\beta\to\infty} \tilde r_3(C)=\frac{1}{2},
\qquad
\lim_{\beta\to\infty} \tilde r_3(B)=0,
$$
and hence
$$
\lim_{\beta\to\infty} \bigl(u_3(C)-u_3(B)\bigr)
=
\left(\frac{1}{2}-\lambda c^2\right)-0
=
\frac{1}{2}-\lambda c^2
>0,
$$
that is, for sufficiently large $\beta$ publisher $3$ has a profitable deviation from profile $B$ to profile $C$.

Next, consider
$
D=(1,0,c).
$
Again the two publishers closest to $x^*=1$ are $1$ and $3$, so
$$
y(D)=\frac{1+c}{2}.
$$
The distances to the generated response are
$$
d_1^*(D)=d_3^*(D)=\left(\frac{1-c}{2}\right)^2,
\qquad
d_2^*(D)=\left(\frac{1+c}{2}\right)^2.
$$
Hence,
$$
\lim_{\beta\to\infty} \tilde r_2(C)=0,
\qquad
\lim_{\beta\to\infty} \tilde r_2(D)=0,
$$
and therefore
$$
\lim_{\beta\to\infty} \bigl(u_2(D)-u_2(C)\bigr)
=
0-(-\lambda b^2)
=
\lambda b^2
>0,
$$
that is, for sufficiently large $\beta$ publisher $2$ has a profitable deviation from profile $C$ to profile $D$.

Finally, comparing $D$ and $A$, we have already seen that
$
u_3(A)=\frac{1}{3}
$
for every $\beta>0$, while:
$$
\lim_{\beta\to\infty} u_3(D)
=
\frac{1}{2}-\lambda c^2.
$$
Thus, we get:
$$
\lim_{\beta\to\infty} \bigl(u_3(A)-u_3(D)\bigr)
=
\frac{1}{3}-\left(\frac{1}{2}-\lambda c^2\right)
=
\lambda c^2-\frac{1}{6}
>0,
$$
that is, for sufficiently large $\beta$ publisher $3$ has a profitable deviation from profile $D$ to profile $A$.

We have shown that the four utility differences
$$
u_2(B)-u_2(A), \:
u_3(C)-u_3(B), \:
u_2(D)-u_2(C), \:
u_3(A)-u_3(D)
$$
all converge, as $\beta\to\infty$, to strictly positive quantities. Therefore, there exists $\beta_0>0$ such that for every $\beta \geq \beta_0$, all four deviations remain profitable. In other words,
$
A \xrightarrow{\,2\,} B \xrightarrow{\,3\,} C \xrightarrow{\,2\,} D \xrightarrow{\,3\,} A
$
is a cyclic better-response path.
Hence, there exists an infinite sequence of profitable deviations, and in particular, for sufficiently small $\varepsilon>0$, $\varepsilon$-better-response dynamics may not converge.
\end{proof}

\begin{proof}[Proof of Theorem \ref{teo:linear_not_converge_for_K-smaller_than_n}]
Assume by contradiction that the linear attribution function \(\hat r_a\) induces a potential game for the \K{} $<n$ case. Thus, it induces a potential game for a generative publishers' game with: 
\[
\text{\di}=1,
\qquad x^*=0, \qquad \lambda>0,
\]
the generation function $y(x) = y_{1,z}(x)= \frac{1}{k}\sum_{j=1}^k x_j$ where \K{} $<n$, and arbitrary initial documents.

Fix parameters \(\varepsilon,\rho\)
such that
\[
0<\varepsilon<1,\qquad 0<\rho<1-\varepsilon.
\]
Now consider the open set:
{\small
\[
U:=\Bigl\{x\in [0,1]^n:\ x_1,\ldots,x_\text{\K}\in (0,\varepsilon),\ x_{\text{\K}+1},\ldots,x_n\in (\varepsilon+\rho,1)\Bigr\}.
\]}
Since \(x^*=0\), for every \(x\in U\) the \K{} closest publishers to the user's question $x^*$ are exactly publishers \(1,\ldots,\)\K. Hence, on \(U\) it holds that:
\[
y(x)=\frac{1}{\text{\K}}\sum_{j=1}^\text{\K} x_j.
\]
In particular, on \(U\), the function \(y\) is twice continuously differentiable, it does not depend on
\(x_{\text{\K}+1}\), and
\[
\frac{\partial y}{\partial x_1}(x)=\frac{1}{\text{\K}}\qquad \forall x\in U.
\]

By assumption, this game is a potential game. Since the utilities are twice continuously differentiable on
\(U\), Theorem 4.5 of \citet{monderer1996potential} implies that for every \(x\in U\) we have:
\[
\frac{\partial^2 u_1}{\partial x_{\text{\K}+1}\partial x_1}(x)
=
\frac{\partial^2 u_{\text{\K}+1}}{\partial x_1\partial x_{\text{\K}+1}}(x).
\]
We now compute these two cross-derivatives.

Because \(y\) is independent of \(x_{\text{\K}+1}\) on \(U\), for every \(j\neq \text{\K}+1\) the term:
$
d_j^*(x)=(x_j-y(x))^2
$
is independent of \(x_{\text{\K}+1}\). Therefore,
\[
\frac{\partial^2 d_j^*}{\partial x_{\text{\K}+1}\partial x_1}(x)=0
\qquad \forall j\neq \text{\K}+1.
\]

For publisher \(\text{\K}+1\),
$
d_{\text{\K}+1}^*(x)=(x_{\text{\K}+1}-y(x))^2.
$
Differentiating first with respect to \(x_1\), we obtain
\[
\frac{\partial d_{\text{\K}+1}^*}{\partial x_1}(x)
=
-2(x_{\text{\K}+1}-y(x))\frac{\partial y}{\partial x_1}(x)
=
-\frac{2}{\text{\K}}(x_{\text{\K}+1}-y(x)).
\]
Differentiating now with respect to \(x_{\text{\K}+1}\), and using again that \(y\) does not depend on \(x_{\text{\K}+1}\),
gives
\[
\frac{\partial^2 d_{\text{\K}+1}^*}{\partial x_{\text{\K}+1}\partial x_1}(x)
=
-\frac{2}{\text{\K}}.
\]

Recall that
\[
\nu_i(x)=\frac{1}{n-1}\sum_{j\neq i} d_j^*(x)-d_i^*(x).
\]
Hence,
\[
\frac{\partial^2 \nu_1}{\partial x_{\text{\K}+1}\partial x_1}(x)
=
\frac{1}{n-1}\frac{\partial^2 d_{\text{\K}+1}^*}{\partial x_{\text{\K}+1}\partial x_1}(x)
=
-\frac{2}{\text{\K}(n-1)},
\]
whereas
\[
\frac{\partial^2 \nu_{\text{\K}+1}}{\partial x_{\text{\K}+1}\partial x_1}(x)
=
-\frac{\partial^2 d_{\text{\K}+1}^*}{\partial x_{\text{\K}+1}\partial x_1}(x)
=
\frac{2}{\text{\K}}.
\]

The cost terms depend only on a player’s own action, so their mixed partial derivatives vanish. Therefore,
\[
\frac{\partial^2 u_1}{\partial x_{\text{\K}+1}\partial x_1}(x)
=
m\frac{\partial^2 \nu_1}{\partial x_{\text{\K}+1}\partial x_1}(x)
=
-\frac{2m}{\text{\K}(n-1)},
\]
\[
\frac{\partial^2 u_{\text{\K}+1}}{\partial x_1\partial x_{\text{\K}+1}}(x)
=
m\frac{\partial^2 \nu_{\text{\K}+1}}{\partial x_{\text{\K}+1}\partial x_1}(x)
=
\frac{2m}{\text{\K}}.
\]

Since \(m>0\), \K{} $\ge1$, $n>$ \K{} these two expressions are not equal. This contradicts the necessary cross-partial equality
for potential games.

Therefore, the game is not a potential game. Since this game belongs to the above class, we conclude
that the linear attribution function does not induce a potential game for generative publishers' games.

\end{proof}

An interesting question that arises is whether better-response dynamics in generative publishers' games with context size of \K{} $<n$ converge, even though such games are not potential games. As we showed in Section~\ref{sec:empirical_results}, that is not the case. 
To conclude, generative publishers' games with the linear attribution function converge under any $\epsilon$-better-response dynamic in the case of \K{} $=n$; moreover, a PNE exists.

\begin{proof}[Proof of Theorem \ref{thm:linear-kn-potential}]
Let \(i\in N\), \(x_{-i}\in X_{-i}\), \(x_i,x_i'\in X_i\), and denote
$
x:=(x_i,x_{-i}),
$$
x':=(x_i',x_{-i}).
$
Our goal is to show that:
\[
\Psi_{}(x')-\Psi_{}(x)
=
u_i(x')-u_i(x).
\]
Recall that
$
u_i(x)
=
m\nu_i(x)+\frac{1}{n}
-
\lambda d_i^0(x_i).
$

Let
\[
S_{-i}:=\sum_{j\ne i}x_j .
\]
For fixed \(x_{-i}\), we can write the generated response as
\[
y_{\alpha,z}(x)
=
\frac{\alpha}{n}x_i
+
\frac{\alpha}{n}S_{-i}
+
(1-\alpha)z .
\]
For brevity, define
\[
\eta:=\frac{\alpha}{n},
\qquad
h_{-i}:=\eta S_{-i}+(1-\alpha)z.
\]
Then
$
y_{\alpha,z}(x)=\eta x_i+h_{-i}.
$

By definition of the linear relative relevance and the notations above, we obtain
\[
\nu_i(x)
=
\frac{1}{\text{\di}(n-1)}
\sum_{j\ne i}\|x_j-\eta x_i-h_{-i}\|_2^2
-
\frac{1}{\text{\di}}\|(1-\eta)x_i-h_{-i}\|_2^2 .
\]
We now isolate the terms that depend on \(x_i\). Expanding the first
term gives
\begin{align*}
&\frac{1}{\text{\di}(n-1)}
\sum_{j\ne i}\|x_j-\eta x_i-h_{-i}\|_2^2 \\
&=
C_1(x_{-i})
-
\frac{2\eta}{\text{\di}(n-1)}
x_i^\top
\sum_{j\ne i}(x_j-h_{-i})
+
\frac{\eta^2}{\text{\di}}\|x_i\|_2^2,
\end{align*}
where \(C_1(x_{-i})\) is independent of \(x_i\). Similarly for the second term:
{\small
\[
\frac{1}{\text{\di}}\|(1-\eta)x_i-h_{-i}\|_2^2
=
\frac{(1-\eta)^2}{\text{\di}}\|x_i\|_2^2
-
\frac{2(1-\eta)}{\text{\di}}x_i^\top h_{-i}
+
C_2(x_{-i}),
\]
}
where \(C_2(x_{-i})\) is independent of \(x_i\). Hence,
\begin{align*}
\nu_i(x)
&=
C_i(x_{-i})
-
\frac{1-2\eta}{\text{\di}}\|x_i\|_2^2 \\
&-
\frac{2\eta}{\text{\di}(n-1)}
x_i^\top
\sum_{j\ne i}(x_j-h_{-i})
+
\frac{2(1-\eta)}{\text{\di}}x_i^\top h_{-i},
\end{align*}
for some term \(C_i(x_{-i})\) that is independent of \(x_i\).

Since
\[
\sum_{j\ne i}(x_j-h_{-i})
=
S_{-i}-(n-1)h_{-i},
\]
we get:
\[
\nu_i(x)
=
C_i(x_{-i})
-
\frac{1-2\eta}{\text{\di}}\|x_i\|_2^2
-
\frac{2\eta}{\text{\di}(n-1)}x_i^\top S_{-i}
+
\frac{2}{\text{\di}}x_i^\top h_{-i}.
\]
Substituting
\[
\eta=\frac{\alpha}{n},
\qquad
h_{-i}=\frac{\alpha}{n}S_{-i}+(1-\alpha)z,
\]
yields
\begin{equation}\label{v_i}
\begin{aligned}
\nu_i(x)
&=
C_i(x_{-i})
-
\frac{1-\frac{2\alpha}{n}}{\text{\di}}\|x_i\|_2^2 \\
&+
\frac{2\alpha(n-2)}{kn(n-1)}x_i^\top S_{-i} 
+
\frac{2(1-\alpha)}{\text{\di}}x_i^\top z .
\end{aligned}
\end{equation}

Now define
\[
Q_i(x_i;x_{-i})
:=
(1-\alpha)d(x_i,z)
+
\frac{\alpha(n-2)}{n(n-1)}
\sum_{j\ne i}d(x_i,x_j).
\]
Expanding \(Q_i\), we obtain
\begin{align*}
Q_i(x_i;x_{-i})
&=
\frac{1-\alpha}{\text{\di}}
\left(
\|x_i\|_2^2-2x_i^\top z+\|z\|_2^2
\right) \\
&+
\frac{\alpha(n-2)}{kn(n-1)}
\sum_{j\ne i}
\left(
\|x_i\|_2^2-2x_i^\top x_j+\|x_j\|_2^2
\right).
\end{align*}
Collecting only the terms that depend on \(x_i\), we get
\begin{equation}\label{Q_i}
\begin{aligned}
Q_i(x_i;x_{-i})
&=
C_i'(x_{-i})
+
\frac{1-\frac{2\alpha}{n}}{\text{\di}}\|x_i\|_2^2 \\
&-
\frac{2\alpha(n-2)}{kn(n-1)}x_i^\top S_{-i}
-
\frac{2(1-\alpha)}{\text{\di}}x_i^\top z,
\end{aligned}
\end{equation}
where \(C_i'(x_{-i})\) is independent of \(x_i\). Comparing (\ref{v_i}) with (\ref{Q_i}), we obtain
\[
\nu_i(x)
=
C_i''(x_{-i})
-
Q_i(x_i;x_{-i}),
\]
for some term \(C_i''(x_{-i})\) independent of \(x_i\). Therefore,
\[
\nu_i(x')-\nu_i(x)
=
-
\left[
Q_i(x_i';x_{-i})-Q_i(x_i;x_{-i})
\right].
\]
That is,
\begin{align*}
\nu_i(x')-\nu_i(x)
&=
-(1-\alpha)
\left[
d(x_i',z)-d(x_i,z)
\right] \\
&-
\frac{\alpha(n-2)}{n(n-1)}
\sum_{j\ne i}
\left[
d(x_i',x_j)-d(x_i,x_j)
\right].
\end{align*}

Now consider the proposed potential.
Since only player \(i\) changes her action, all terms that do not involve
\(x_i\) cancel in the following difference:
\[
\begin{aligned}
\Psi_{}(x')-\Psi_{}(x)
&=
-(1-\alpha)m
\left[
d(x_i',z)-d(x_i,z)
\right] \\
&-
\frac{\alpha m(n-2)}{n(n-1)}
\sum_{j\ne i}
\left[
d(x_i',x_j)-d(x_i,x_j)
\right] \\
&-
\lambda_i
\left[
d_i^0(x_i')-d_i^0(x_i)
\right].
\end{aligned}
\]
Using the identity for \(\nu_i(x')-\nu_i(x)\) derived above, this becomes
\[
\Psi_{}(x')-\Psi_{}(x)
=
m\left[
\nu_i(x')-\nu_i(x)
\right]
-
\lambda_i
\left[
d_i^0(x_i')-d_i^0(x_i)
\right].
\]
Finally, since
\[
u_i(x)=m\nu_i(x)+\frac{1}{n}-\lambda_i d_i^0(x_i),
\]
we conclude that
\[
\Psi_{}(x')-\Psi_{}(x)
=
u_i(x')-u_i(x).
\]
Thus, \(\Psi_{}\) is an exact potential function.

The existence of an \(\varepsilon\)-PNE and the convergence of every
\(\varepsilon\)-better-response dynamics follow from Theorem 1.
\end{proof}

\section{Simulation Algorithm}\label{apn:sum_algorithm}

We use the discrete better response dynamics simulation described in detail by \citet{Omer_Madmon_2025}. 
The simulation starts from an initial strategy profile (i.e., initial documents) and proceeds in rounds. 
In each round, we identify the publishers who can improve their utility by more than a threshold \(\varepsilon\) through a single document modification, where the modification is restricted to a predefined finite set of directions \(D\) and step sizes \(S\). 
If at least one such publisher exists, one improving publisher is selected uniformly at random. 
Among all feasible modifications for that publisher, the simulation selects a utility-maximizing modification; if several modifications are tied for maximal utility, one of them is chosen uniformly at random. 
The selected publisher's document is then updated accordingly, while all other publishers' documents remain unchanged.

The simulation stops early if no publisher can improve her utility by more than \(\varepsilon\) using any feasible modification (i.e., the simulation converges). 
In this case, the utilities at the final strategy profile are returned. 
If the simulation reaches the maximal number of rounds \(T\), we return the averages of the utilities over the final \(M\) rounds.

\begin{figure}[t]
    \centering

\begin{subfigure}[t]{\columnwidth}
    \centering

    \begin{subfigure}[t]{0.325\columnwidth}
        \centering
        \includegraphics[width=\linewidth]{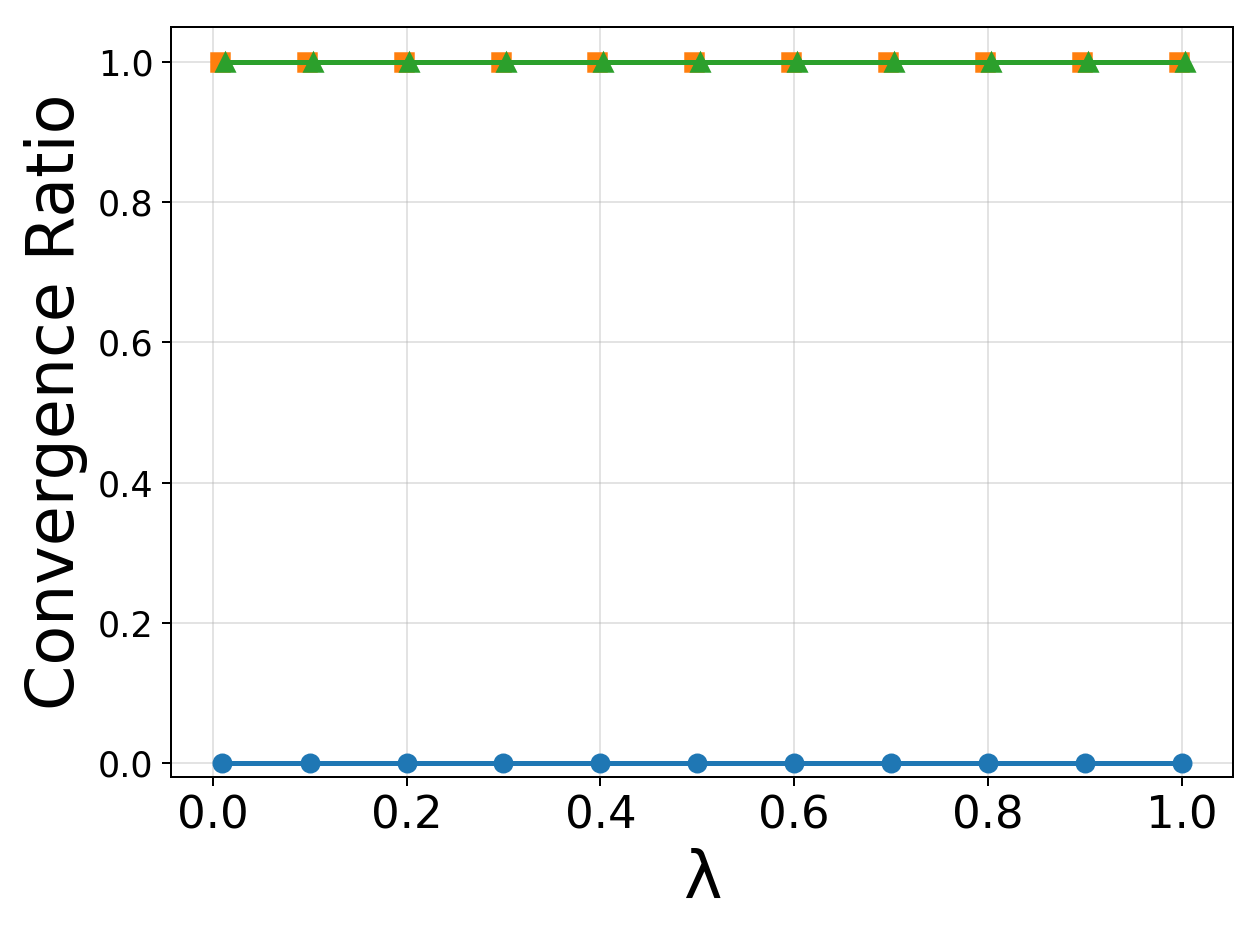}
        \caption{Convergence ratio}
        \label{fig:faith_effect_convergence_ratio}
    \end{subfigure}
    \hspace{0.06\columnwidth}
    \begin{subfigure}[t]{0.325\columnwidth}
        \centering
        \includegraphics[width=\linewidth]{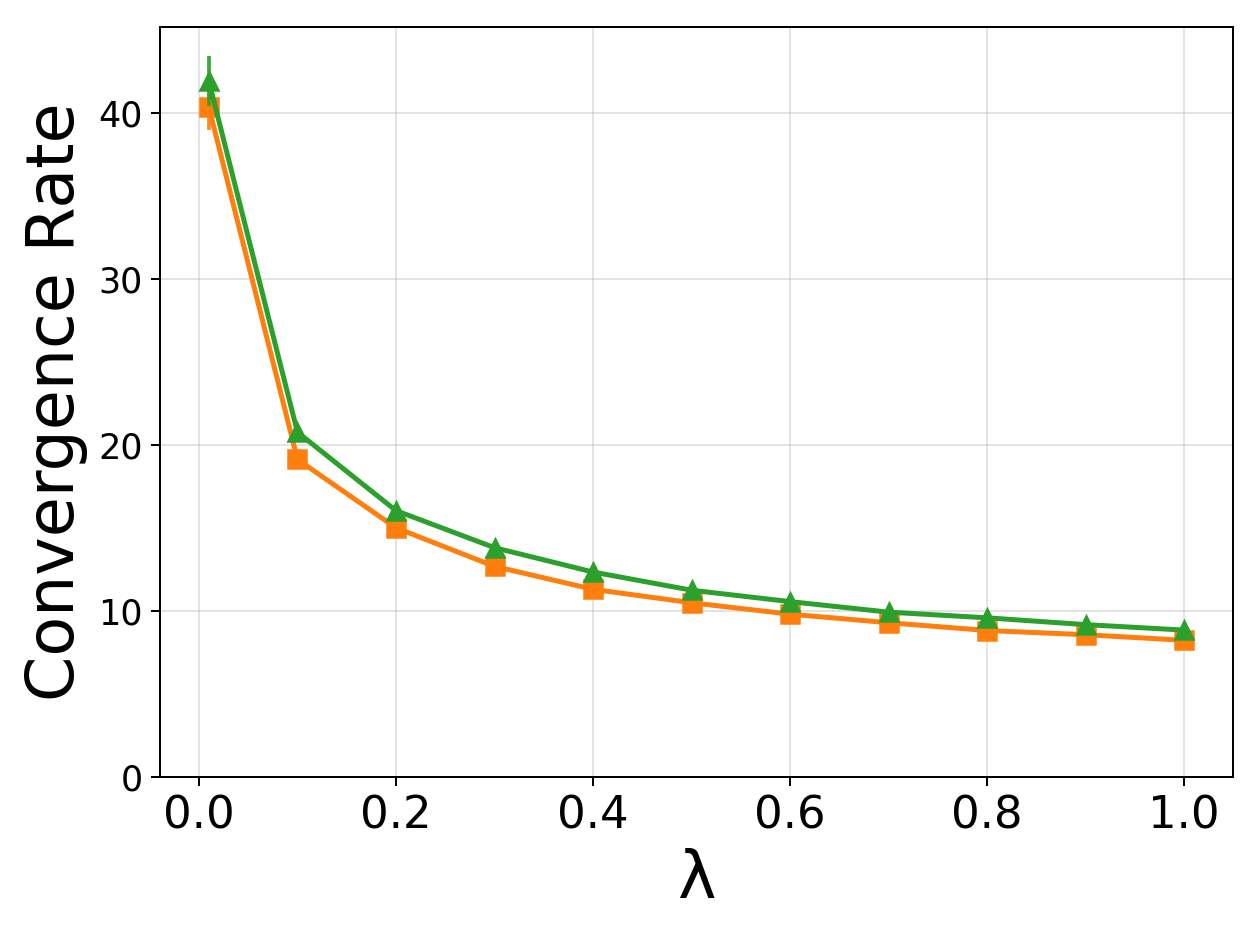}
        
        \caption{Convergence rate}\label{fig:faith_effect_convergence_rate}
    \end{subfigure}

\end{subfigure}

\begin{subfigure}[t]{\columnwidth}
    \centering

    \begin{subfigure}[t]{0.325\columnwidth}
        \centering
        \includegraphics[width=\linewidth]{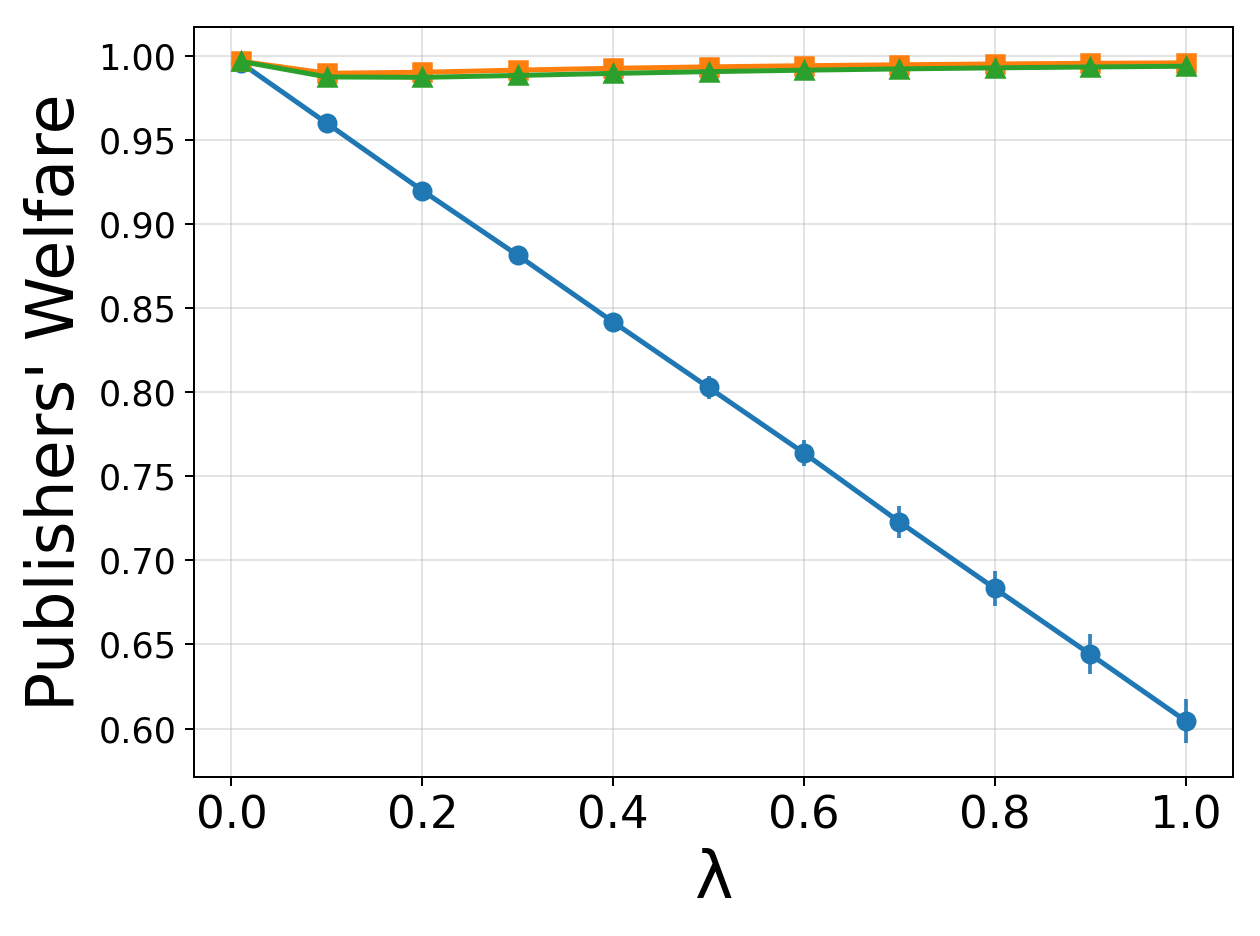}
        
        \caption{PW}\label{fig:faith_effect_PW}
    \end{subfigure}
    \hfill
    \begin{subfigure}[t]{0.325\columnwidth}
        \centering
        \includegraphics[width=\linewidth]{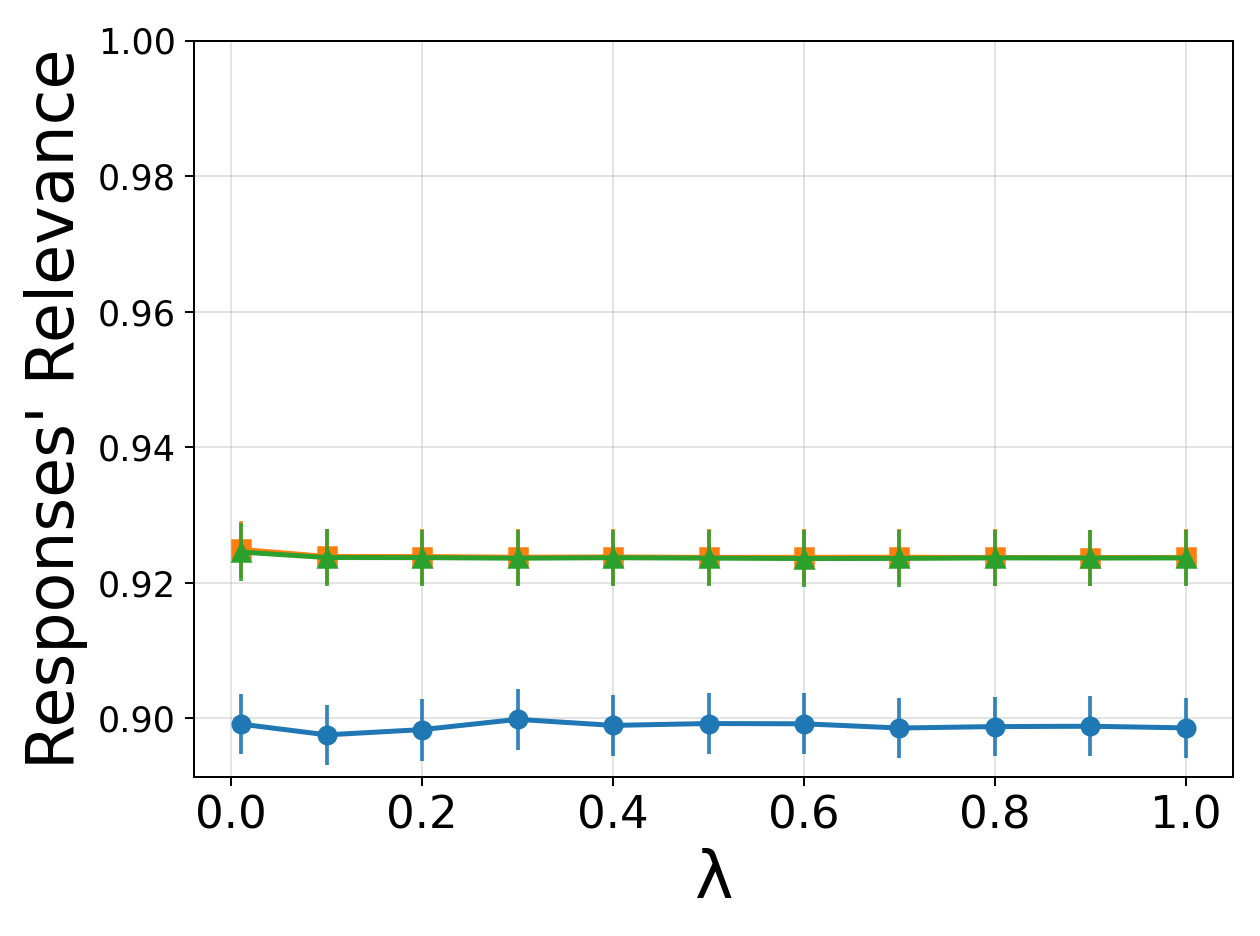}
        
        \caption{RR}\label{fig:faith_effect_RR}
    \end{subfigure}
    \hfill
    \begin{subfigure}[t]{0.325\columnwidth} 
        \centering
        \includegraphics[width=\linewidth]{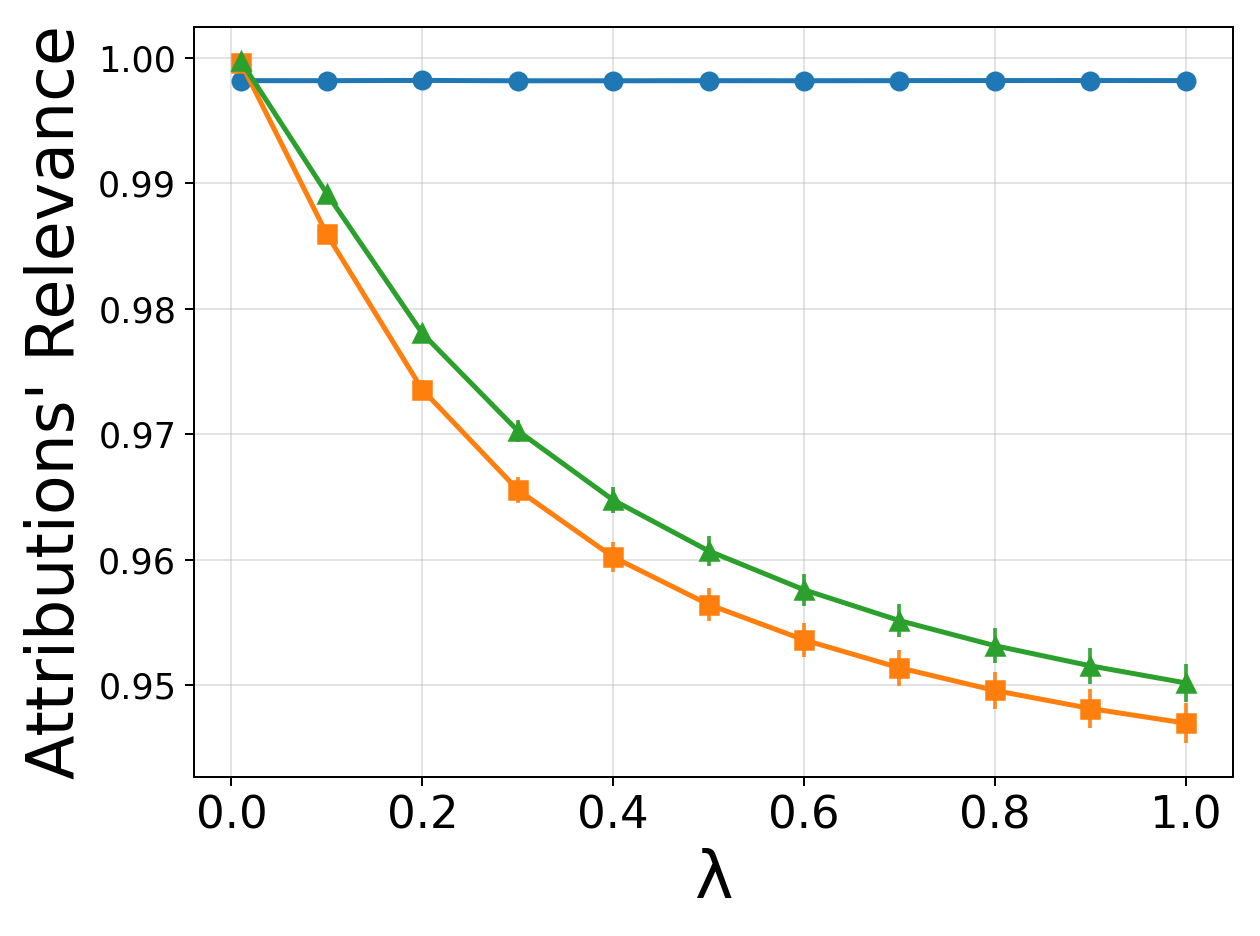}
        
        \caption{AR}\label{fig:faith_effect_AR}
    \end{subfigure}

\end{subfigure}

    \caption{Stability and welfare analysis of the \PRP{}, softmax, and linear attribution functions for various values of cost factors $\lambda$, shown in blue circles, orange squares, and green triangles, respectively.
    The top charts report stability measures: convergence ratio and convergence rate. The bottom charts report welfare measures: publishers' welfare (PW), responses' relevance (RR), and attributions' relevance (AR).}
    \label{fig:faith_effect}
\end{figure}

\section{Cost Factor Analysis}\label{apn:faith_effect}

In this section, we empirically study the effect of the cost factor $\lambda$ on the stability of the ecosystem and the social welfare, as described in Section~\ref{sec:empirical_results}. Specifically, we use the same discrete better response dynamics simulation and hyperparameters we used for $\alpha$ analysis in Section~\ref{sec:empirical_results}, while varying the value of the cost factor $\lambda$.
Figure~\ref{fig:faith_effect} reports the resulting outcomes as a function of $\lambda$, starting from $0.01$ and increasing in increments of $0.1$. We observe that, under our three attribution functions, both the convergence ratio values (Figure~\ref{fig:faith_effect_convergence_ratio}) and responses' relevance values (Figure~\ref{fig:faith_effect_RR}) remain stable across all $\lambda$ values. In contrast, the convergence rate (Figure~\ref{fig:faith_effect_convergence_rate}), the publishers' welfare (Figure~\ref{fig:faith_effect_PW}), and the attributions' relevance (Figure~\ref{fig:faith_effect_AR}) exhibit more notable trends.

We emphasize that the observed stability in response relevance is not a trivial outcome. One might have expected relevance to vary as publishers face different incentives to modify their content. Instead, our results suggest that generative platforms can preserve response relevance across heterogeneous publisher types, including publishers that differ in the cost of deviating from their original content. This capability has previously been shown to be non-trivial \cite{ohayon2026contestsspilloversincentivizingcontent}, where creators' effort can be interpreted as a manageable cost factor.

The convergence rate decreases as $\lambda$ increases, likely because for small values of $\lambda$, publishers are more incentivized to modify their initial content, resulting in a longer interaction. 
The publishers' welfare decreases consistently under the \PRP{} attribution function, while it remains relatively stable under the softmax and linear functions. In other words, higher content-modification costs (i.e., larger values of $\lambda$) lead to increased modifications under the \PRP{} function. This may be attributed to the deviation that publishers have to perform from their original content to gain exposure under the \PRP{} function.
As expected, when publishers are less incentivized to modify their initial content (large $\lambda$), the attributions' relevance decreases consistently under the softmax and linear attribution functions. Surprisingly, it remains stable under the \PRP{} function. This stability is likely due to the discontinuity of \PRP{}, which allows a publisher to obtain the entire exposure allocation by exceeding its competitors by only a small margin.

Additionally, Figure~\ref{fig:faith_effect} shows that, for the vast majority of measures and configurations, the relative performance between the three attribution functions remains consistent across all values of $\lambda$. For example, for any value of $\lambda$, the publishers' welfare is highest under the softmax function and lowest under the \PRP{} function.

\begin{figure}[t]
    \centering

\begin{subfigure}[t]{\columnwidth}
    \centering

    \begin{subfigure}[t]{0.325\columnwidth}
        \centering
        \includegraphics[width=\linewidth]{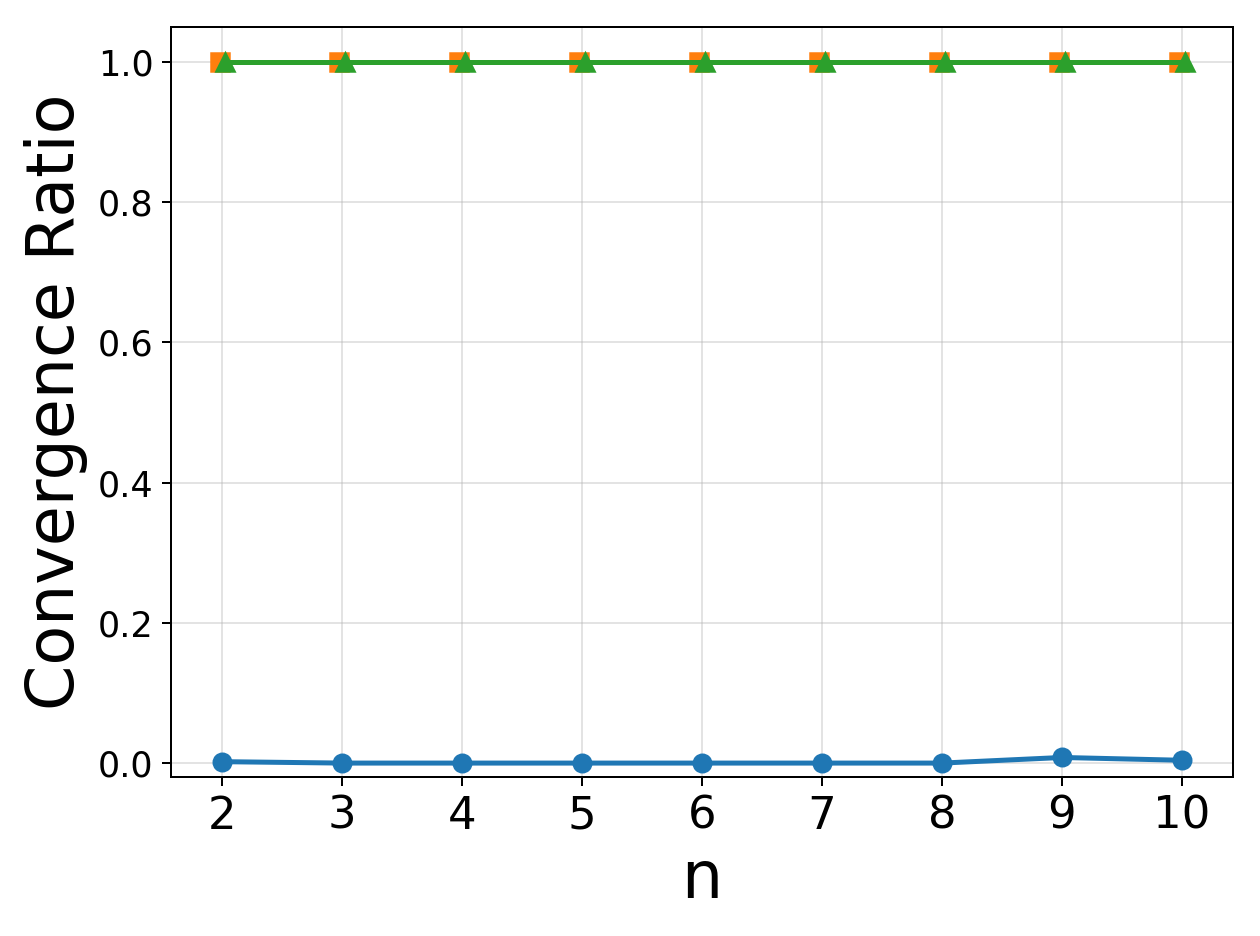}
        \caption{Convergence ratio}
        \label{fig:number_of_players_convergence_ratio}
    \end{subfigure}
    \hspace{0.06\columnwidth}
    \begin{subfigure}[t]{0.325\columnwidth}
        \centering
        \includegraphics[width=\linewidth]{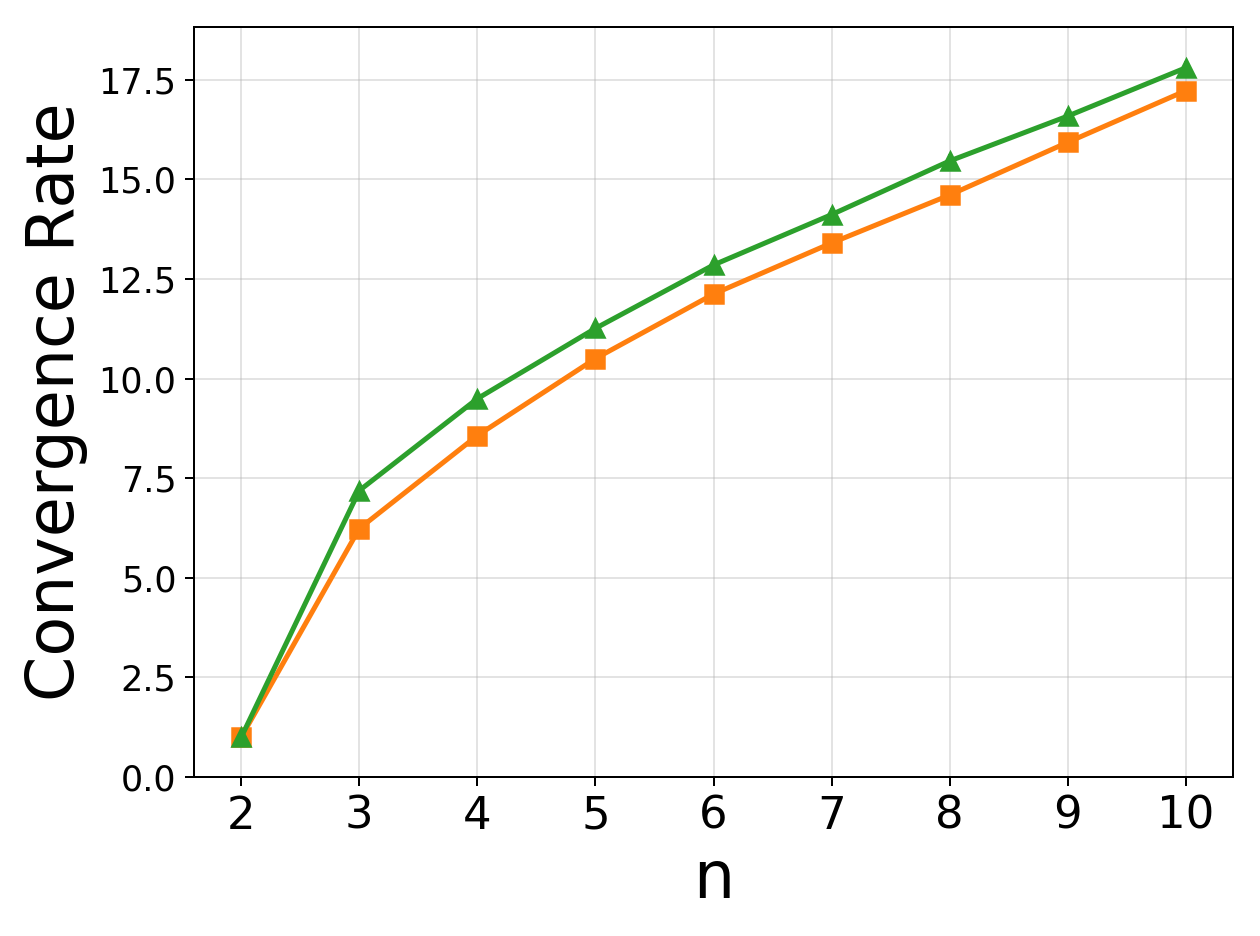}
        
        \caption{Convergence rate}\label{fig:number_of_players_convergence_rate}
    \end{subfigure}

\end{subfigure}

\begin{subfigure}[t]{\columnwidth}
    \centering

    \begin{subfigure}[t]{0.325\columnwidth}
        \centering
        \includegraphics[width=\linewidth]{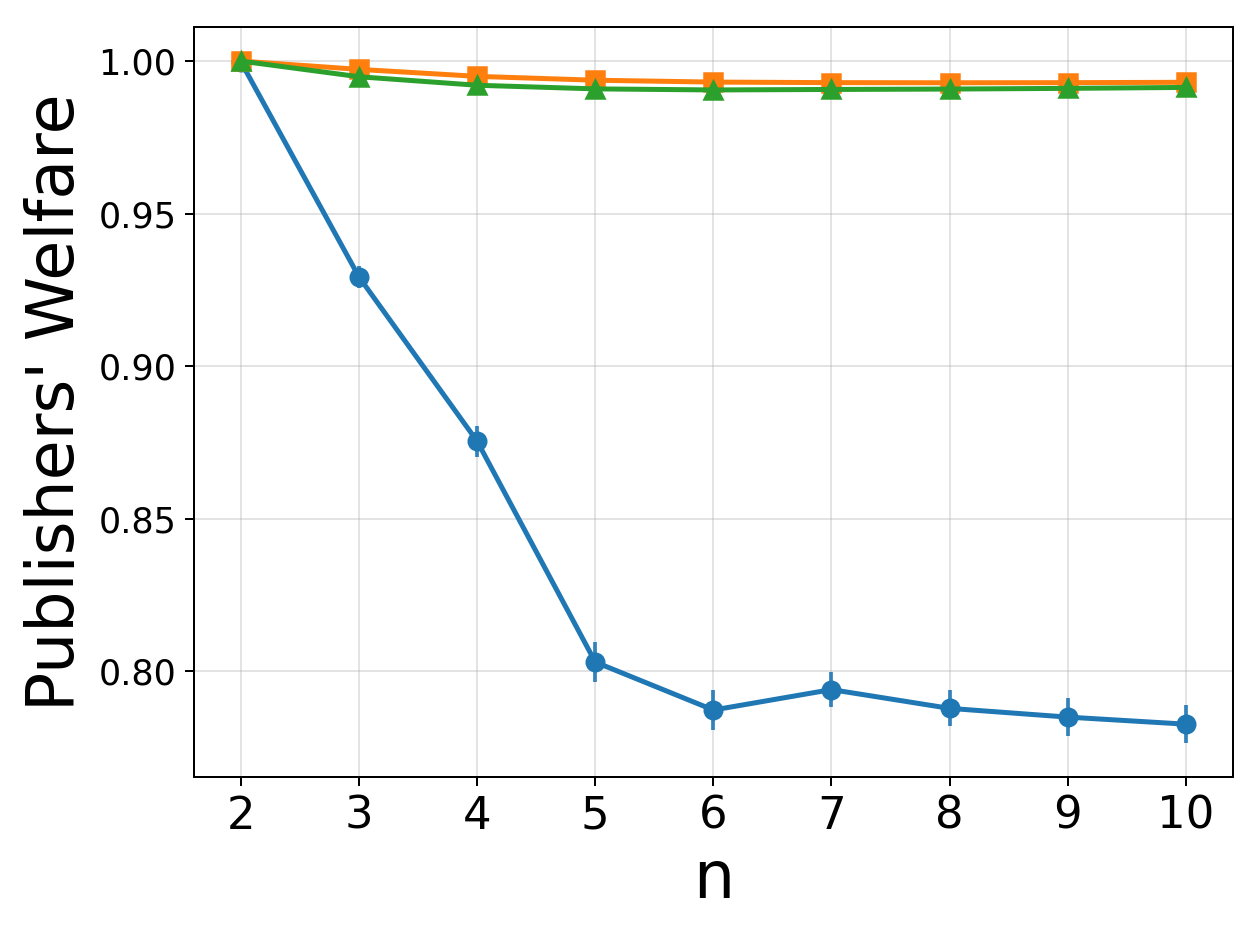}
        
        \caption{PW}\label{fig:number_of_players_PW}
    \end{subfigure}
    \hfill
    \begin{subfigure}[t]{0.325\columnwidth}
        \centering
        \includegraphics[width=\linewidth]{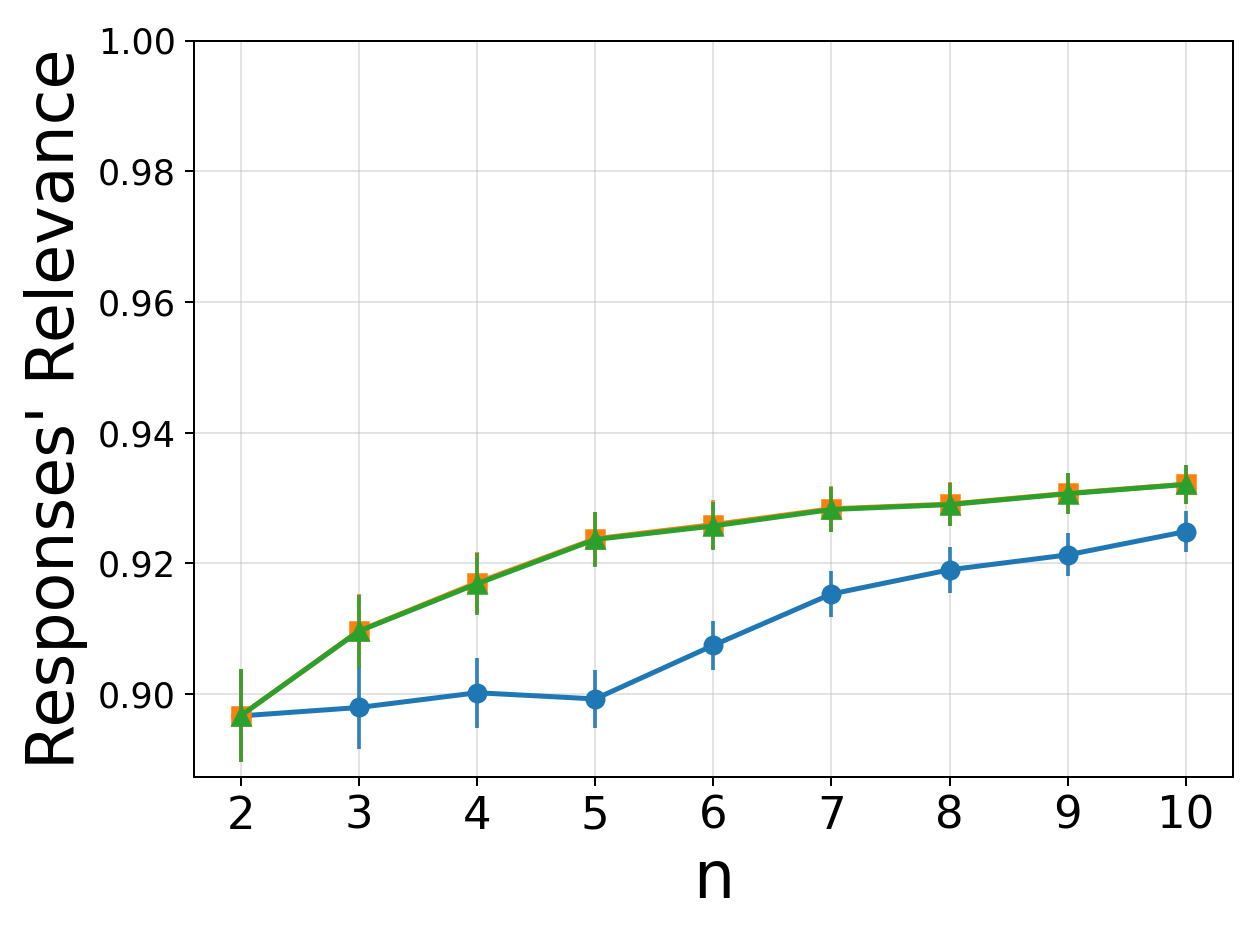}
        
        \caption{RR}\label{fig:number_of_players_RR}
    \end{subfigure}
    \hfill
    \begin{subfigure}[t]{0.325\columnwidth} 
        \centering
        \includegraphics[width=\linewidth]{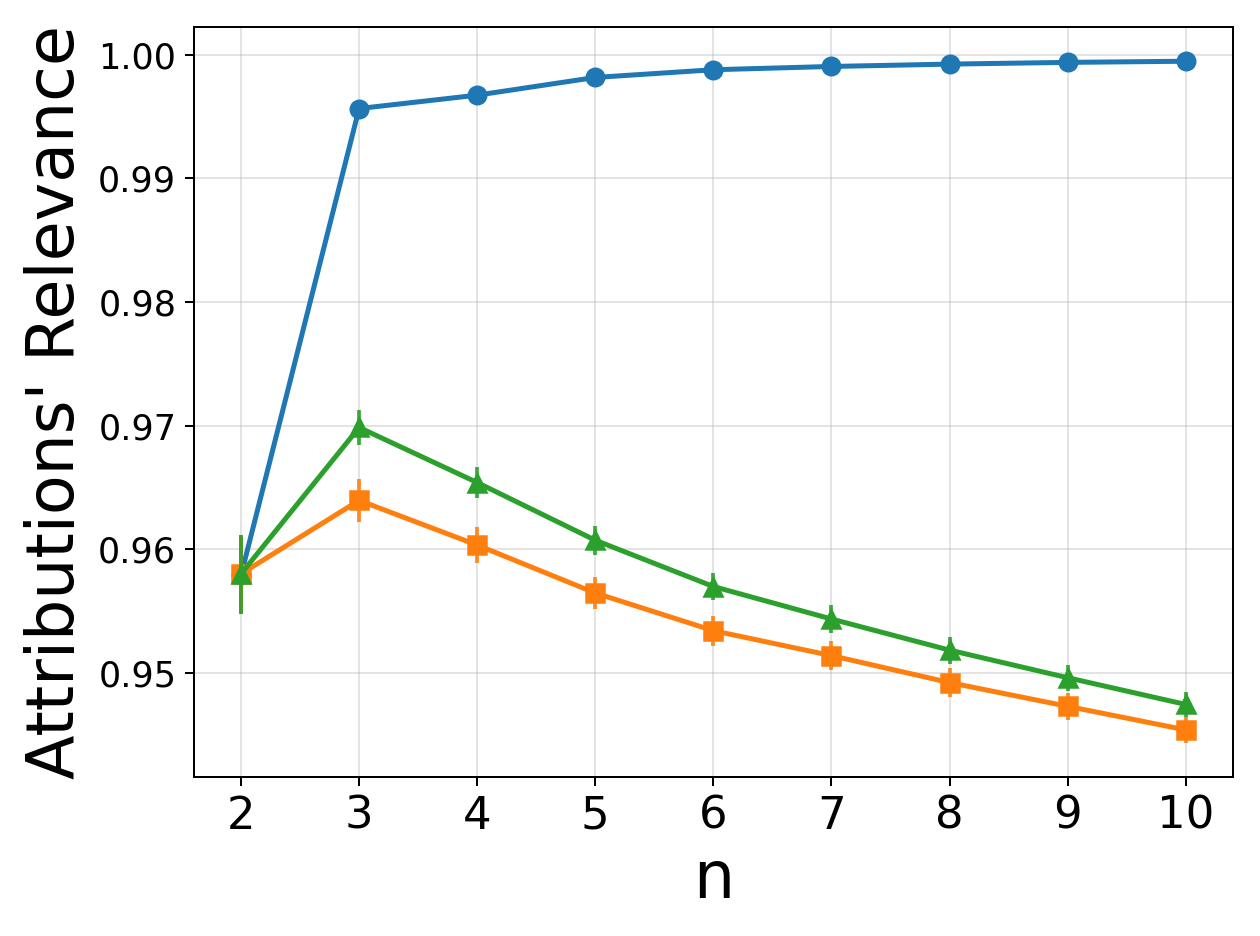}
        
        \caption{AR}\label{fig:number_of_players_AR}
    \end{subfigure}
\end{subfigure}

    \caption{Stability and welfare analysis of the \PRP{}, softmax, and linear attribution functions for various values of players $n$, shown in blue circles, orange squares, and green triangles, respectively.
    The top charts report stability measures: convergence ratio and convergence rate. The bottom charts report welfare measures: publishers' welfare (PW), responses' relevance (RR), and attributions' relevance (AR).}
    \label{fig:number_of_players}
\end{figure}

\section{Sensitivity Analysis of Our Experimental Results} \label{apn:Hyperparameters_Effect}

This section provides supplementary empirical analyses that extend the results reported in Section~\ref{sec:empirical_results}.
We assess the sensitivity of the result patterns reported in Section~\ref{sec:empirical_results} to the hyperparameter values chosen and explore novel trends.
We study the effect of the number of publishers $n$, the documents' representation dimension $\text{\di}$, the hyperparameters $\beta$ and $m$ of the softmax and linear distributions, and the distribution of the user's question and initial documents on the stability and welfare in the ecosystem.
We use the algorithm, evaluation measures, and hyperparameters as described in Section~\ref{sec:empirical_results} for $\alpha$ analysis.

\paragraph{The Number of Players.} Figure~\ref{fig:number_of_players} reports the stability and welfare measures for various numbers of players $n$ (from 2 to 10).
We highlight that in all measures and for all $n\neq 2$, the relative performance (as observed in Section~\ref{sec:empirical_results}) between our three attribution functions is preserved. 
For $n=2$, all functions perform similarly in terms of convergence rate and welfare measures, since both publishers always have identical distances from the generated response, resulting in a degenerate game in that sense.

In terms of the convergence ratio, the results presented in Section~\ref{sec:empirical_results} hold for all values of $n$. We can see in Figure~\ref{fig:number_of_players_convergence_ratio} that the softmax and linear functions induce stability (i.e., convergence ratio of 1). The \PRP{} function induces an unstable dynamics, with a convergence ratio of approximately $0$. 
As expected, in Figure~\ref{fig:number_of_players_convergence_rate} we observe that more players induce slower convergence. In fact, the convergence rate increases linearly with the number of players.

In terms of publishers' welfare (Figure~\ref{fig:number_of_players_PW}), the numbers under the softmax and linear functions remain relatively stable, whereas under the \PRP{} function, they decline and stabilize for $n>5$. This indicates that while the softmax and linear functions maintain the welfare of publishers as more publishers participate in the competition, the \PRP{} function maintains a substantially lower welfare for the higher values of $n$.

As the number of content creators increases, we observe increased performance in terms of the responses' relevance (Figure~\ref{fig:number_of_players_RR}). This suggests that GenAI ecosystems with a large number of content creators produce more relevant responses to users than those with a small number.
Moreover, for $n>3$, the attributions' relevance (Figure~\ref{fig:number_of_players_AR}) under the softmax and linear functions decreases and, under the \PRP{} function, increases. In fact, there is a substantially increased gap in relevance between the \PRP{} function and the softmax and the linear functions. Notably, the attributions' relevance is almost optimal under the \PRP{} function.
These observations are important for mechanism designers aiming to ground responses in external sources or to refer users to external sources for additional information (e.g., using citations).

\begin{figure}[t]
    \centering

\begin{subfigure}[t]{\columnwidth}
    \centering

    \begin{subfigure}[t]{0.325\columnwidth}
        \centering
        \includegraphics[width=\linewidth]{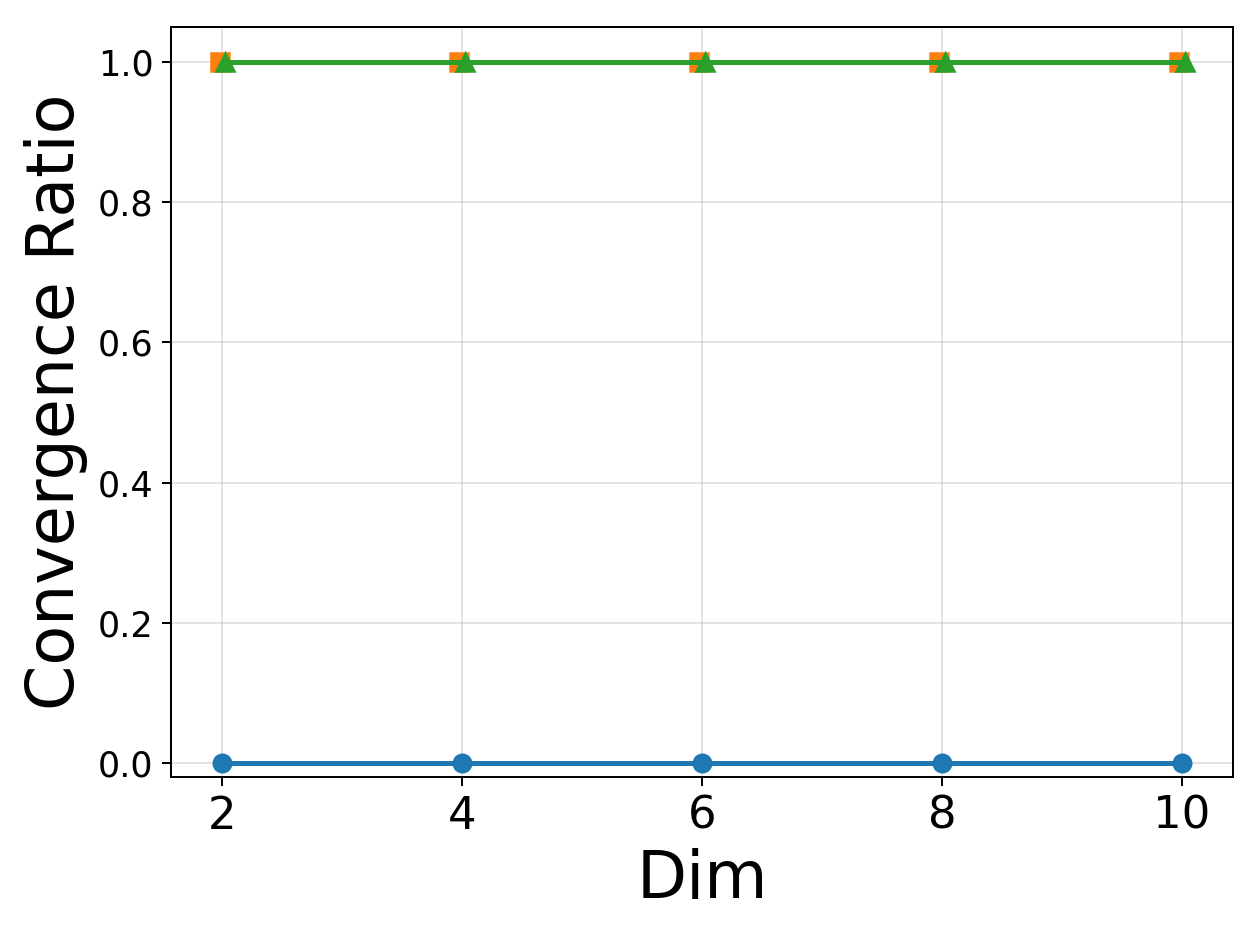}
        \caption{Convergence ratio}
        \label{fig:Dimension_convergence_ratio}
    \end{subfigure}
    \hspace{0.06\columnwidth}
    \begin{subfigure}[t]{0.325\columnwidth}
        \centering
        \includegraphics[width=\linewidth]{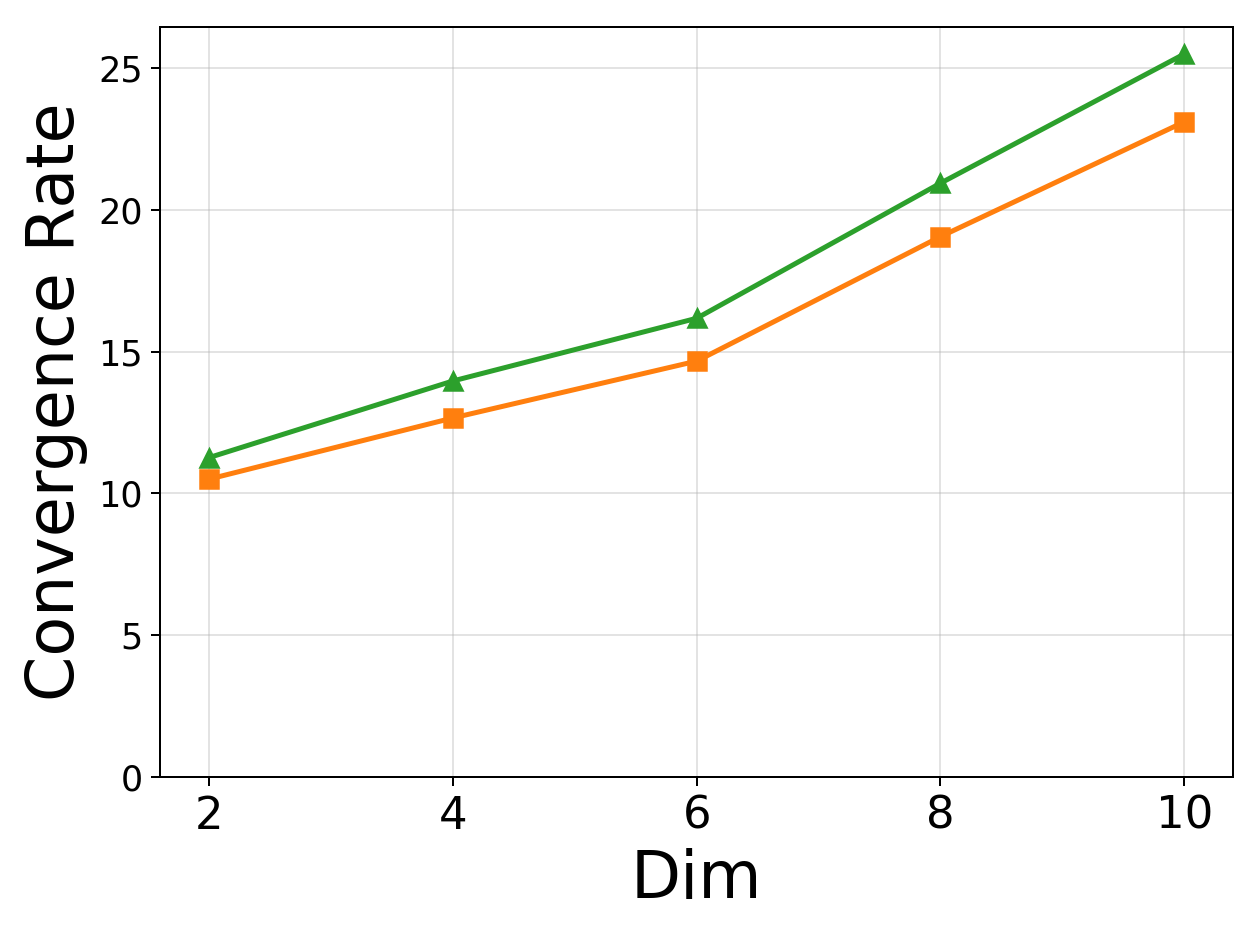}
        
        \caption{Convergence rate}\label{fig:Dimension_convergence_rate}
    \end{subfigure}

\end{subfigure}

\begin{subfigure}[t]{\columnwidth}
    \centering

    \begin{subfigure}[t]{0.325\columnwidth}
        \centering
        \includegraphics[width=\linewidth]{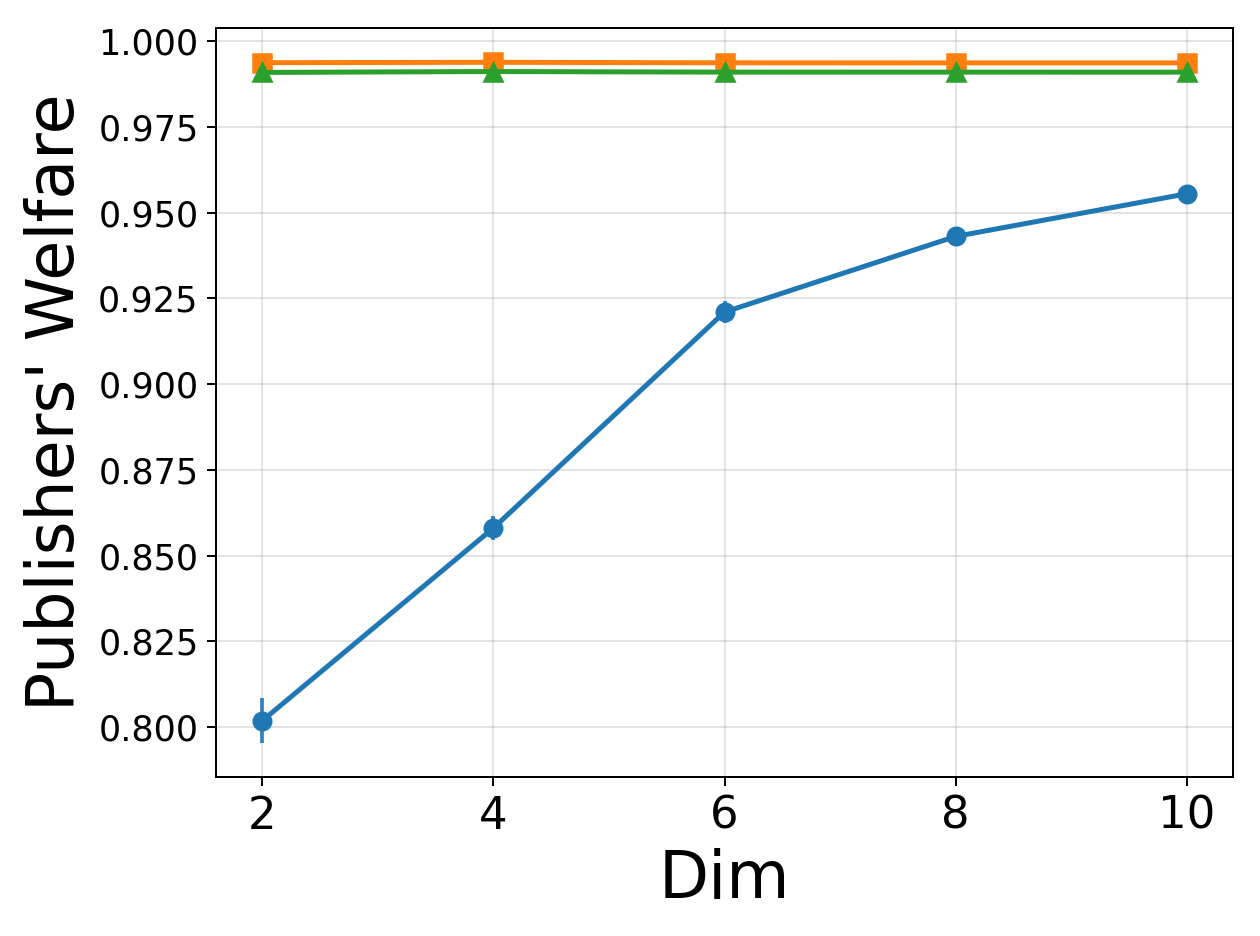}
        
        \caption{PW}\label{fig:Dimension_PW}
    \end{subfigure}
    \hfill
    \begin{subfigure}[t]{0.325\columnwidth}
        \centering
        \includegraphics[width=\linewidth]{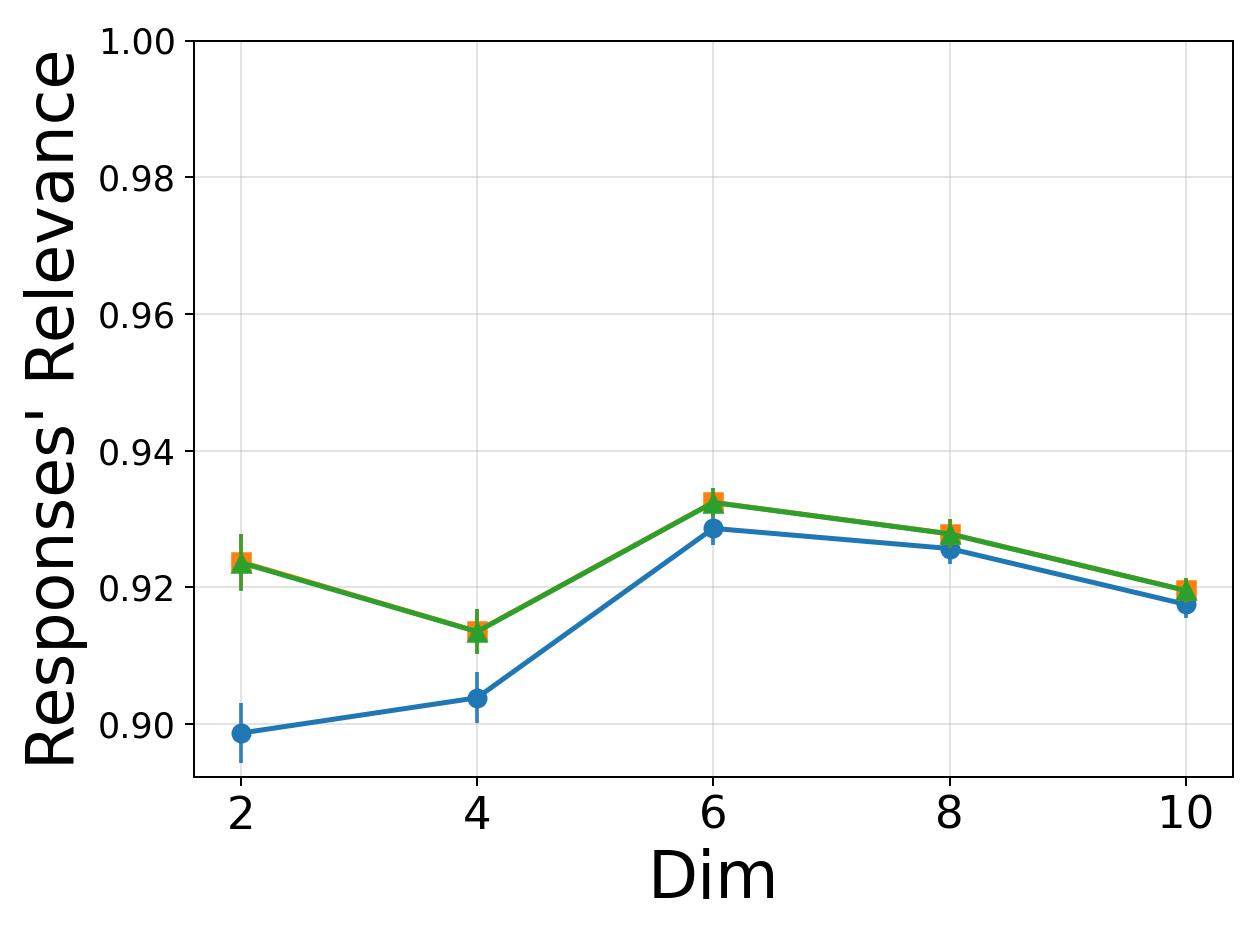}
        
        \caption{RR}\label{fig:Dimension_RR}
    \end{subfigure}
    \hfill
    \begin{subfigure}[t]{0.325\columnwidth} 
        \centering
        \includegraphics[width=\linewidth]{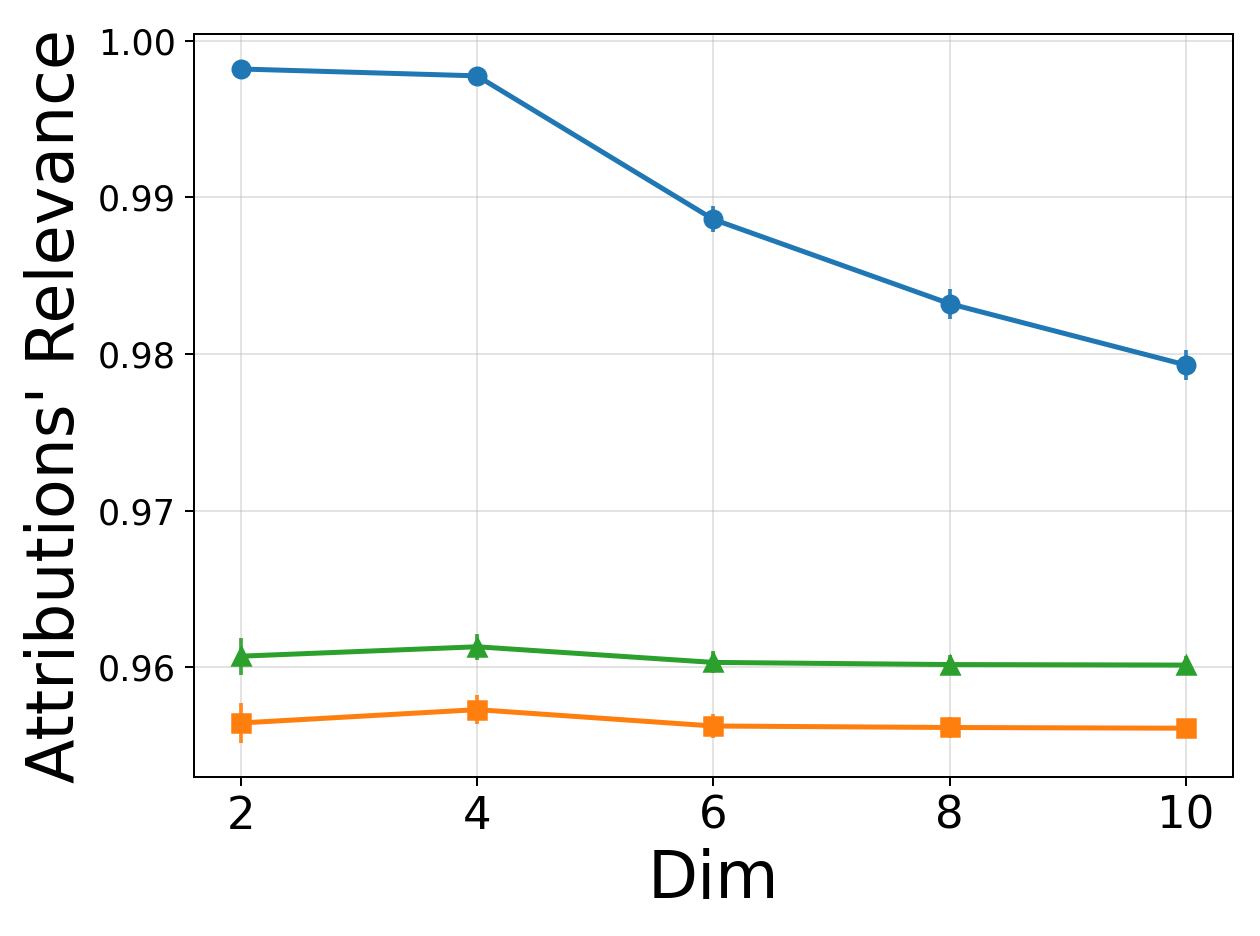}
        
        \caption{AR}\label{fig:Dimensio_AR}
    \end{subfigure}

\end{subfigure}

    \caption{Stability and welfare analysis of the \PRP{}, softmax and linear attribution functions for various values of the documents’ representations dimension $\text{\di}$, shown in blue circles, orange squares, and green triangles, respectively.
    The top charts report stability measures: convergence ratio and convergence rate. The bottom charts report welfare measures: publishers' welfare (PW), responses' relevance (RR), and attributions' relevance (AR).}
    \label{fig:dimention}
\end{figure}

\paragraph{Documents' Representation Dimension.}
Figure~\ref{fig:dimention} reports the stability and welfare measures for various values of the dimension of documents' representation in the embedding space $\text{\di}$.
We acknowledge that, due to resource limitations, our experiments are conducted using much smaller dimensions than those used by state-of-the-art LLMs (e.g., $768$ by BERT \cite{warner2024smarterbetterfasterlonger}). We hasten to point out that our goal is to identify trends, not concrete conclusions regarding a specific dimension.

Figure~\ref{fig:dimention} reports stability and welfare measures for various values of embedding dimensions \di{}. 
A key observation is that for all values of $\text{\di}$, the relative performance between our three attribution functions is preserved. 
In terms of convergence ratio, as one may expect, we can see in Figure~\ref{fig:Dimension_convergence_ratio} that it is not affected by the dimension value; specifically, the dynamics induced by the softmax and the linear functions converge for every value of \di{}.
Since, as the value of \di{} increases, convergence requires additional coordinate-wise adjustments, we observe in Figure~\ref{fig:Dimension_convergence_rate} that the convergence rate increases with the value of \di{}.

In terms of welfare, Figures \ref{fig:Dimension_PW}, \ref{fig:Dimension_RR}, and \ref{fig:Dimensio_AR} show that all measures admit diminishing gaps between the values induced by the \PRP{} function and those induced by the softmax and linear functions. Note that since the diminishing gaps may be a result of the normalized Euclidean distance, our focus is on relative performance between the functions.
The gap between the values induced by the softmax function and the linear function remains approximately the same. 
This observation suggests that, for high values of the dimension \di{}, welfare may be approximately the same across different mechanisms; however, this possibility requires further investigation.

\begin{figure}[t]
    \centering

\begin{subfigure}[t]{\columnwidth}
    \centering

    \begin{subfigure}[t]{0.325\columnwidth}
        \centering
        \includegraphics[width=\linewidth]{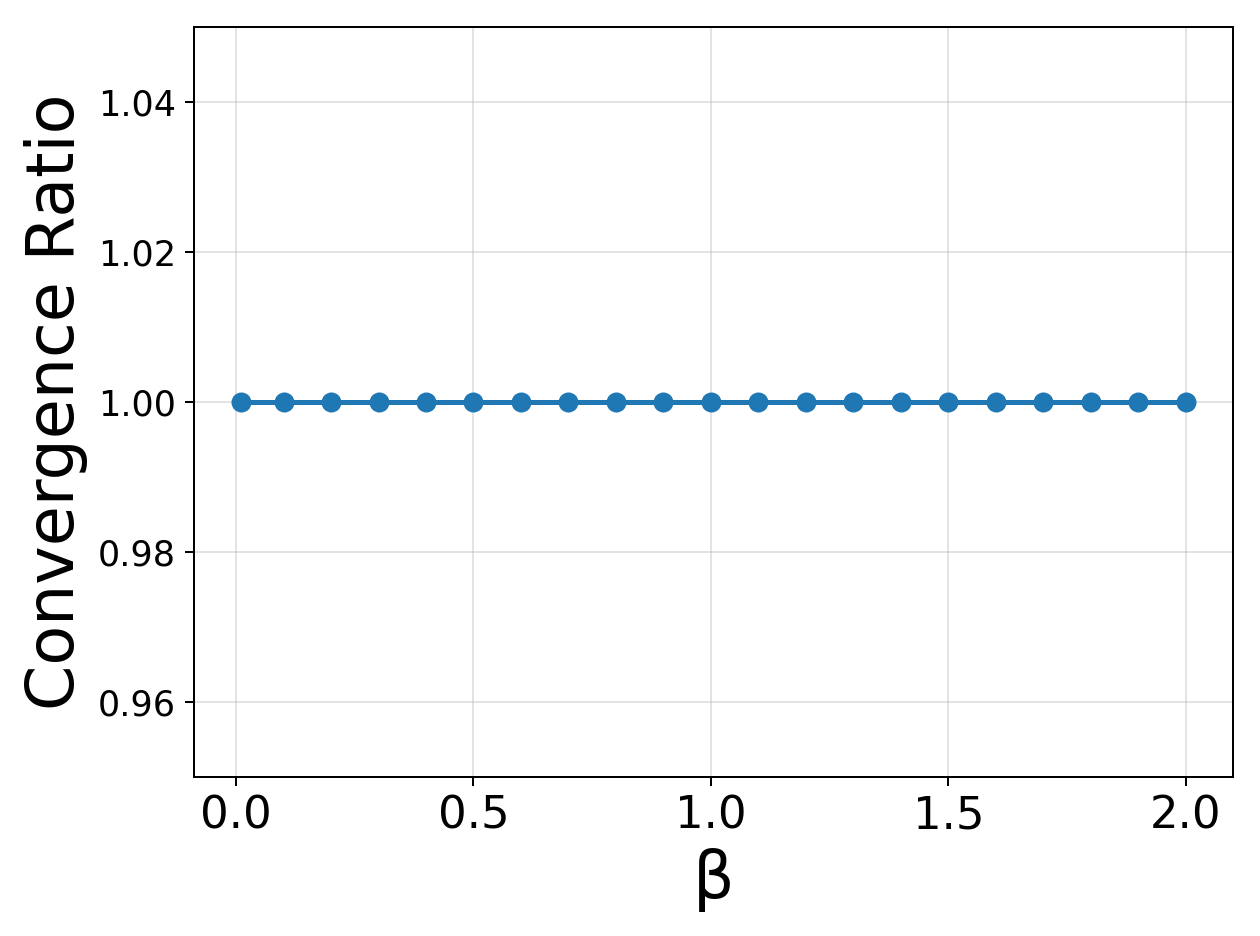}
        \caption{Convergence ratio}
        \label{fig:beta_effect_convergence_ratio}
    \end{subfigure}
    \hspace{0.06\columnwidth}
    \begin{subfigure}[t]{0.325\columnwidth}
        \centering
        \includegraphics[width=\linewidth]{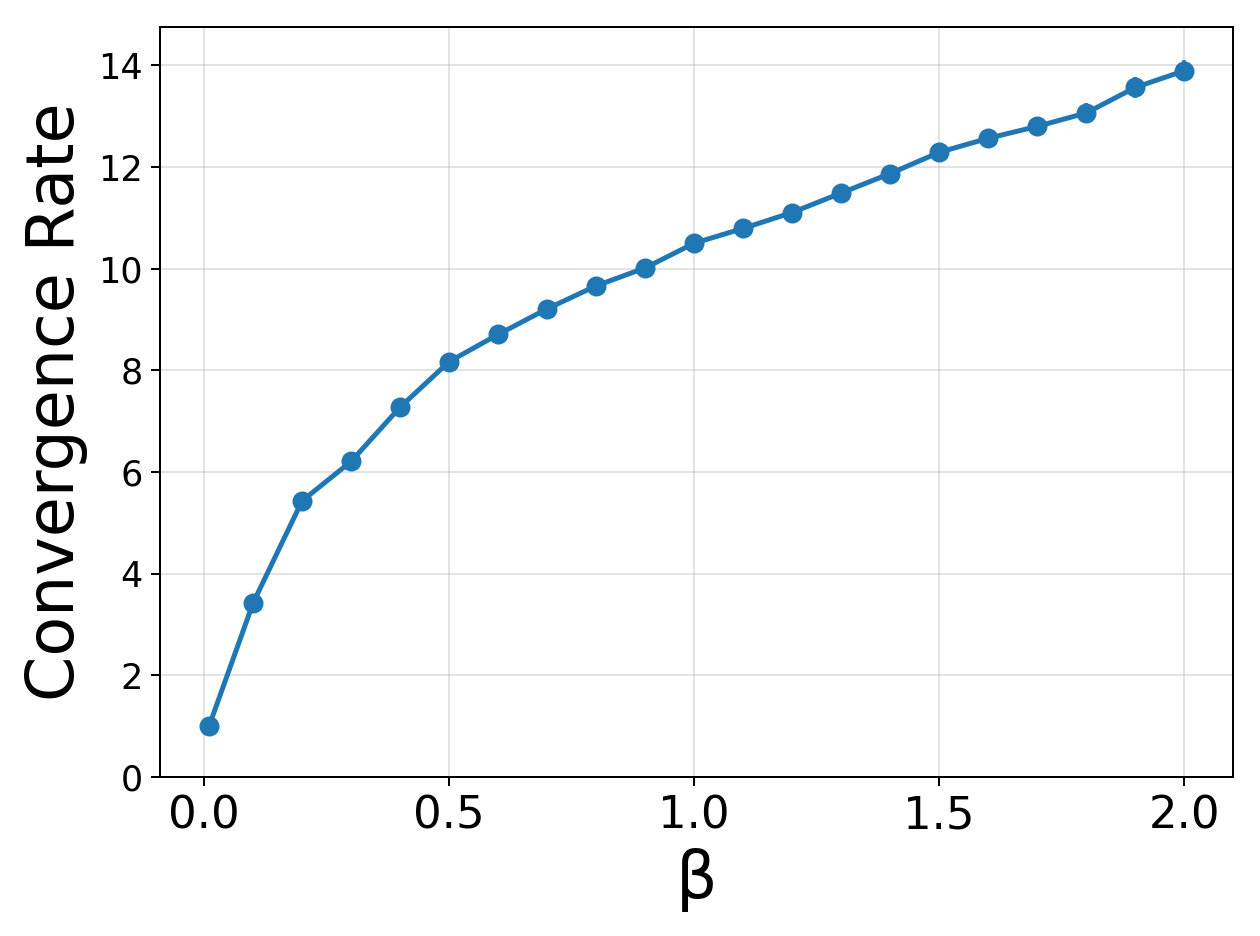}
        
        \caption{Convergence rate}\label{fig:beta_effect_convergence_rate}
    \end{subfigure}

\end{subfigure}

\begin{subfigure}[t]{\columnwidth}
    \centering

    \begin{subfigure}[t]{0.325\columnwidth}
        \centering
        \includegraphics[width=\linewidth]{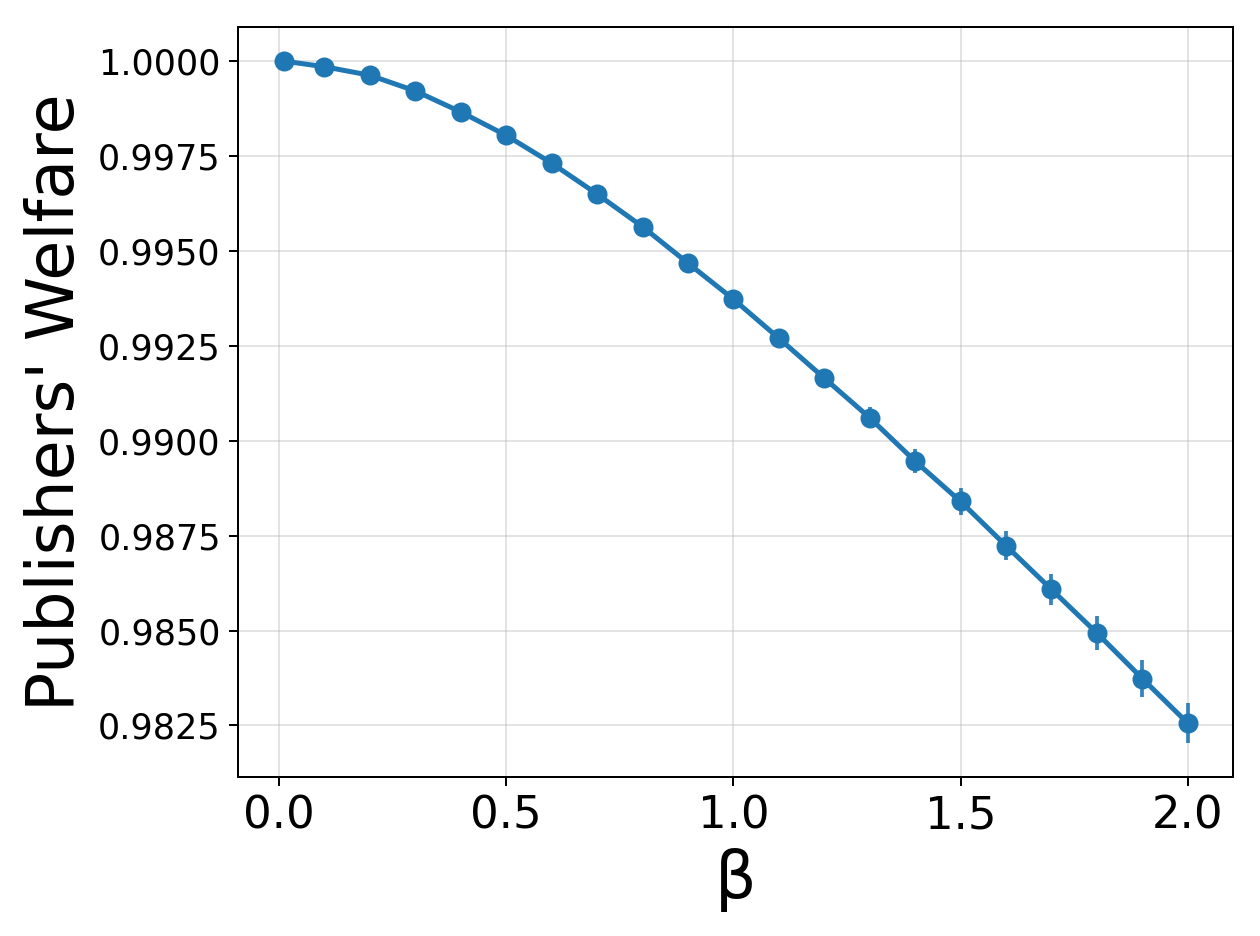}
        
        \caption{PW}\label{fig:beta_effect_PW}
    \end{subfigure}
    \hfill
    \begin{subfigure}[t]{0.325\columnwidth}
        \centering
        \includegraphics[width=\linewidth]{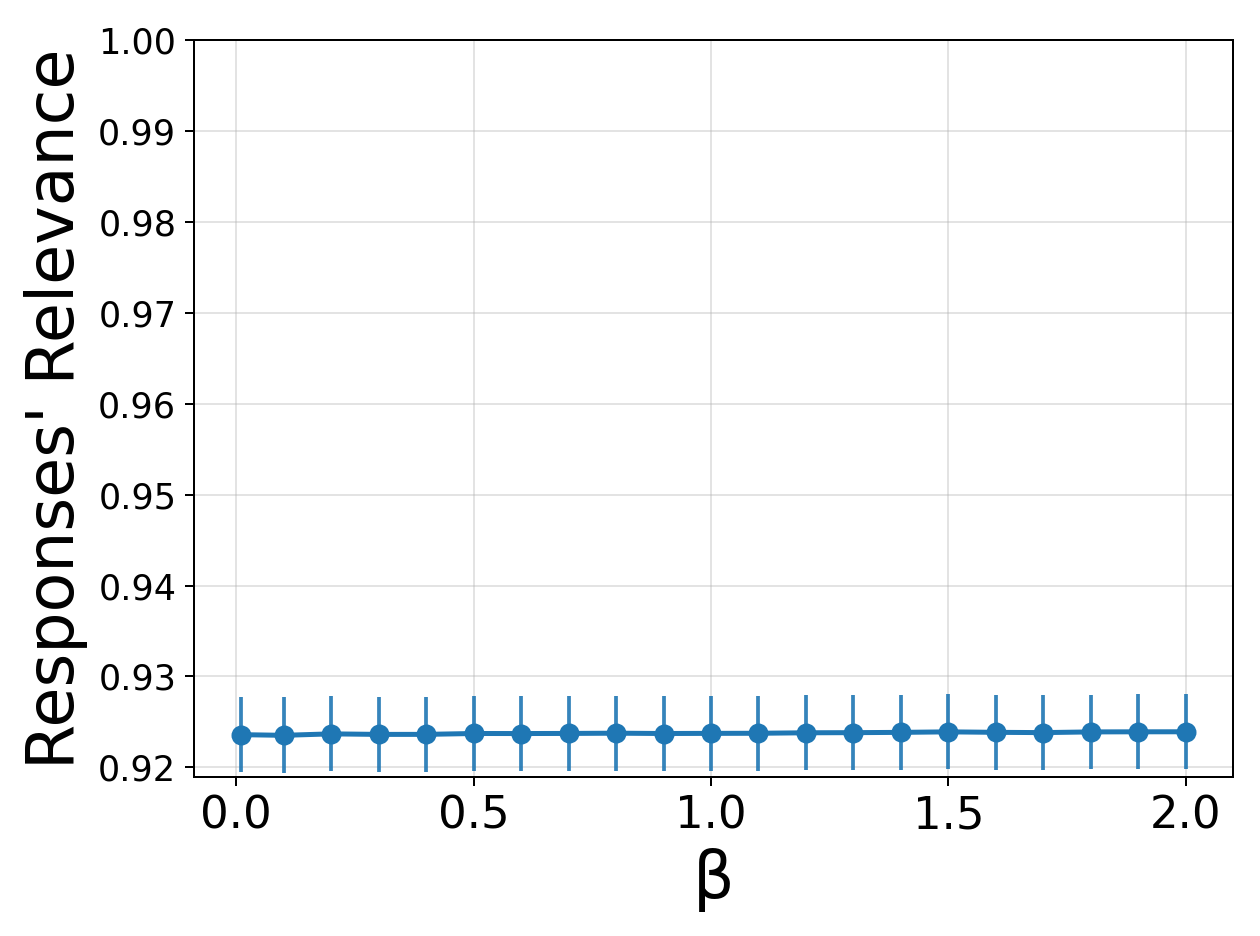}
        
        \caption{RR}\label{fig:beta_effect_RR}
    \end{subfigure}
    \hfill
    \begin{subfigure}[t]{0.325\columnwidth} 
        \centering
        \includegraphics[width=\linewidth]{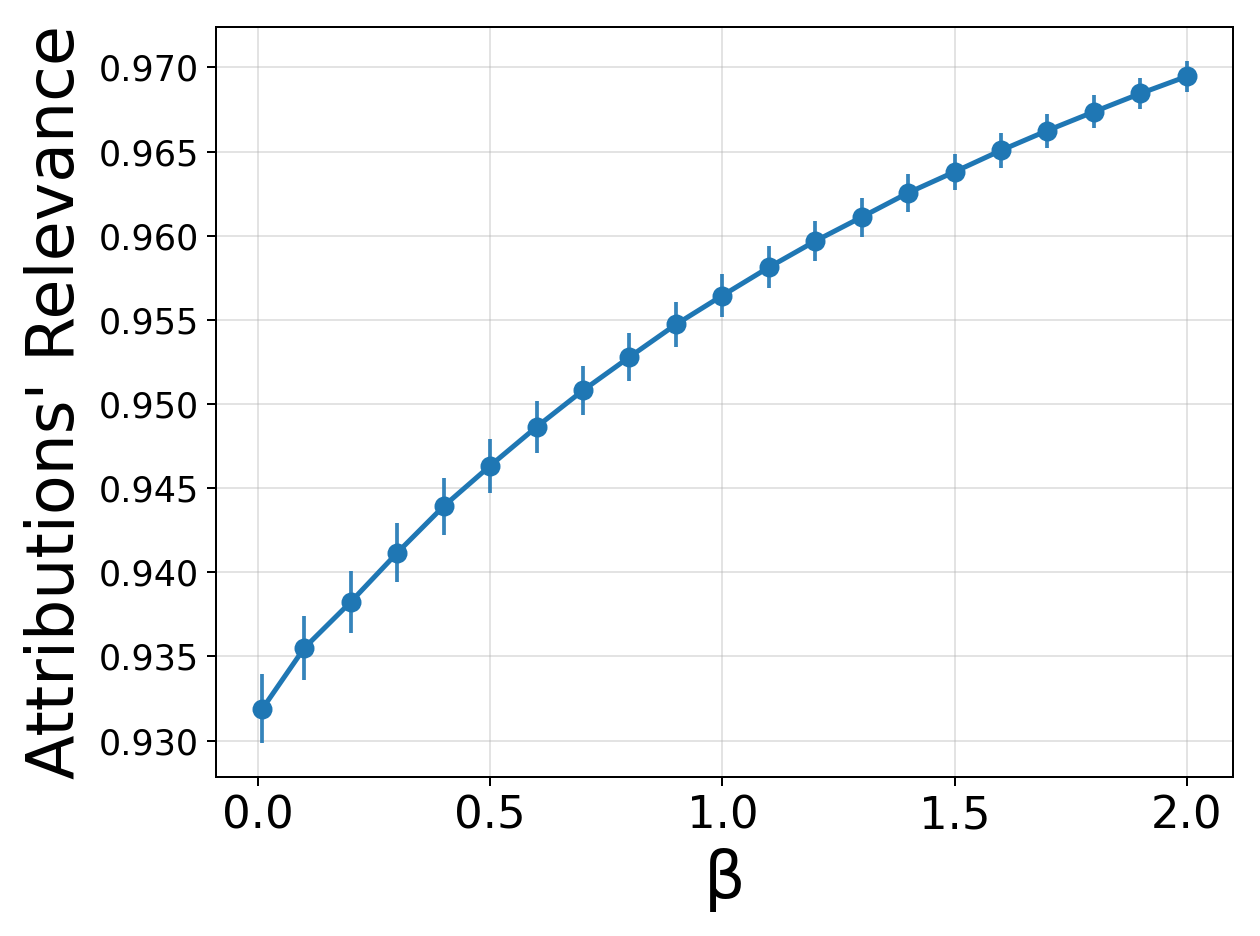}
        
        \caption{AR}\label{fig:beta_effect_AR}
    \end{subfigure}

\end{subfigure}

    \caption{Stability and welfare analysis of the softmax attribution functions for various values of the softmax function distribution's inverse-temperature $\beta$.
    The top charts report stability measures: convergence ratio and convergence rate. The bottom charts report welfare measures: publishers' welfare (PW), responses' relevance (RR), and attributions' relevance (AR).}
    \label{fig:beta_effect}
\end{figure}

\begin{figure}[t]
    \centering

\begin{subfigure}[t]{\columnwidth}
    \centering

    \begin{subfigure}[t]{0.325\columnwidth}
        \centering
        \includegraphics[width=\linewidth]{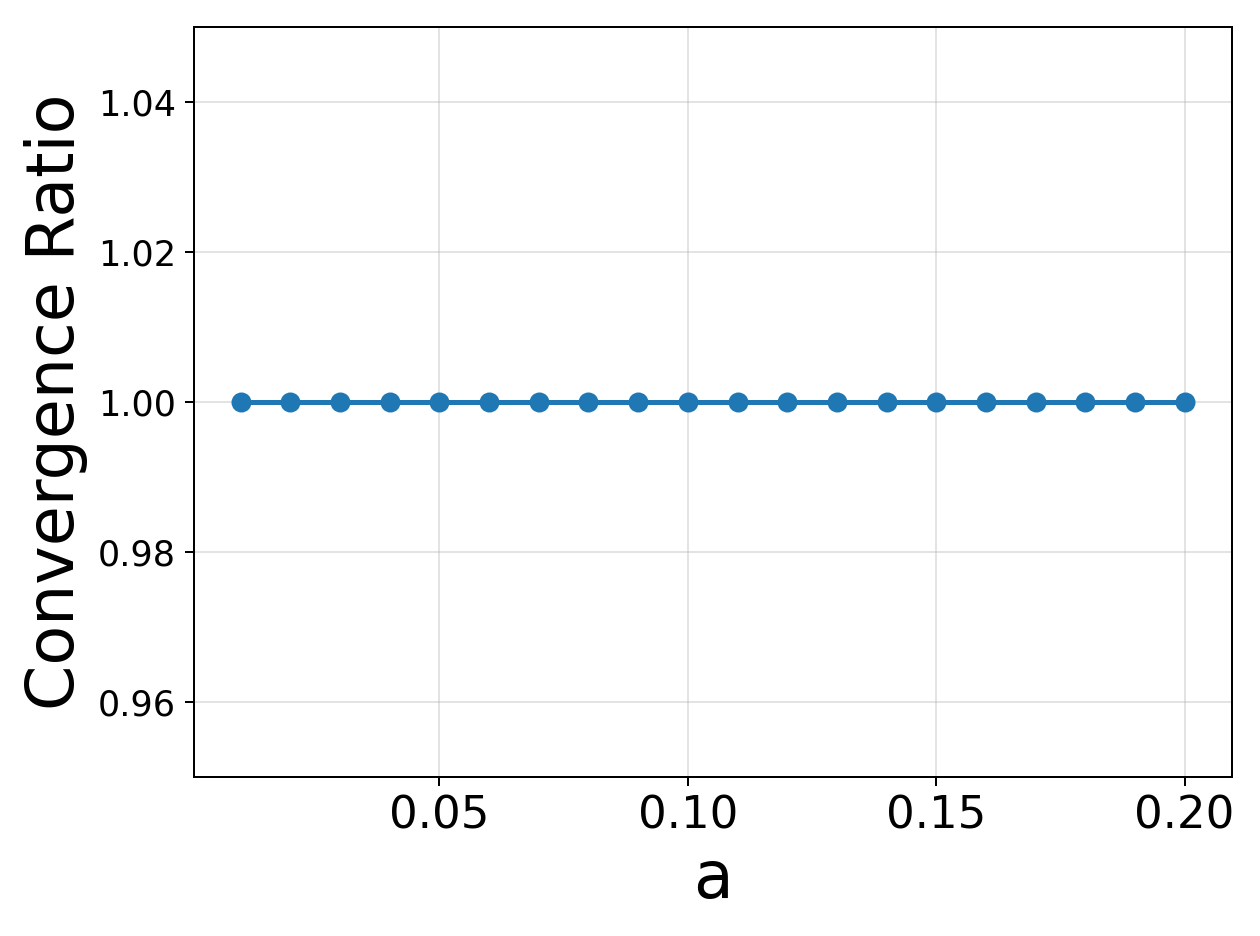}
        \caption{Convergence ratio}
        \label{fig:a_effect_convergence_ratio}
    \end{subfigure}
    \hspace{0.06\columnwidth}
    \begin{subfigure}[t]{0.325\columnwidth}
        \centering
        \includegraphics[width=\linewidth]{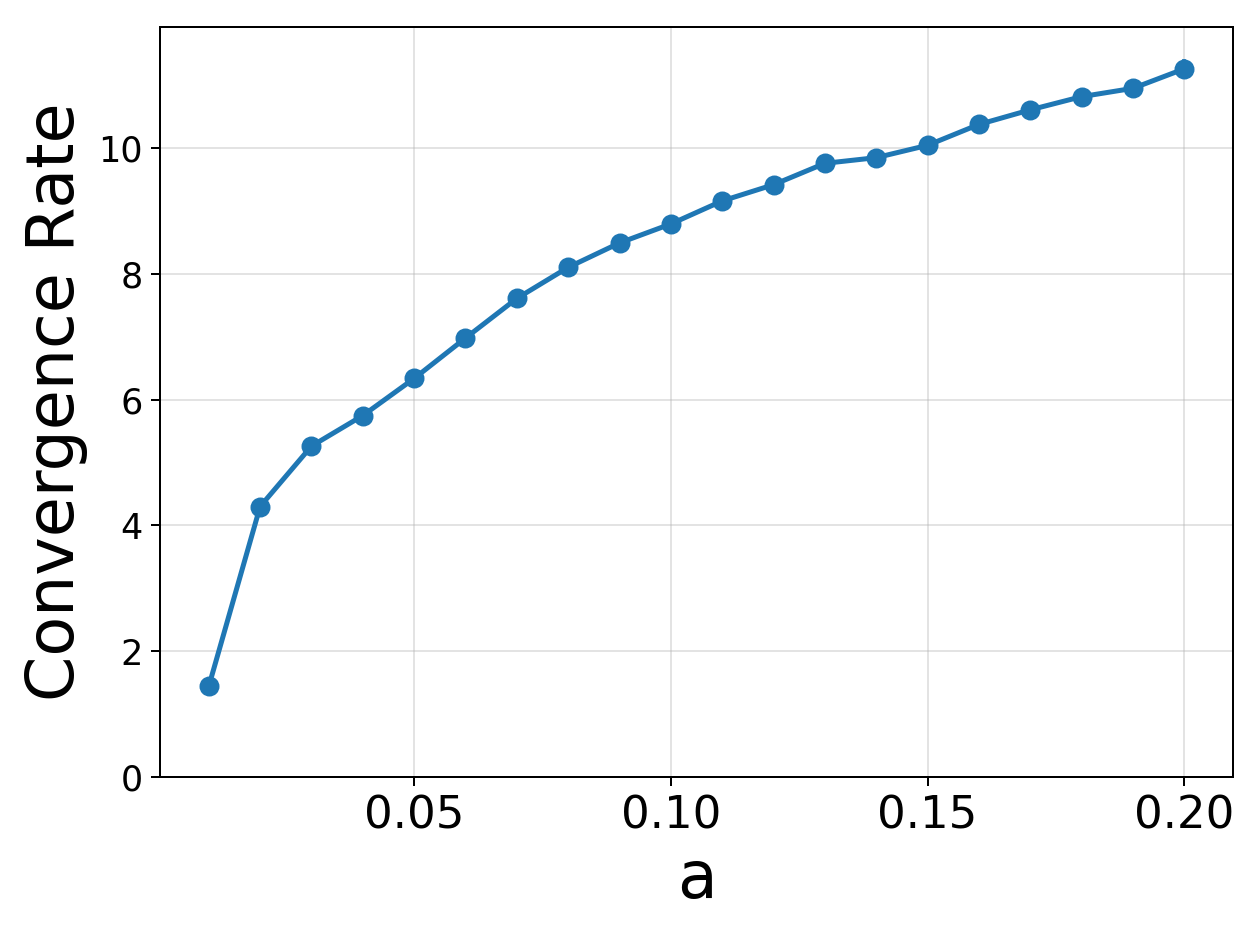}
        
        \caption{Convergence rate}\label{fig:a_effect_convergence_rate}
    \end{subfigure}

\end{subfigure}

\begin{subfigure}[t]{\columnwidth}
    \centering

    \begin{subfigure}[t]{0.325\columnwidth}
        \centering
        \includegraphics[width=\linewidth]{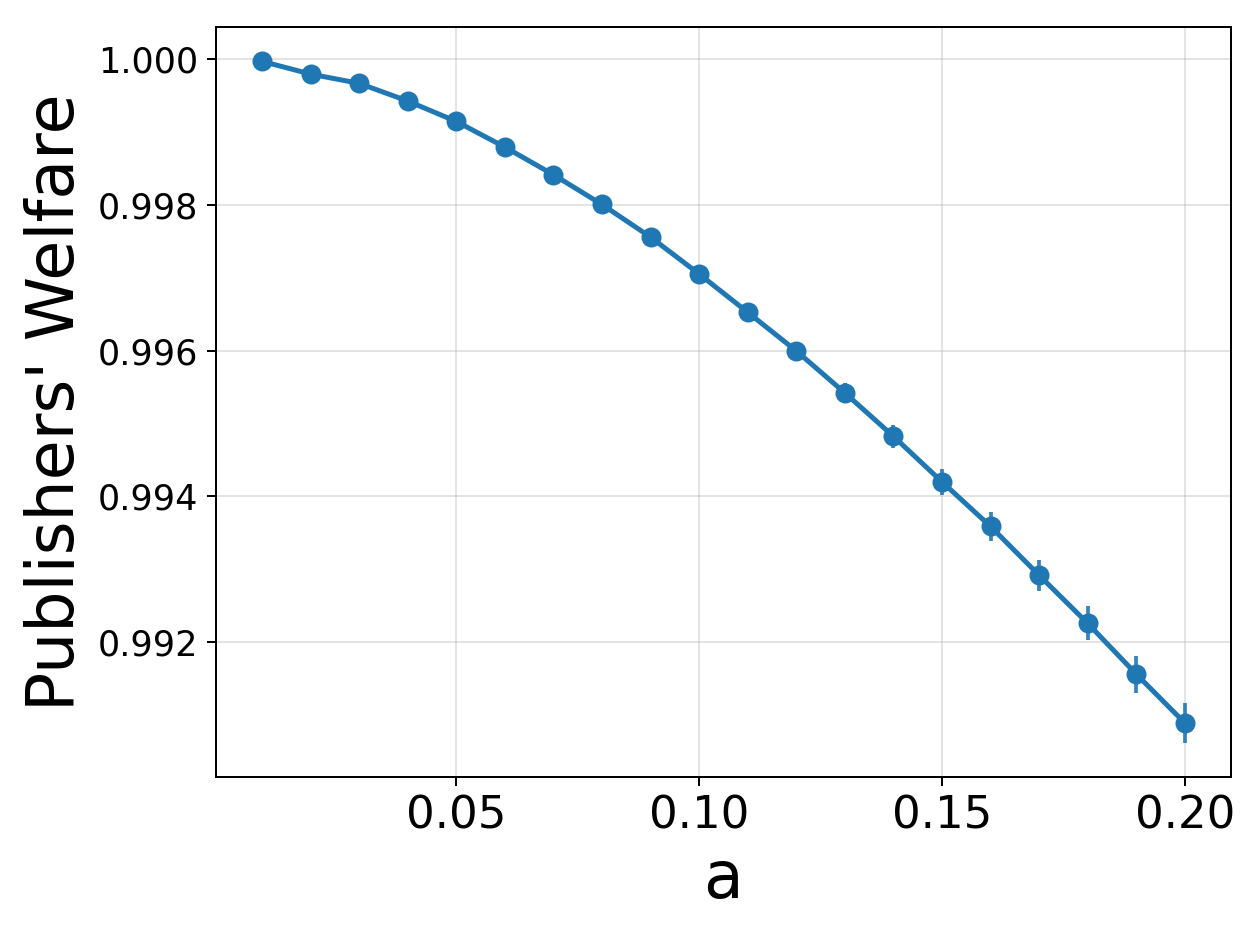}
        
        \caption{PW}\label{fig:a_effect_PW}
    \end{subfigure}
    \hfill
    \begin{subfigure}[t]{0.325\columnwidth}
        \centering
        \includegraphics[width=\linewidth]{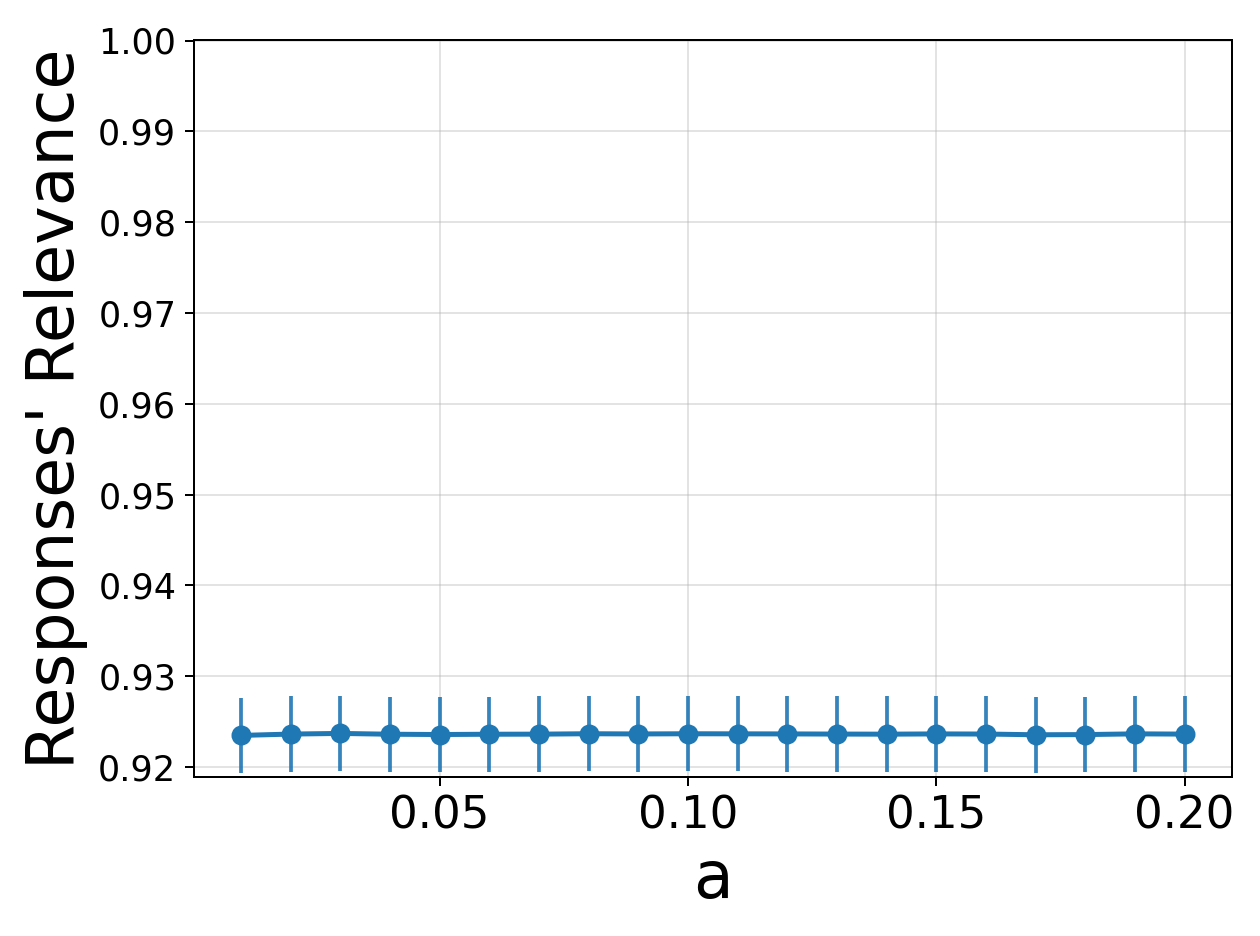}
        
        \caption{RR}\label{fig:a_effect_RR}
    \end{subfigure}
    \hfill
    \begin{subfigure}[t]{0.325\columnwidth} 
        \centering
        \includegraphics[width=\linewidth]{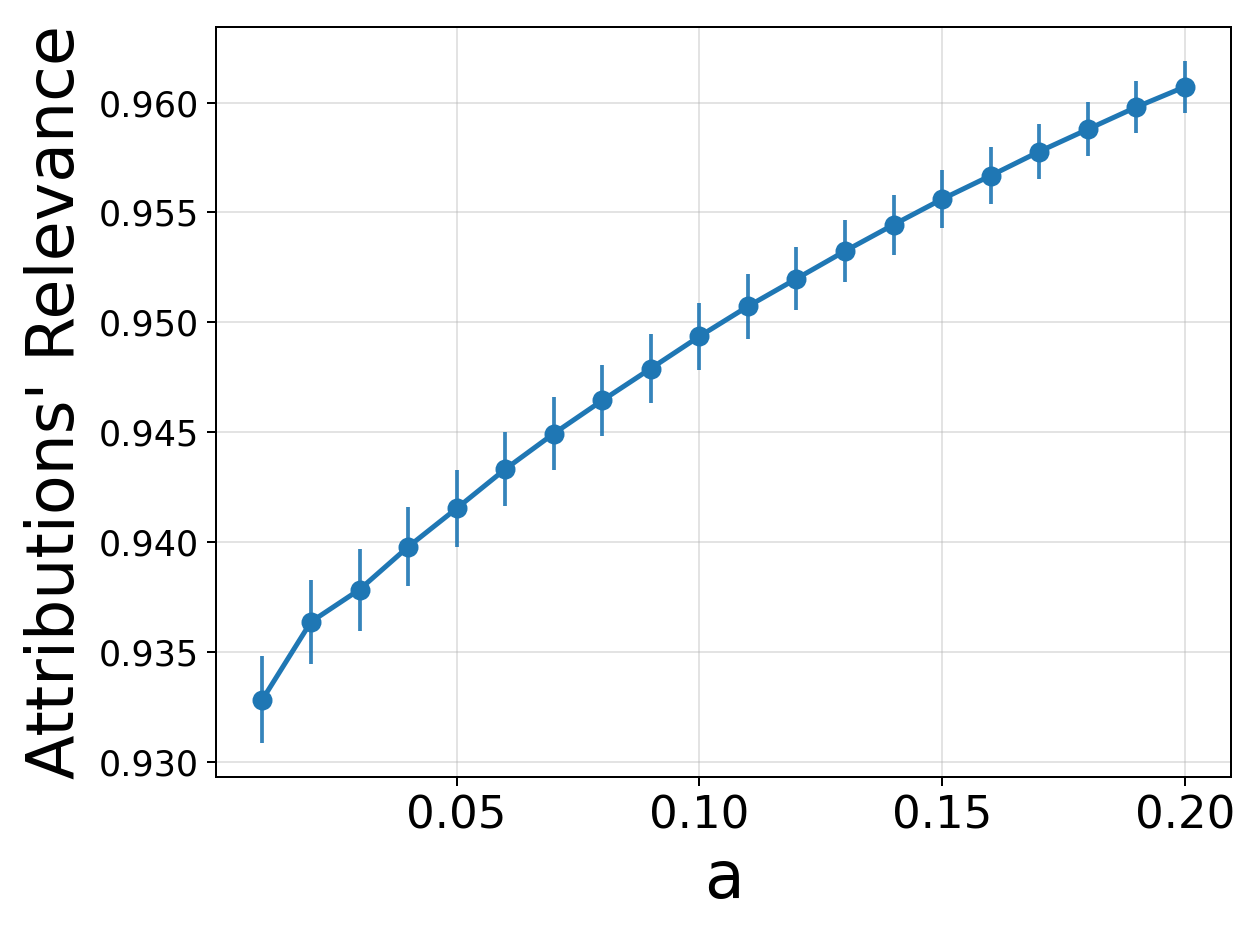}
        
        \caption{AR}\label{fig:a_effect_AR}
    \end{subfigure}

\end{subfigure}

    \caption{Stability and welfare analysis of the linear attribution functions for various values of the linear relative relevance distribution's slope $m$.
    The top charts report stability measures: convergence ratio and convergence rate. The bottom charts report welfare measures: publishers' welfare (PW), responses' relevance (RR), and attributions' relevance (AR).}
    \label{fig:a_effect}
\end{figure}

\paragraph{Distributions' Hyperparameters.}
We now turn to examine the effect of distributional hyperparameters induced by the softmax and linear functions. Specifically, the inverse-temperature $\beta$ of the softmax distribution and the slope $m$ of the linear distribution.

Figure~\ref{fig:beta_effect} presents the stability and welfare measures under the softmax attribution function for various values of $\beta$, ranging from $0.01$\footnote{The smallest value of $\beta$ is larger than $0$ since $\beta = 0$ entails a constant distribution over publishers; i.e., publishers' action does not affect their attribution score.} to $2$ in increments of $0.1$. 
We observe in Figure~\ref{fig:beta_effect_convergence_ratio} that the convergence ratio values remain constant for every $\beta$, i.e., the ecosystem is stable under the softmax function for any examined value of temperature.
As expected, Figure~\ref{fig:beta_effect_convergence_rate} shows that the convergence rate increases with the inverse-temperature $\beta$. This can be addressed by the fact that larger values of $\beta$ lead to a more winner-focused attribution score, which strengthens the incentives of publishers to deviate at each round.

Furthermore, we identify that while the values of the responses' relevance in Figure~\ref{fig:beta_effect_RR} are approximately constant with $\beta$, in Figure~\ref{fig:beta_effect_PW} and Figure~\ref{fig:beta_effect_AR}, a trade-off exists between publishers' welfare and attributions' relevance.
For smaller $\beta$ values, publishers whose documents are closer to the responses receive a higher attribution score (and, in turn, utility) compared to larger $\beta$ values. 
Consequently, publishers must apply more substantial modifications to their content to increase their utility, which in turn leads to fewer modifications (i.e., higher publishers' welfare) overall and less relevant attributions (i.e., lower attributions' relevance). 
We emphasize that the inverse-temperature value used in Section~\ref{sec:empirical_results}, $\beta = 1$, induces the standard softmax function and provides a natural balance in this trade-off.

Figure~\ref{fig:a_effect} presents the stability and welfare measures under the linear attribution function for various values of $m$, ranging from $0.01$ to $0.2$ (maximum value to induce an appropriate distribution in games with $n=5$) in increments of $0.01$. 
The observed trends are the same as those presented for the inverse-temperature $\beta$ of the softmax function above.
However, in the case of the slope $m$, the reason is different: smaller values induce a more uniform distribution over publishers, thus publishers are less motivated to modify their documents.

\begin{figure}[t]
\centering

\begin{subfigure}[t]{\columnwidth}
\centering

\includegraphics[width=0.325\linewidth]{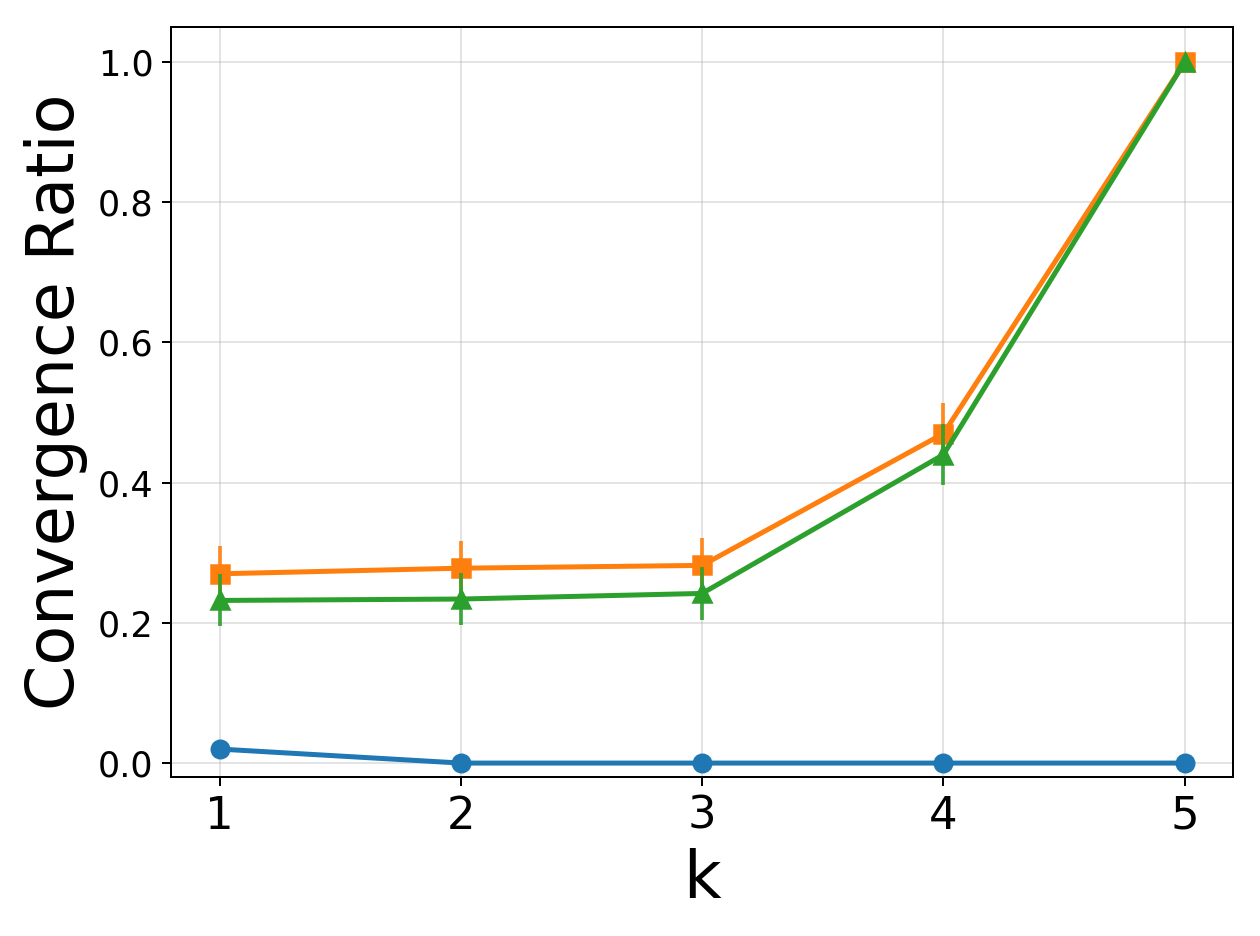}
\hspace{0.06\columnwidth}
\includegraphics[width=0.325\linewidth]{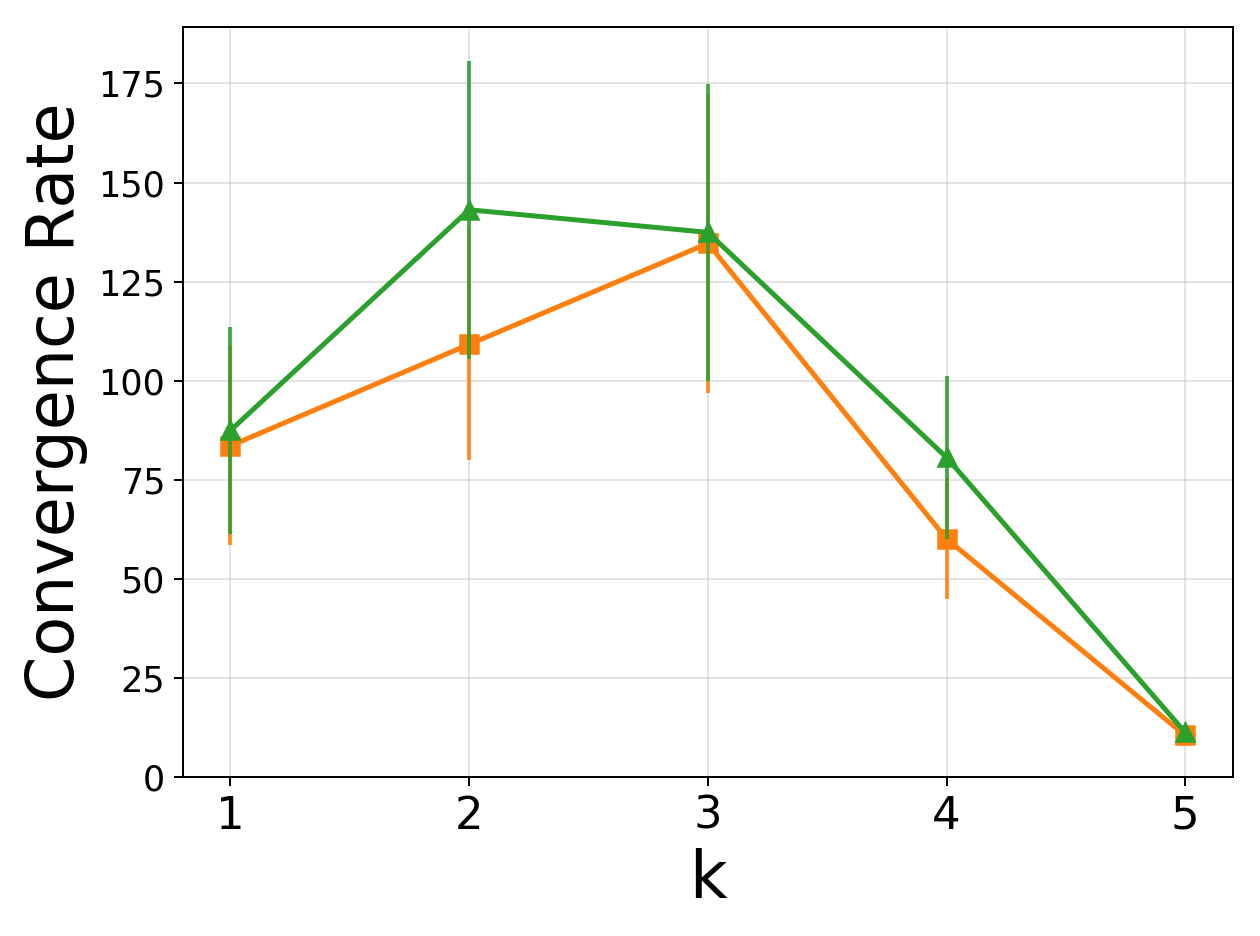}

\includegraphics[width=0.325\linewidth]{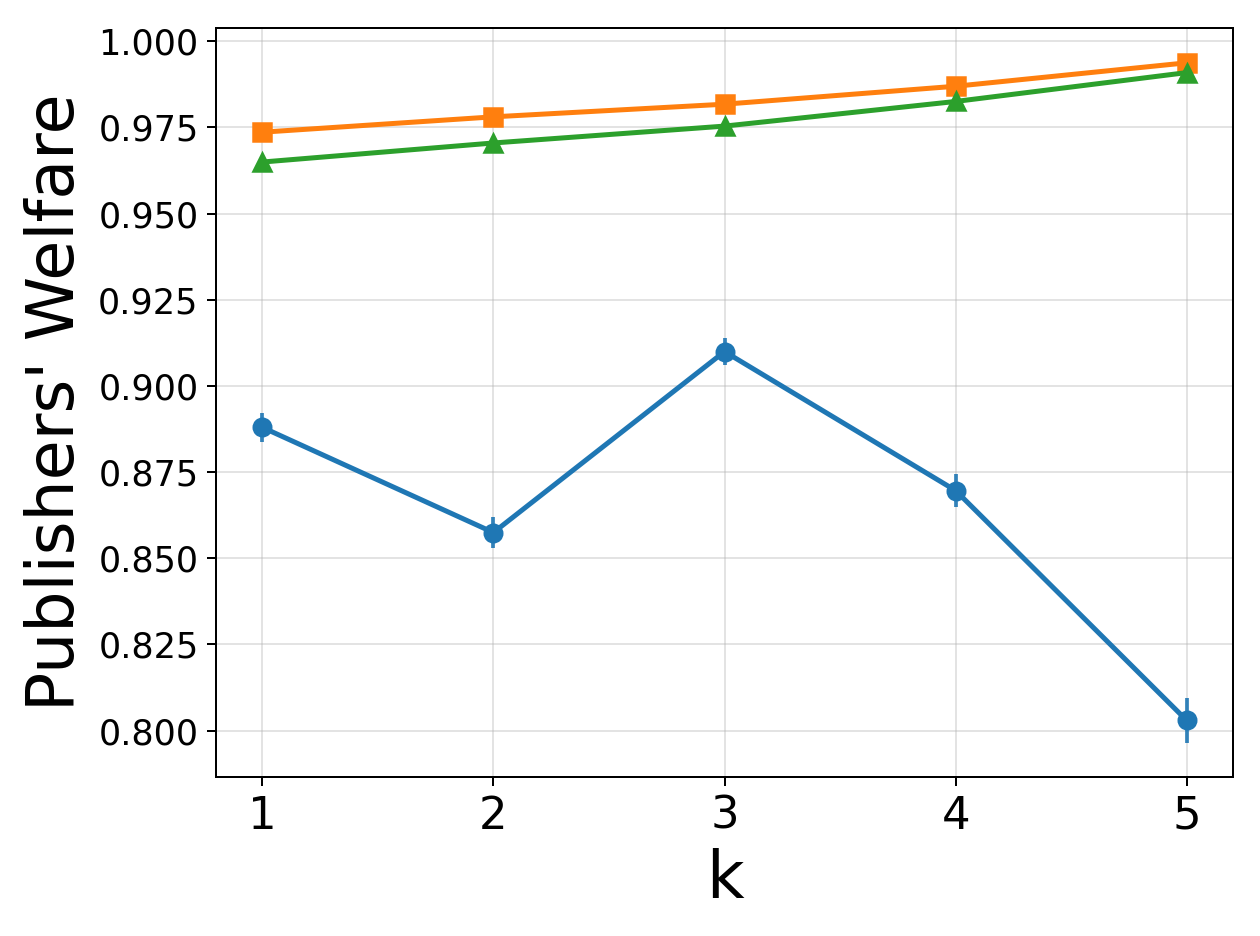}
\hfill
\includegraphics[width=0.325\linewidth]{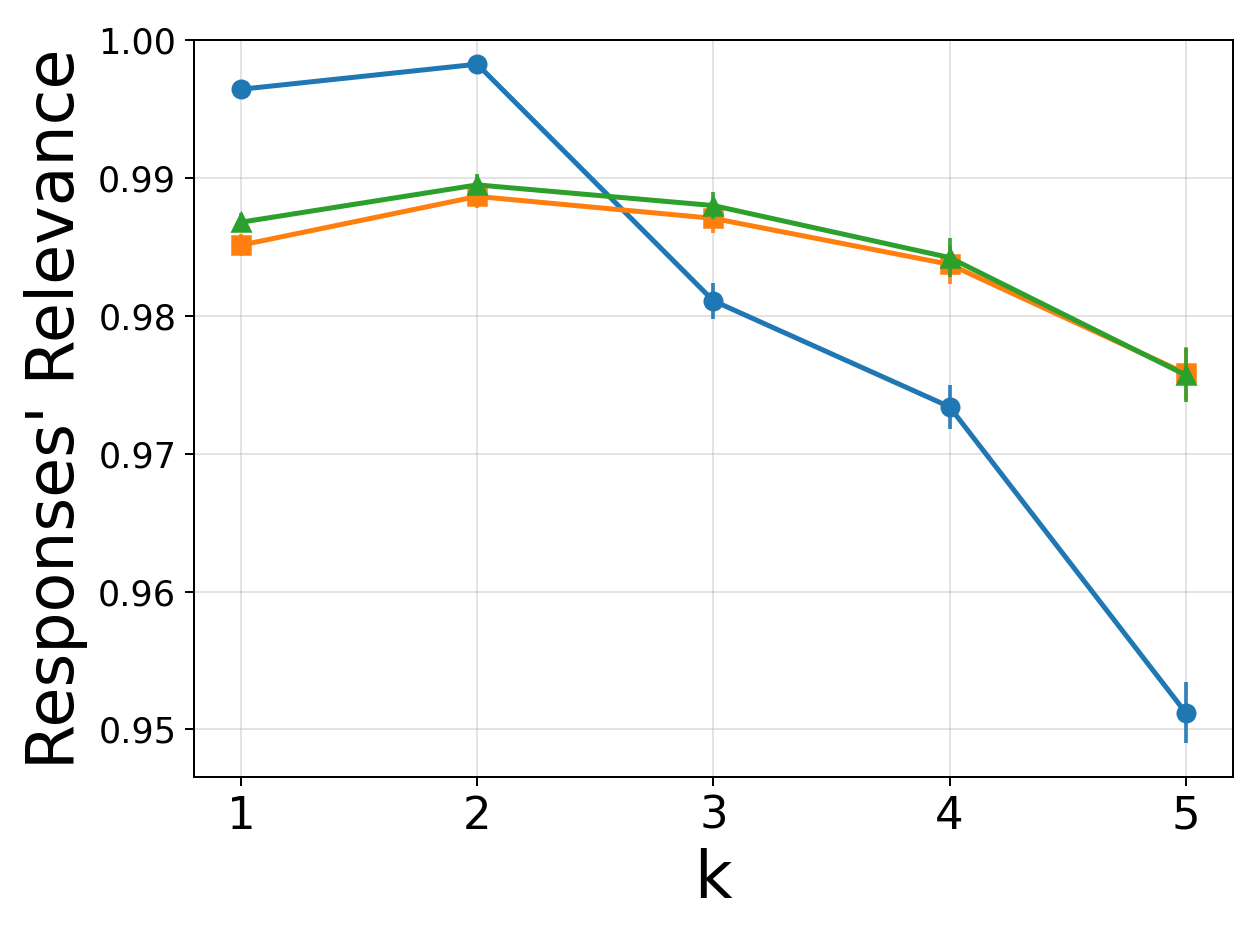}
\includegraphics[width=0.325\linewidth]{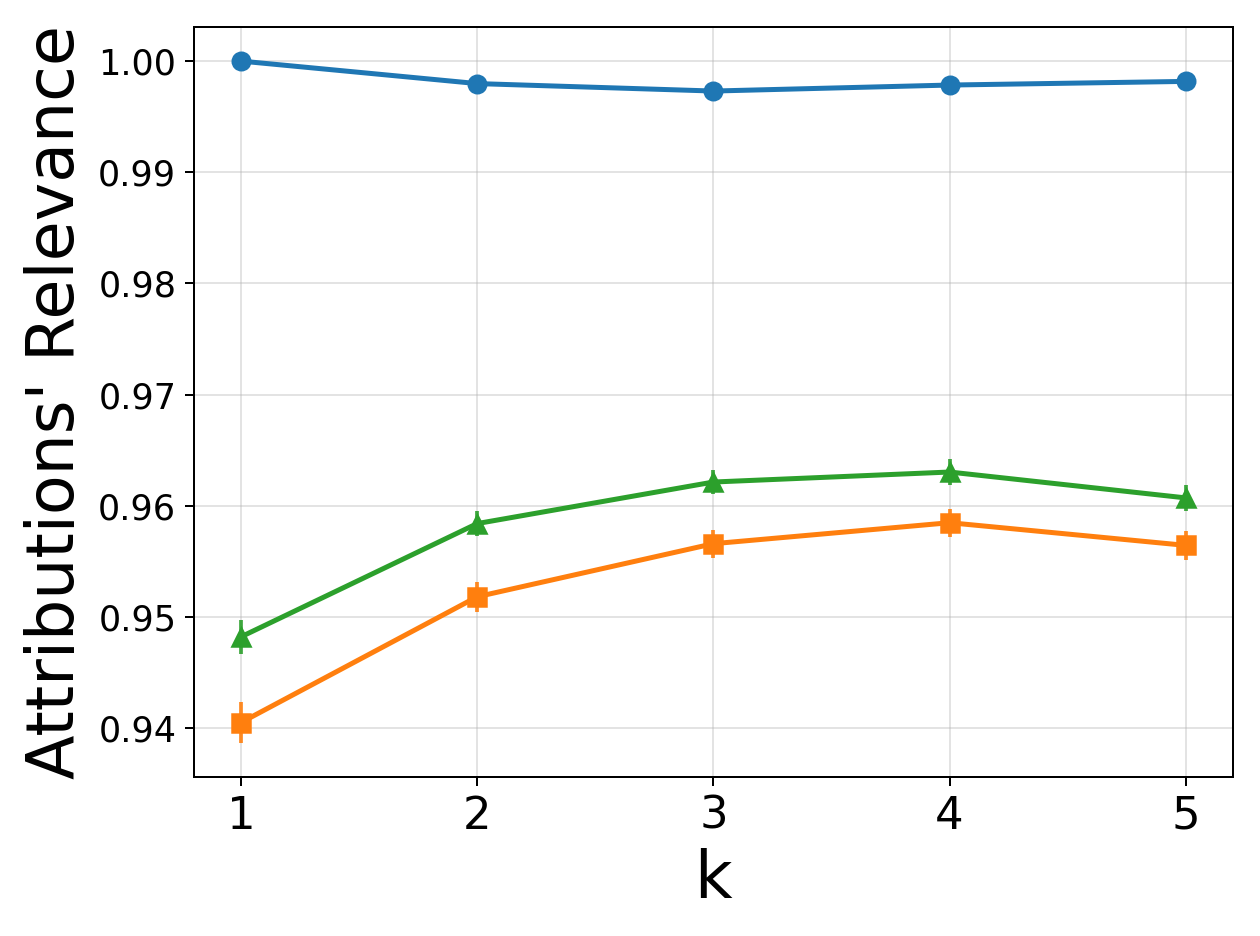}

\caption{$x^* \sim \mathcal{N}(\boldsymbol{0.5}_\text{\di{}}, 0.1 \cdot \mathcal{I})$, $\quad x^0_i \sim \mathcal{U}(\mathcal{V})$}\label{fig:normal_uniform}
\end{subfigure}
\hfill

\begin{subfigure}[t]{\columnwidth}
\centering

\includegraphics[width=0.325\linewidth]{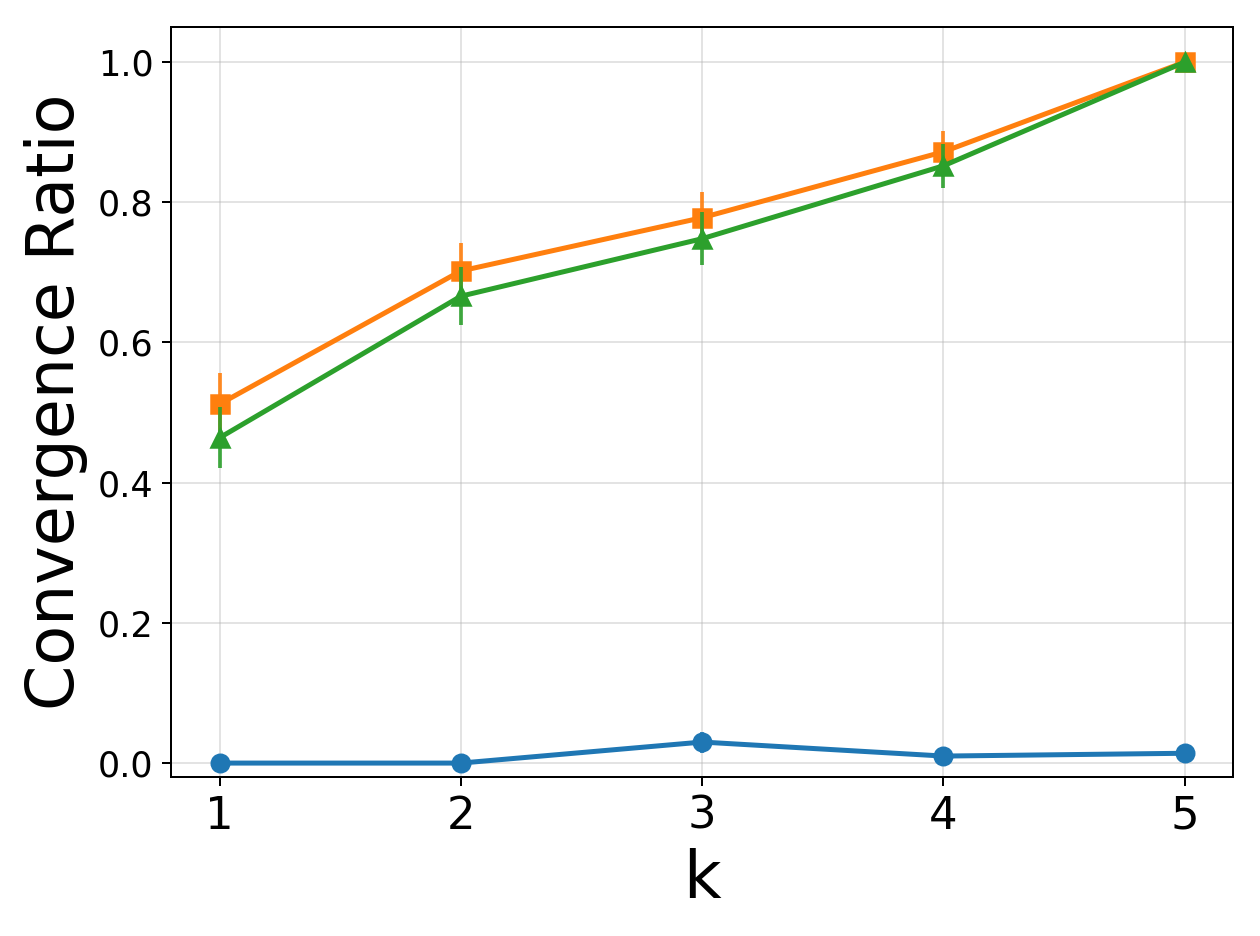}
\hspace{0.06\columnwidth}
\includegraphics[width=0.325\linewidth]{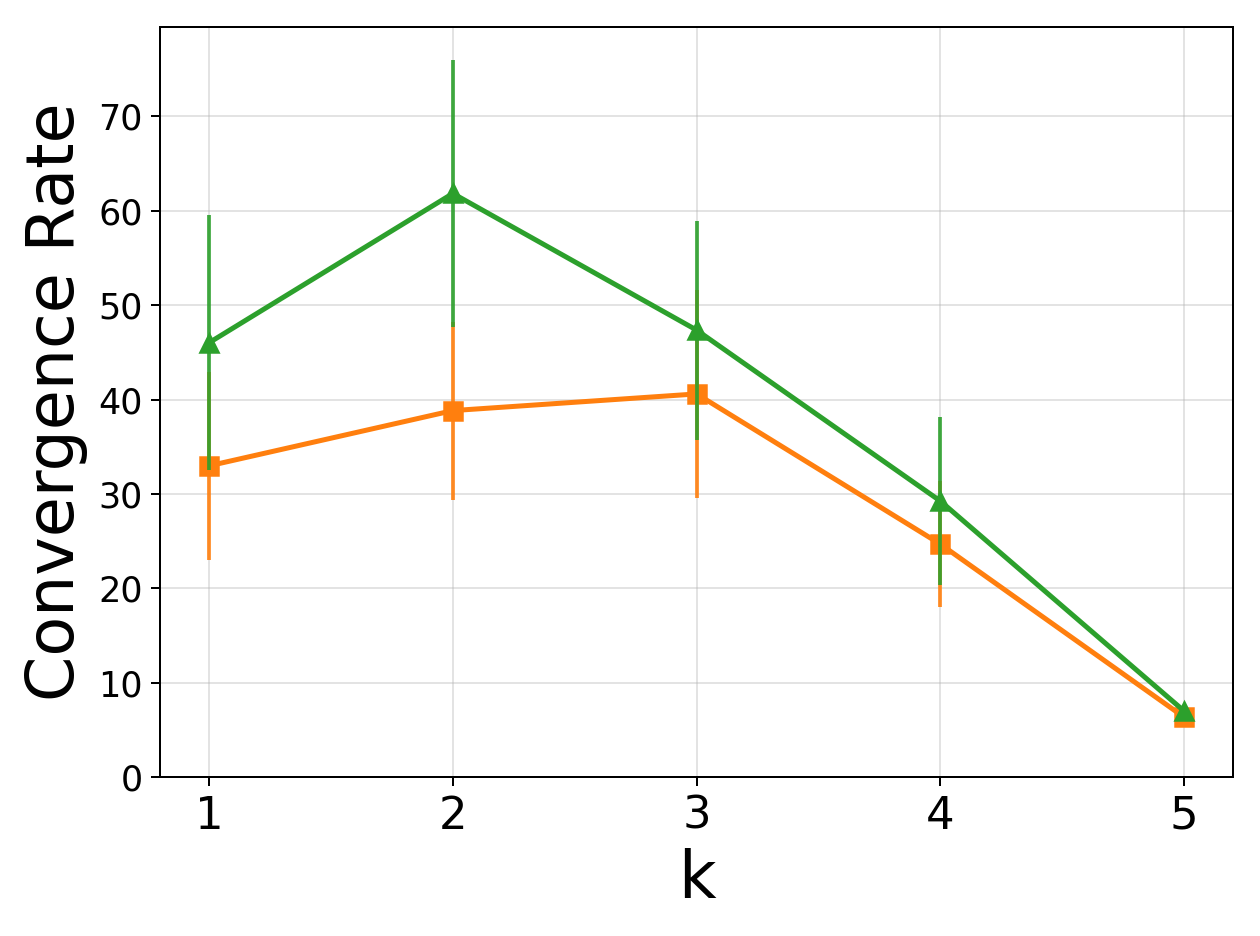}

\includegraphics[width=0.325\linewidth]{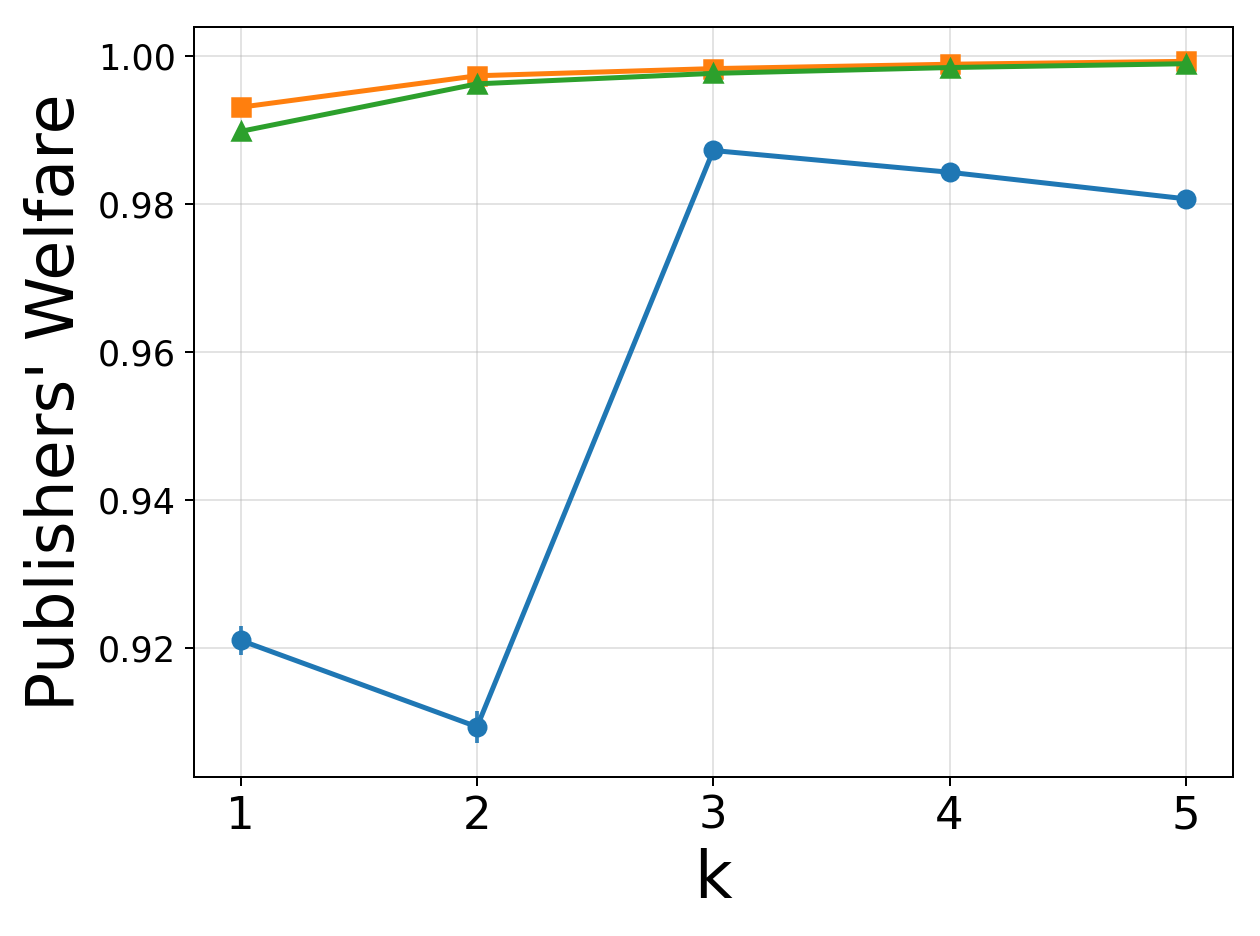}
\hfill
\includegraphics[width=0.325\linewidth]{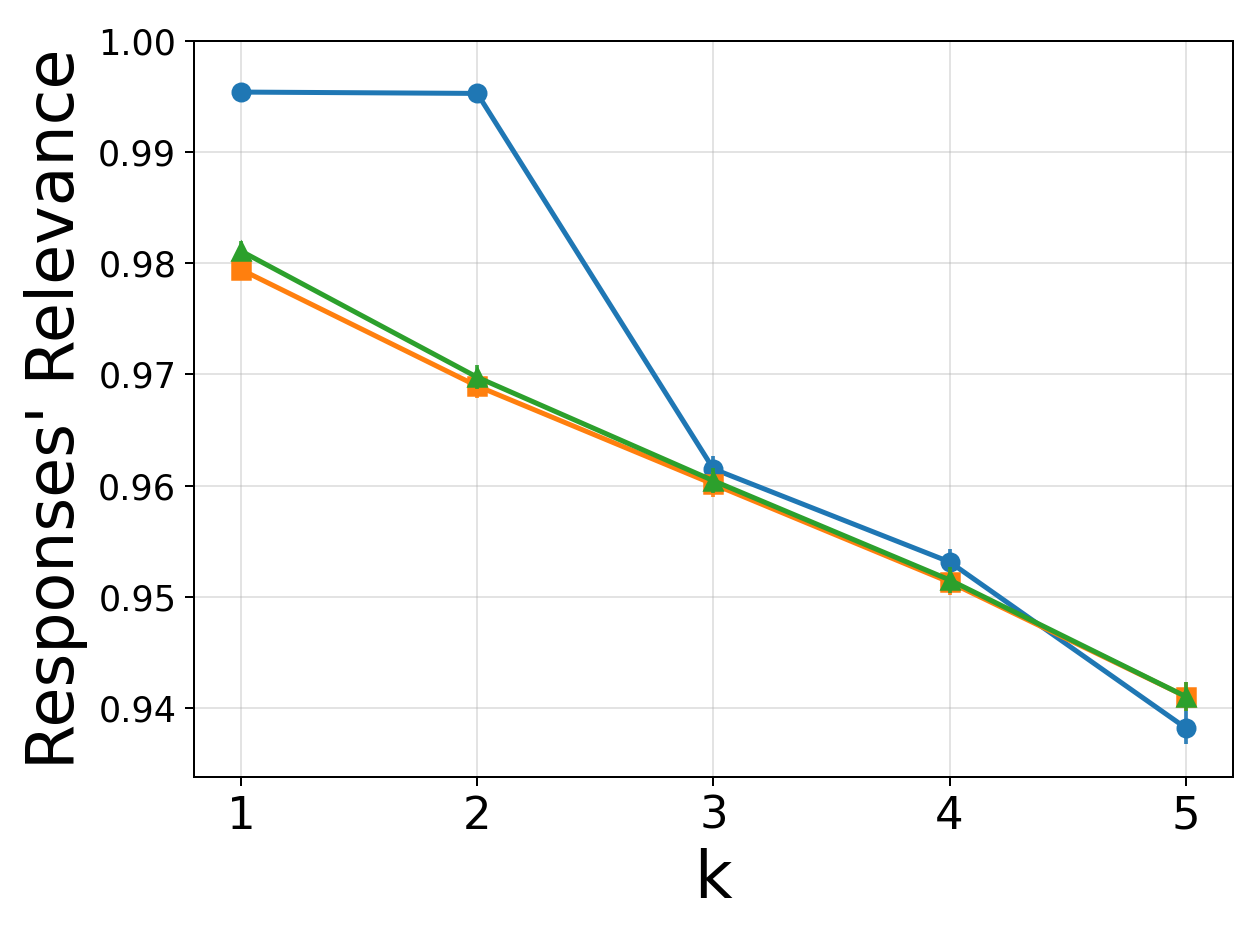}
\includegraphics[width=0.325\linewidth]{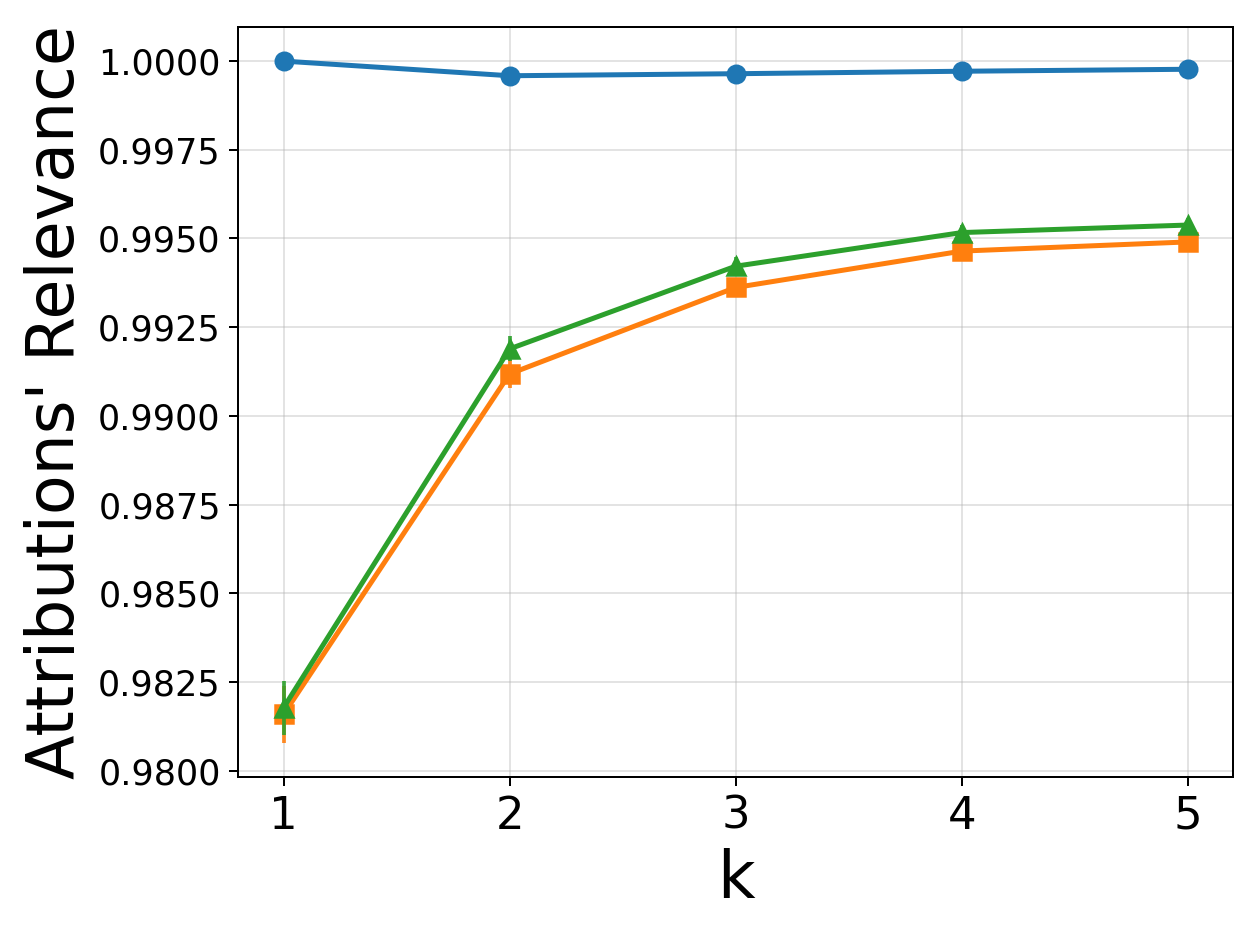}

\caption{$x^* \sim \mathcal{U}(\mathcal{V})$, $\quad x^0_i \sim \mathcal{N}(\boldsymbol{0.5}_\text{\di{}}, 0.1 \cdot \mathcal{I})$} \label{fig:uniform_normal}
\end{subfigure}
\hfill

\begin{subfigure}[t]{\columnwidth}
\centering

\includegraphics[width=0.325\linewidth]{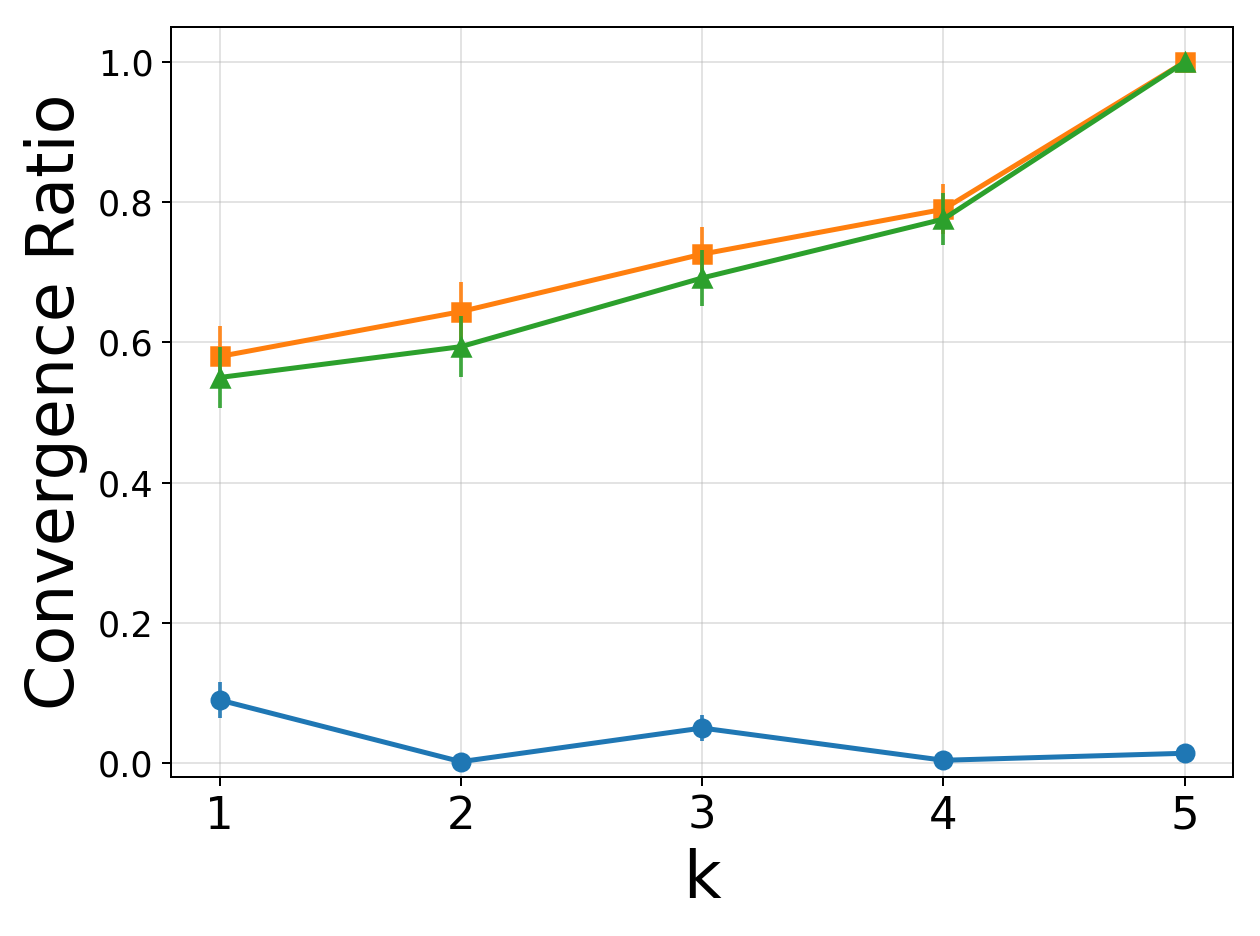}
\hspace{0.06\columnwidth}
\includegraphics[width=0.325\linewidth]{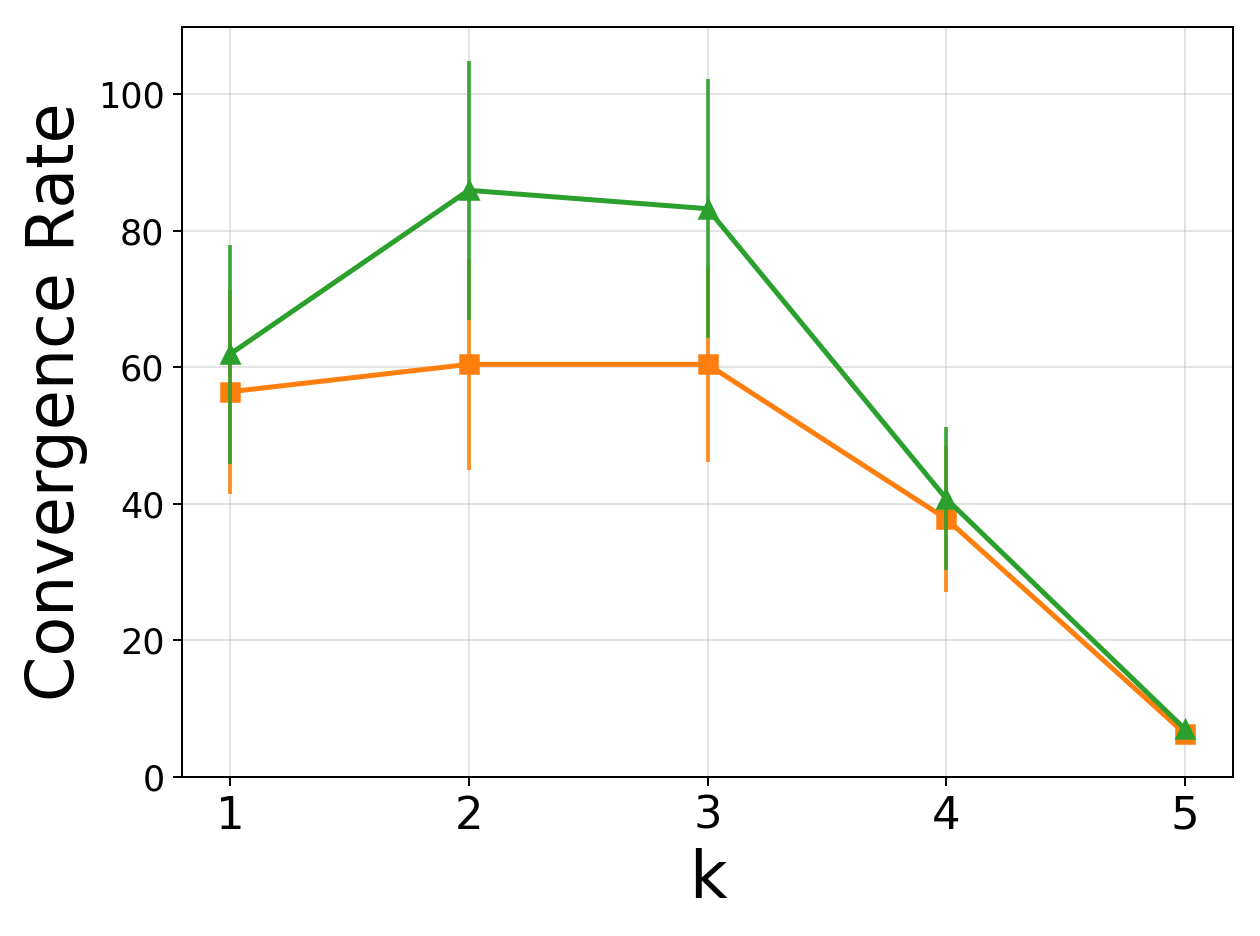}

\includegraphics[width=0.325\linewidth]{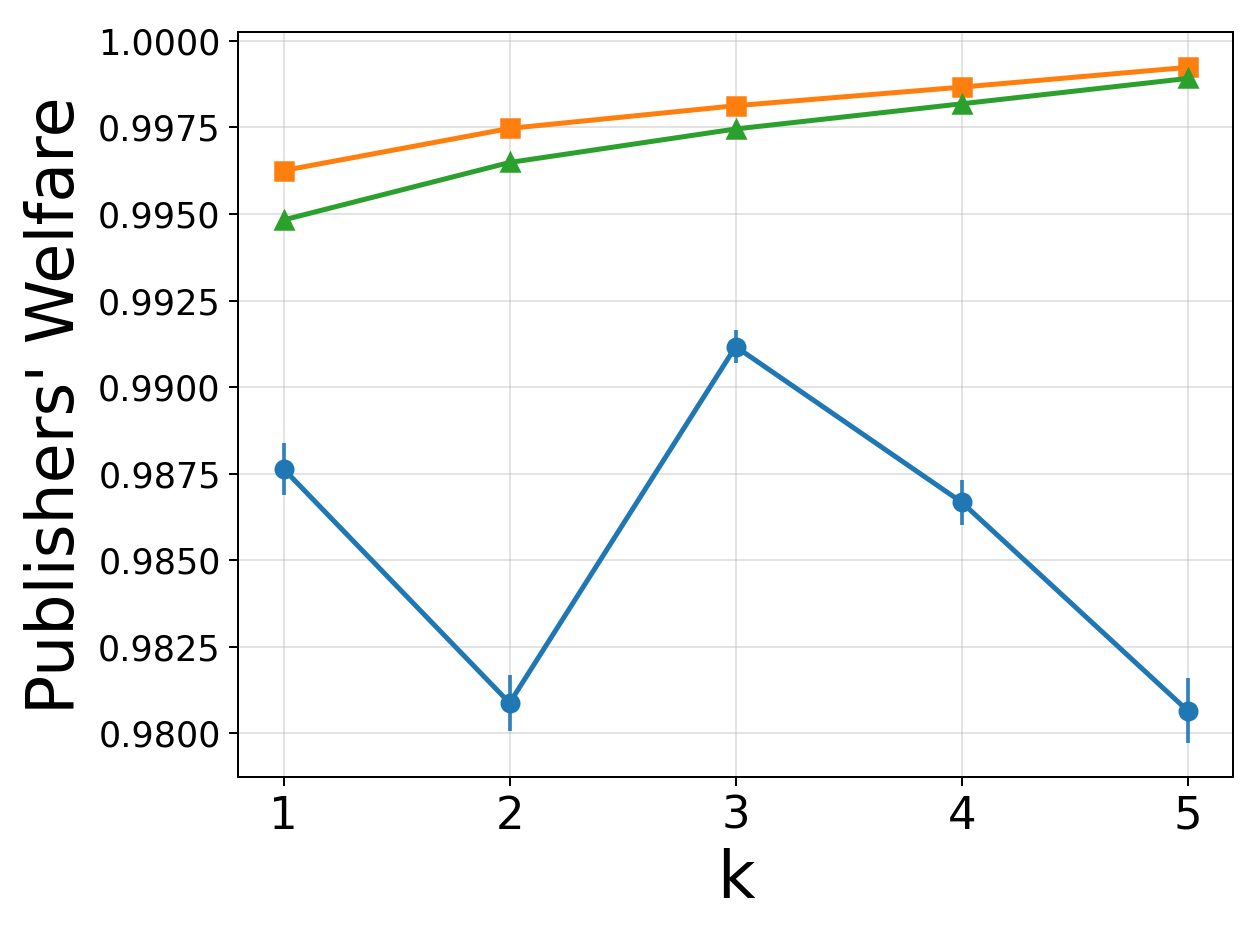}
\hfill
\includegraphics[width=0.325\linewidth]{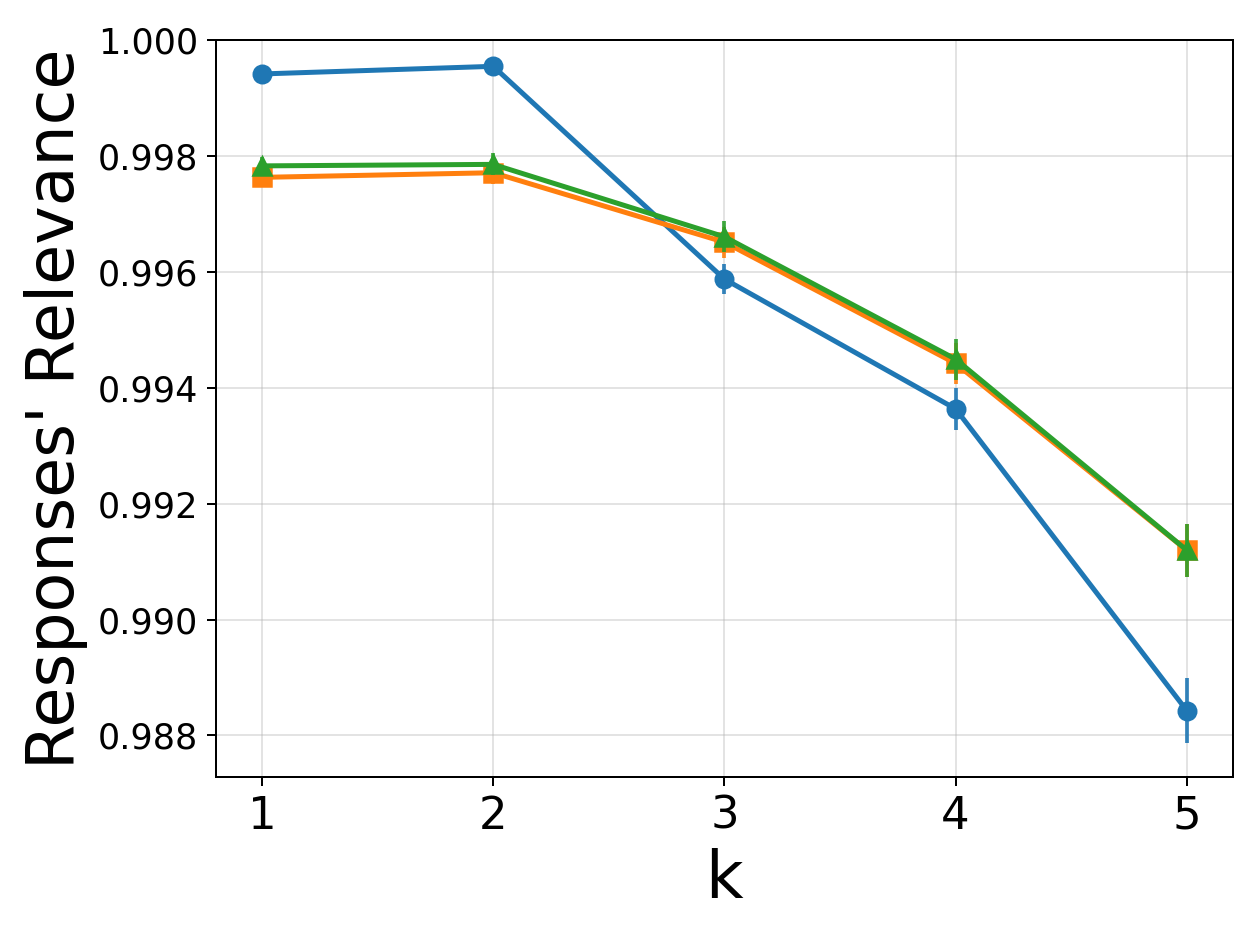}
\includegraphics[width=0.325\linewidth]{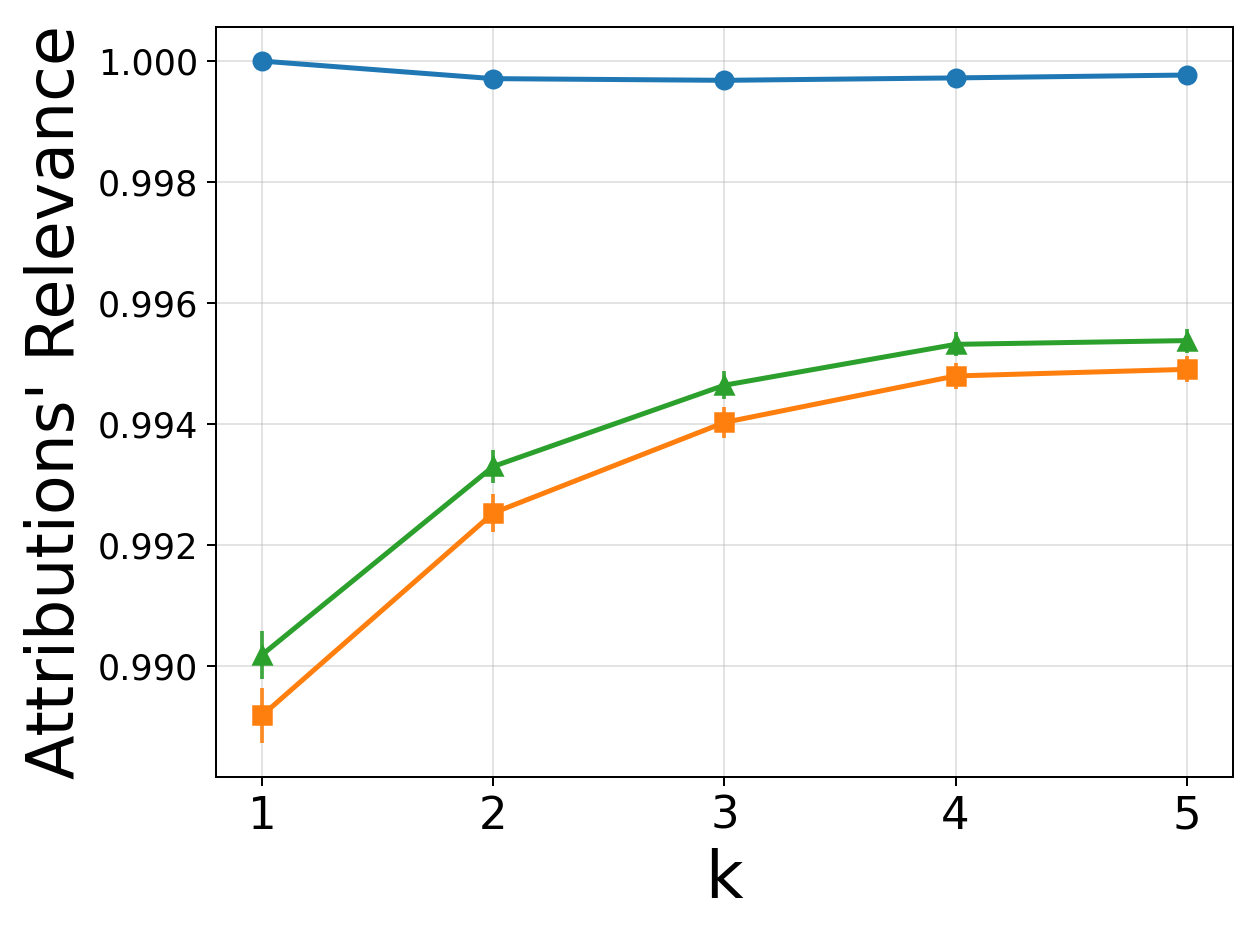}

\caption{$x^* \sim \mathcal{N}(\boldsymbol{0.5}_\text{\di{}}, 0.1 \cdot \mathcal{I})$, $\quad x^0_i \sim \mathcal{N}(\boldsymbol{0.5}_\text{\di{}}, 0.1 \cdot \mathcal{I})$}
\end{subfigure}

\caption{Stability and welfare analysis of the \PRP{}, softmax, and linear attribution functions for uniform and normal distribution combinations of the user's question $x^*$ and initial documents $x^0$, shown in blue circles, orange squares, and green triangles, respectively.
For each distribution configuration, the top charts report stability measures: convergence ratio and convergence rate. The bottom charts report welfare measures: publishers' welfare (PW), responses' relevance (RR), and attributions' relevance (AR).}
\label{fig:info_need_dist}
\end{figure}

\paragraph{Question and Initial Documents Distributions.}
We proceed to study the impact of the distribution of the user's question and the initial documents on the stability and welfare in the generative ecosystem.
Recall that in Section~\ref{sec:empirical_results} we draw both uniformly at random from $\mathcal{V}$. 
Here, we consider questions and initial documents that are drawn from combinations of a uniform distribution and a multivariate normal distribution.
For the multivariate normal distribution, we consider an isotropic Gaussian with a mean $\mu$ of $0.5$ and a covariance matrix $\Sigma$ of $0.1 \cdot \mathcal{I}_{\text{\di{}}}$.  

In Section~\ref{sec:empirical_results}, Figures~\ref{fig:stability_analysis} and \ref{fig:welfare_analysis} present the stability and welfare results using a uniform distribution over $\mathcal{V}$.
Figure~\ref{fig:info_need_dist} presents the results with the other combinations of uniform and normal distributions as stated above. 
The key observation is that the pattern of results as \K{} changes remains the same for all configurations.
For example, consider the convergence ratio, for which the \PRP{} function is constantly around $0$, while the softmax and linear functions increase with \K, meeting in a ratio of $1$ when \K{} $=n$.
Additionally, although we see different value ranges for the convergence rate, the patterns as \K{} varies, and the relative numbers between the softmax and linear functions remain the same. 

We can see in Figure~\ref{fig:stability_analysis} and Figure~\ref{fig:info_need_dist} that while the relative values of the publishers' welfare and the attributions' relevance induced by the attribution functions are maintained, some variation arises in terms of the responses' relevance. Specifically, in Figure~\ref{fig:uniform_normal} the \PRP{} function statistically significantly outperforms both the softmax and linear functions for \K{} $<n$, whereas in the other configurations this holds only for \K{} $\leq 2$.
However, we hasten to point out that the pattern of values of the attribution functions and the relative performance between them for most values of \K{} remain consistent.

\end{document}